\documentclass[]{interact}

\usepackage{epstopdf}% To incorporate .eps illustrations using PDFLaTeX, etc.
\usepackage[caption=false]{subfig}% Support for small, `sub' figures and tables

\usepackage[numbers,sort&compress]{natbib}% Citation support using natbib.sty
\usepackage{algorithm}
\usepackage{algorithmic}
\usepackage{multirow}
\usepackage{rotating}
\bibpunct[, ]{[}{]}{,}{n}{,}{,}% Citation support using natbib.sty
\renewcommand\bibfont{\fontsize{10}{12}\selectfont}% Bibliography support using natbib.sty

\theoremstyle{plain}% Theorem-like structures provided by amsthm.sty
\newtheorem{theorem}{Theorem}[section]
\newtheorem{lemma}[theorem]{Lemma}

\theoremstyle{definition}

\theoremstyle{remark}

\begin{document}

\title{Optimal Poisson subsampling for quantile regression with large-scale longitudinal data}

\author{
\name{Chunjing Li\textsuperscript{a}, Jiahui Zhang\textsuperscript{a} and Xiaohui Yuan\textsuperscript{a}$^*$\thanks{CONTACT Xiaohui Yuan. Email: yuanxh@ccut.edu.cn}}
\affil{\textsuperscript{a}School of Mathematics and Statistics, Changchun University of Technology, Changchun, People’s Republic of
China}
}

\maketitle

\begin{abstract}
To alleviate the computational challenges posed by large-scale longitudinal data, we propose an optimal Poisson subsampling method for quantile regression. The asymptotic properties of the subsampling estimator are established, and the optimal Poisson subsampling probabilities are derived under the A-optimality criterion. For practical implementation, a computationally feasible algorithm is developed for parameter estimation. Furthermore, regularized parameter estimation is incorporated into the optimal Poisson subsampling framework to identify important covariates. Numerical simulations demonstrate that the proposed optimal Poisson subsampling method outperforms the uniform Poisson subsampling method, and the regularized estimator performs satisfactorily. An empirical analysis of depression data further illustrates the practical performance of the proposed method.
\end{abstract}

\begin{keywords}
longitudinal data; big data; quantile regression; Poisson subsampling; regularized estimation
\end{keywords}

\section{Introduction}

Longitudinal data are widely encountered in biomedicine, economics, the social sciences, and many other fields. Such data combine cross-sectional and time-series characteristics, with repeated observations from the same subject typically exhibiting within-subject correlation. With rapid advances in data collection and storage technologies, the scale of longitudinal datasets has increased substantially. Representative examples include the China Health and Retirement Longitudinal Study (CHARLS) data \cite{Han26} and the Beijing air quality dataset \cite{Gao25,Fan24}. Consequently, the direct application of conventional statistical methods to large-scale longitudinal datasets can be computationally prohibitive or even infeasible. Therefore, the development of scalable methods for analyzing large-scale longitudinal data under limited computational resources has attracted increasing attention.

To address the computational challenges posed by massive datasets, various approaches have been developed, including divide-and-conquer methods \cite{Lin11,Ma22}, online updating methods for streaming data \cite{Schifano16,Luo23}, and subsampling methods \cite{Fithian14,Wang18,Li24,Shan24}. Among these approaches, subsampling methods aim to draw an informative subset from the full dataset and conduct parameter estimation based on the resulting subsample. Considerable progress has been made in the development of subsampling methods for massive data analysis. For example, Fithian and Hastie \cite{Fithian14} introduced Poisson subsampling and applied it to logistic regression. Wang \textit{et al.} \cite{Wang18} developed an optimal subsampling method for logistic regression and proposed a two-step algorithm based on the A-optimality criterion. Optimal subsampling methods were subsequently extended to softmax regression, quantile regression, and generalized linear models; see, e.g., \cite{Yao19,Ai21b,Ai21,Yue24}. More recently, considerable attention has been devoted to response-free subsampling methods, in which the subsampling probabilities depend only on the covariates rather than on the response variables, or at most on a very small fraction of the response variables. Such methods are particularly useful when the responses are unavailable or costly to obtain and also allow the subsampling probabilities to be determined without accessing the full response data; see, e.g., \cite{Shan25,Zhang21,Ma15}. Moreover, regularized estimation within the optimal subsampling framework has been investigated. Specifically, Li \textit{et al.} \cite{Li24b} proposed a Poisson subsampling method based on the least product relative error loss for distributed systems and constructed a distributed regularized LPRE estimator for high-dimensional variable selection.
 
The studies reviewed above have primarily focused on independent data. By comparison, relatively few studies have investigated subsampling methods for longitudinal data, and the existing work has focused mainly on mean regression. Specifically, Wang \textit{et al.} \cite{Wang23} developed an optimal subsampling method whose resulting estimator approximates the full-data maximum likelihood estimator. Han and Fu \cite{Han26} proposed an optimal subsampling method based on marginal models for large-scale longitudinal data. 

However, mean regression methods can be sensitive to outliers. As a more robust alternative, quantile regression provides a flexible framework for longitudinal data analysis by characterizing different quantiles of the conditional distribution of the response variable. Quantile regression for longitudinal data has been extensively studied over the past several decades. For example, Fu and Wang \cite{Fu12} proposed a linear quantile regression method for longitudinal data and employed an induced smoothing method to estimate the regression parameters and covariance matrix. Wang and Shan \cite{Wang21} used copula functions to characterize within-subject dependence in composite quantile regression for longitudinal data. Zu \textit{et al.} \cite{Zu23} developed a quantile-based penalized GEE approach and applied it to blood pressure analysis. Nevertheless, directly applying these methods to large-scale longitudinal datasets can impose a substantial computational burden. To address this issue, this paper develops an optimal Poisson subsampling method for quantile regression with large-scale longitudinal data. Regularized parameter estimation is further conducted based on the resulting optimal subsample to identify important covariates.

The main contributions of this paper are as follows. (i) An optimal Poisson subsampling method for quantile regression with large-scale longitudinal data is proposed, and the theoretical properties of the estimator are established. (ii) A computationally feasible algorithm is developed for practical implementation of the optimal subsampling method. (iii) Regularized parameter estimation is performed to identify important covariates within the optimal subsampling framework.

The remainder of this paper is organized as follows. Section 2 introduces the subsampling method for quantile regression with longitudinal data and establishes the asymptotic properties of the proposed estimator. Section 3 derives the optimal Poisson subsampling probabilities under the A-optimality criterion. Section 4 develops regularized parameter estimation within the optimal Poisson subsampling framework. Section 5 evaluates the effectiveness of the proposed method through simulation studies. Section 6 presents a real-data application to demonstrate the practical usefulness of the proposed method. Section 7 concludes the paper and discusses possible future research directions. All proofs are provided in the Supplementary Material.

\section{Poisson subsampling for quantile regression}

In this section, we first review the quantile regression model for longitudinal data, then introduce weighted smoothed estimating functions and establish the asymptotic properties of the resulting estimator.

\subsection{Model and estimating functions}

For $i=1,\cdots,n$, let $\boldsymbol{Y}_i=(y_{i1},\cdots,y_{im_i})^\top$ and $\boldsymbol{X}_i=(\boldsymbol{x}_{i1},\cdots,\boldsymbol{x}_{im_i})^\top$
denote the response vector and covariate matrix, respectively, where $y_{ij}$ is the response at the $j$th measurement occasion and $\boldsymbol{x}_{ij}\in\mathbb{R}^p$ is the corresponding covariate vector for $j=1,\cdots,m_i$. Assume that $m_i=m$ for all $i=1,\cdots,n$. Consider the following conditional quantile model for the response $y_{ij}$:
\[
Q_\tau(y_{ij} \mid \boldsymbol{x}_{ij}) = \boldsymbol{x}_{ij}^\top \boldsymbol{\beta}_0,
\]
where $\tau\in(0,1)$ is a given quantile level and $\boldsymbol{\beta}_0$ denotes the true regression parameter vector. Throughout this paper, the covariates $\boldsymbol{x}_{ij}$ are treated as non-random. Observations from the same subject are allowed to be correlated, whereas observations from different subjects are assumed to be mutually independent. Let $\varepsilon_{ij}=y_{ij}-\boldsymbol{x}_{ij}^\top\boldsymbol{\beta}_0$ denote an error term with a continuous distribution satisfying $P(\varepsilon_{ij}\leq0)=\tau$ and having an unspecified density function $f_{ij}(\cdot)$. As in Zu \textit{et al.} \cite{Zu23} and Fu \textit{et al.} \cite{Fu15}, we consider the following estimating functions
\begin{equation*}
   \boldsymbol{S}_n(\boldsymbol{\beta}) = \frac{1}{n}\sum_{i=1}^{n} \boldsymbol{X}_i^\top \boldsymbol{\Gamma}_i \boldsymbol{R}^{-1} \big[\tau - I\left(\boldsymbol{Y}_i \leq \boldsymbol{X}_i\boldsymbol{\beta}\right) \big].
\end{equation*}
Here, $\boldsymbol{\Gamma}_i=\operatorname{diag}\bigl(f_{i1}(0),\cdots,f_{im}(0)\bigr)$, and $\tau - I\left(\boldsymbol{Y}_{i} \leq \boldsymbol{X}_{i} \boldsymbol{\beta}\right) = \big( \tau - I\left(y_{i1} \leq \boldsymbol{x}_{i1}^\top \boldsymbol{\beta}\right), \cdots, \tau - I\left(y_{im} \leq \boldsymbol{x}_{im}^\top \boldsymbol{\beta}\right) \big)^\top$. The matrix $\boldsymbol{R}$ is an $m\times m$ working correlation matrix with a prespecified structure. Common choices include first-order autoregressive (AR(1)), exchangeable (EX), and first-order moving average (MA(1)) structures.

To tackle the non-differentiability issue of the quantile check function, the induced smoothing method of Brown and Wang \cite{Brown05} is adopted, and thus the corresponding smoothed estimating functions are given by
\begin{equation}
\label{eq:Sn_smoothed}
\boldsymbol{S}_n^\Phi(\boldsymbol{\beta}) = \frac{1}{n}\sum_{i=1}^{n}\boldsymbol{X}_i^\top \boldsymbol{\Gamma}_i \boldsymbol{R}^{-1} \bigg[ \Phi\left( \frac{\boldsymbol{Y}_i - \boldsymbol{X}_i \boldsymbol{\beta}}{h} \right) - (1 - \tau) \bigg],
\end{equation}
where $\Phi(\cdot)$ denotes the standard normal cumulative distribution function, and $h>0$ is a bandwidth that shrinks to zero as the sample size increases. For simplicity, $h$ is not treated as an additional tuning parameter, consistent with the treatment in Zu \textit{et al.} \cite{Zu23}. The estimator $\hat{\boldsymbol{\beta}}$ based on the full dataset $\mathcal{F}_n=\{(\boldsymbol{X}_i,\boldsymbol{Y}_i)\}_{i=1}^n$ is obtained by solving $\boldsymbol{S}_n^\Phi(\boldsymbol{\beta})=\boldsymbol{0}$. When $n$ is sufficiently large, directly solving the estimating equations can be computationally burdensome. To reduce the computational cost, we introduce a Poisson subsampling method in the next subsection.

\subsection{Weighted smoothed estimating functions}

Following \cite{Ai21b,Shan25,Zhang21}, let $\pi_i = \pi_i(\{\boldsymbol{X}_i\}_{i=1}^n)$ denote the inclusion probability of the $i$-th subject. Such a subsampling strategy is particularly useful when response measurements are difficult or costly to obtain. Under Poisson subsampling, a subsample with expected size $r$ is drawn from the $n$ subjects. Let $r^*$ denote the realized subsample size. Then $\operatorname{E}(r^*)=\sum_{i=1}^{n}\pi_i=r$. By incorporating inverse probability weights into (\ref{eq:Sn_smoothed}), the weighted smoothed estimating functions are defined as
\begin{equation}
\label{eq:Sr_smoothed}
\boldsymbol{S}_r^\Phi(\boldsymbol{\beta}) = \frac{1}{n}\sum_{i=1}^{n} \frac{\delta_i}{\pi_i} \boldsymbol{X}_i^\top \boldsymbol{\Gamma}_i \boldsymbol{R}^{-1} \bigg[ \Phi\left( \frac{\boldsymbol{Y}_i - \boldsymbol{X}_i \boldsymbol{\beta}}{h} \right) - (1 - \tau) \bigg],
\end{equation}
where $h$ is set to $r^{-1/2}$ and $\delta_i\in\{0,1\}$ denotes the subsampling indicator variable. If $\delta_i=1$, the $i$th subject is included in the subsample, whereas $\delta_i=0$ indicates that the corresponding subject is excluded.

Let $\tilde{\boldsymbol{\beta}}$ denote the estimator obtained by solving the weighted smoothed estimating equations $\boldsymbol{S}_r^\Phi(\boldsymbol{\beta})=\boldsymbol{0}$. To establish the asymptotic properties of $\tilde{\boldsymbol{\beta}}$, we impose the following regularity conditions.
\begin{itemize}
    \item[(C1)] The true regression coefficient $\boldsymbol{\beta}_0$ lies in a compact set $\Lambda = \{\boldsymbol{\beta} \in \mathbb{R}^p: \|\boldsymbol{\beta}\| \leq B\}$ for some large constant $B$.

    \item[(C2)] For each $i=1,\cdots,n$ and $j=1,\cdots,m$, the density function $f_{ij}(\cdot)$ is continuous and uniformly bounded, and its derivative $f_{ij}'(\cdot)$ exists and is also uniformly bounded.
        
    \item[(C3)] The eigenvalues of the working correlation matrix $\boldsymbol{R}$ are uniformly bounded away from $0$ and $\infty$.
    
   \item[(C4)] As the sample size $n \to \infty$, the following two convergence results hold: (i) The matrix $n^{-1}\sum_{i=1}^n \boldsymbol{X}_i^\top \boldsymbol{\Gamma}_i \boldsymbol{R}^{-1} \boldsymbol{\Gamma}_i \boldsymbol{X}_i$ converges to a positive definite matrix; (ii) The matrix $n^{-1}\sum_{i=1}^{n}\boldsymbol{X}_i^\top \boldsymbol{\Gamma}_i \boldsymbol{R}^{-1}\boldsymbol{V}_0\boldsymbol{R}^{-1}\boldsymbol{\Gamma}_i\boldsymbol{X}_i$ also converges to a positive definite matrix, where $\boldsymbol{V}_0=\operatorname{E}\{[\tau-I(\boldsymbol{\varepsilon}_i\leq0)][\tau-I(\boldsymbol{\varepsilon}_i\leq0)]^T\}$.

    \item[(C5)] $\max\limits_{1\leq i\leq n} \|\boldsymbol{X}_i\|= O(1)$.
    
    \item[(C6)] $\max\limits_{1\leq i\leq n} \frac{r}{n\pi_i} = O(1)$.
\end{itemize}

Condition (C1) is a regularity condition for establishing consistency, as exemplified by Newey and McFadden \cite{Newey94}. Condition (C2) is also used in the studies of \cite{Fu12}. Condition (C3) provides the required assumption for the working correlation matrix, and this condition is formulated following Zu \textit{et al.} \cite{Zu23}. Condition (C4)(i) ensures the invertibility of the matrix $n^{-1}\sum_{i=1}^n \boldsymbol{X}_i^\top \boldsymbol{\Gamma}_i \boldsymbol{R}^{-1} \boldsymbol{\Gamma}_i \boldsymbol{X}_i$, and Condition (C4)(ii) ensures the applicability of the Lindeberg–Feller central limit theorem. Condition (C5) is also adopted in \cite{Fu12}. Condition (C6) mainly restricts the inverse probability weights, similar to the condition adopted in \cite{Ai21b}.

%%%%%%%%%%%%%%%%%%%%%%%%   Theorem1   %%%%%%%%%%%%%%%%%%%%%%%%
\begin{theorem}\label{theorem:consistency_smoothed}
Under conditions (C1)--(C6), if $n \to \infty$ and $r \to \infty$, then for any $\epsilon > 0$, there exists a finite constant $\Delta_{\epsilon}$ and an integer $r_{\epsilon}$ such that
\[
P\left(\left\|\tilde{\boldsymbol{\beta}} - \boldsymbol{\beta}_0\right\| \ge r^{-1/2}\Delta_{\epsilon}\right) < \epsilon
\]
for all $r > r_{\epsilon}$.
\end{theorem}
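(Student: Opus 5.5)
The plan is to prove the $\sqrt{r}$-consistency of $\tilde{\boldsymbol{\beta}}$ by analysing the smoothed, weighted estimating function near $\boldsymbol{\beta}_0$. I would use a first-order expansion of $\boldsymbol{S}_r^\Phi$ together with a Brouwer/degree-type argument (equivalently, the standard lemma for roots of estimating equations). Write $\boldsymbol{\beta}=\boldsymbol{\beta}_0+r^{-1/2}\boldsymbol{u}$ with $\|\boldsymbol{u}\|=\Delta$. It then suffices to show that, with probability at least $1-\epsilon$ for $r$ large,
\[
\boldsymbol{u}^\top \boldsymbol{S}_r^\Phi(\boldsymbol{\beta}_0+r^{-1/2}\boldsymbol{u})<0 \quad \text{for all } \|\boldsymbol{u}\|=\Delta_\epsilon .
\]
This sign condition forces a root inside the ball $\{\|\boldsymbol{\beta}-\boldsymbol{\beta}_0\|\le r^{-1/2}\Delta_\epsilon\}$, because $\boldsymbol{S}_r^\Phi$ is continuous and the Jacobian is negative definite.

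\textbf{Step 1 (score at the truth).} I would decompose
\[
\boldsymbol{S}_r^\Phi(\boldsymbol{\beta}_0)=\bigl[\boldsymbol{S}_r^\Phi(\boldsymbol{\beta}_0)-\boldsymbol{S}_n^\Phi(\boldsymbol{\beta}_0)\bigr]+\bigl[\boldsymbol{S}_n^\Phi(\boldsymbol{\beta}_0)-\operatorname{E}\boldsymbol{S}_n^\Phi(\boldsymbol{\beta}_0)\bigr]+\operatorname{E}\boldsymbol{S}_n^\Phi(\boldsymbol{\beta}_0).
\]
\begin{itemize}
\item For the first term, conditional on $\mathcal{F}_n$, the $\delta_i$ are independent Bernoulli$(\pi_i)$, so the conditional mean is zero. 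The conditional variance is $n^{-2}\sum_i (1-\pi_i)\pi_i^{-1}\boldsymbol{a}_i\boldsymbol{a}_i^\top$, where $\boldsymbol{a}_i=\boldsymbol{X}_i^\top\boldsymbol{\Gamma}_i\boldsymbol{R}^{-1}[\Phi(\cdot)-(1-\tau)]$. Since $\|\boldsymbol{a}_i\|=O(1)$ by (C2), (C3) and (C5), and $1/(n\pi_i)=O(1/r)$ by (C6), this variance is $O(r^{-1})$. Hence the first term is $O_P(r^{-1/2})$ by Chebyshev.
\item The second term is an average of independent bounded vectors, so it is $O_P(n^{-1/2})$, which is $O_P(r^{-1/2})$ because $r\le n$.
\item For the bias term, I would substitute $t=(\varepsilon_{ij})/h$ and Taylor-expand $F_{ij}$ using (C2). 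This gives $\operatorname{E}\Phi(\varepsilon_{ij}/h)-(1-\tau)=O(h^2)=O(r^{-1})$, since the normal kernel is symmetric and the first-order term vanishes.
\end{itemize}
Thus $\|\boldsymbol{S}_r^\Phi(\boldsymbol{\beta}_0)\|=O_P(r^{-1/2})$.

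\textbf{Step 2 (Jacobian).} Let $\phi$ denote the standard normal density. The derivative is
\[
\boldsymbol{D}_r(\boldsymbol{\beta})=-\frac1n\sum_i\frac{\delta_i}{\pi_i}\boldsymbol{X}_i^\top\boldsymbol{\Gamma}_i\boldsymbol{R}^{-1}\boldsymbol{\Lambda}_i(\boldsymbol{\beta})\boldsymbol{X}_i ,
\]
where $\boldsymbol{\Lambda}_i=\operatorname{diag}\bigl(h^{-1}\phi((y_{ij}-\boldsymbol{x}_{ij}^\top\boldsymbol{\beta})/h)\bigr)$. I would show that, uniformly over $\|\boldsymbol{\beta}-\boldsymbol{\beta}_0\|\le r^{-1/2}\Delta$,
\[
\boldsymbol{D}_r(\boldsymbol{\beta})=-\boldsymbol{D}_0+o_P(1), \qquad \boldsymbol{D}_0=\lim n^{-1}\sum_i\boldsymbol{X}_i^\top\boldsymbol{\Gamma}_i\boldsymbol{R}^{-1}\boldsymbol{\Gamma}_i\boldsymbol{X}_i ,
\]
which is positive definite by (C4)(i). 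This takes three pieces.
\begin{itemize}
\item The expectation: $\operatorname{E}[h^{-1}\phi((\varepsilon_{ij}-\boldsymbol{x}_{ij}^\top\boldsymbol{d})/h)]=f_{ij}(0)+O(h+\|\boldsymbol{d}\|)$ by (C2) and (C5).
\item The sampling fluctuation: the conditional variance given $\mathcal{F}_n$ is bounded by $O\bigl((r h)^{-1}\bigr)\cdot\frac1n\sum_i \operatorname{E}h^{-1}\phi^2=O(1/(rh))$, using (C6). With $h=r^{-1/2}$ this is $O(r^{-1/2})\to 0$.
\item The full-data fluctuation: this is similarly $O(1/(nh))\to0$.
\end{itemize}
Pointwise convergence would be upgraded to uniform convergence over the shrinking ball by a finite grid plus a Lipschitz bound. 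On a ball of radius $r^{-1/2}\Delta$, the derivative of $h^{-1}\phi(\cdot/h)$ is $O(h^{-2})=O(r)$, so a grid of mesh $r^{-3/2}$ with polynomially many points suffices, combined with a union bound using exponential (Bernstein) inequalities for the bounded summands.

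\textbf{Step 3 (conclusion).} By the mean value theorem,
\[
\boldsymbol{u}^\top\boldsymbol{S}_r^\Phi(\boldsymbol{\beta}_0+r^{-1/2}\boldsymbol{u})=\boldsymbol{u}^\top\boldsymbol{S}_r^\Phi(\boldsymbol{\beta}_0)-r^{-1/2}\boldsymbol{u}^\top(\boldsymbol{D}_0+o_P(1))\boldsymbol{u}.
\]
Let $\lambda_{\min}$ be the smallest eigenvalue of $\boldsymbol{D}_0$. The right-hand side is bounded above by $\Delta\, O_P(r^{-1/2})-r^{-1/2}\lambda_{\min}\Delta^2/2$. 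Choose $\Delta_\epsilon$ so large that $P\bigl(\sqrt r\|\boldsymbol{S}_r^\Phi(\boldsymbol{\beta}_0)\|>\lambda_{\min}\Delta_\epsilon/2\bigr)<\epsilon/2$. Then the sign condition holds with probability greater than $1-\epsilon$ for $r>r_\epsilon$, and the theorem follows.

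The main obstacle is Step 2: the uniform control of the random Hessian on the shrinking neighbourhood. The variance there involves $1/(rh)$ from the inverse-probability weights and the kernel scaling, so the choice $h=r^{-1/2}$ is exactly what keeps it vanishing.
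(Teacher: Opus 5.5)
Your local argument is sound. Step 1 (including the $O(h^2)$ smoothing bias), the uniform Jacobian control on the ball of radius $r^{-1/2}\Delta$, and the sign condition obtained from the scalar mean value theorem along rays together show, via Brouwer, the following: with probability at least $1-\epsilon$, the equation $\boldsymbol{S}_r^\Phi(\boldsymbol{\beta})=\boldsymbol{0}$ has a root in $\{\|\boldsymbol{\beta}-\boldsymbol{\beta}_0\|\le r^{-1/2}\Delta_\epsilon\}$, unique within that ball. But that is an existence statement, of the same type as Theorem~\ref{theorem:penalty}(c). The theorem is about $\tilde{\boldsymbol{\beta}}$, \emph{the} solution of $\boldsymbol{S}_r^\Phi(\boldsymbol{\beta})=\boldsymbol{0}$, and nothing in your proposal rules out roots outside the ball. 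Your appeal to a negative definite Jacobian is justified only on the shrinking ball where Step 2 controls $\boldsymbol{D}_r$. Globally it fails in general: away from $\boldsymbol{\beta}_0$ the weights $h^{-1}\phi(\varepsilon_{ij}(\boldsymbol{\beta})/h)$ in $\boldsymbol{\Lambda}_i(\boldsymbol{\beta})$ are very unequal across $j$. The symmetric part of $\boldsymbol{\Gamma}_i\boldsymbol{R}^{-1}\boldsymbol{\Lambda}_i(\boldsymbol{\beta})$ can then be indefinite once $\boldsymbol{R}$ is not diagonal, so $\boldsymbol{D}_r(\boldsymbol{\beta})$ need not be negative definite there. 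A symptom of the gap is that you never use (C1).

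What is missing is the global consistency step with which the paper's proof opens. Show $\sup_{\boldsymbol{\beta}\in\Lambda}\|\boldsymbol{S}_r^\Phi(\boldsymbol{\beta})-\boldsymbol{M}_n(\boldsymbol{\beta})\|=o_p(1)$ over the compact set of (C1), where $\boldsymbol{M}_n(\boldsymbol{\beta})$ is the mean of the unsmoothed estimating function; this needs a grid plus Bernstein bound and an $O(h)$ smoothing bias. Then use that $\boldsymbol{\beta}_0$ is the unique, well-separated zero of $\boldsymbol{M}_n$ to get $\tilde{\boldsymbol{\beta}}-\boldsymbol{\beta}_0=o_p(1)$. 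You also need your Step 2 on a ball of radius $a_r\to0$ rather than $r^{-1/2}\Delta$, which is Lemma~\ref{lemma:Dr_H}; your grid argument carries over with mesh $h^3$ and $\log N_r=O(\log r)$. With both in hand, Brouwer is no longer needed. Apply your ray-wise mean value identity with $\boldsymbol{u}=\tilde{\boldsymbol{\beta}}-\boldsymbol{\beta}_0$: since the segment lies in the $a_r$-ball, with probability tending to one $(\lambda_{\min}/2)\|\boldsymbol{u}\|^2\le \boldsymbol{u}^\top\boldsymbol{S}_r^\Phi(\boldsymbol{\beta}_0)\le\|\boldsymbol{u}\|\,\|\boldsymbol{S}_r^\Phi(\boldsymbol{\beta}_0)\|$. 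Hence $\|\tilde{\boldsymbol{\beta}}-\boldsymbol{\beta}_0\|\le 2\lambda_{\min}^{-1}\|\boldsymbol{S}_r^\Phi(\boldsymbol{\beta}_0)\|=O_p(r^{-1/2})$ by your Step 1. This is essentially the paper's Taylor-inversion route. Your scalar formulation along a ray is actually a cleaner way to handle the mean value step for a vector-valued function than the single intermediate point $\boldsymbol{\beta}^*$ the paper uses.
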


%%%%%%%%%%%%%%%%%%%%%%%%   Theorem2   %%%%%%%%%%%%%%%%%%%%%%%%
\begin{theorem}\label{theorem:asymptotic_normality_smoothed}
    Under conditions (C1)--(C6), if $n \to \infty$ and $r \to \infty$, it follows that
\begin{equation*}
\boldsymbol{V}^{-1/2}(\tilde{\boldsymbol{\beta}}-\boldsymbol{\beta}_0)\xrightarrow{d} N(\boldsymbol{0}, \boldsymbol{I}),
\end{equation*}
where $\boldsymbol{V} = \boldsymbol{H}_n^{-1}\boldsymbol{V}_c\boldsymbol{H}_n^{-1}$, $\boldsymbol{H}_n=-n^{-1}\sum_{i=1}^n \boldsymbol{X}_i^\top \boldsymbol{\Gamma}_i \boldsymbol{R}^{-1} \boldsymbol{\Gamma}_i \boldsymbol{X}_i$, and
\[\boldsymbol{V}_c=\frac{1}{n}\sum_{i=1}^{n}\frac{1}{n\pi_i}\boldsymbol{X}_i^\top \boldsymbol{\Gamma}_i \boldsymbol{R}^{-1}\boldsymbol{V}_0\boldsymbol{R}^{-1}\boldsymbol{\Gamma}_i\boldsymbol{X}_i.\]
\end{theorem}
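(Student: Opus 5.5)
The proof is a standard Taylor-expansion argument. It uses Theorem \ref{theorem:consistency_smoothed} for $\sqrt{r}$-consistency and a Lindeberg--Feller central limit theorem for the leading term $\boldsymbol{S}_r^\Phi(\boldsymbol{\beta}_0)$.

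Since $\boldsymbol{S}_r^\Phi(\tilde{\boldsymbol{\beta}})=\boldsymbol{0}$, a first-order expansion around $\boldsymbol{\beta}_0$ gives
\[
\boldsymbol{0}=\boldsymbol{S}_r^\Phi(\boldsymbol{\beta}_0)+\dot{\boldsymbol{S}}_r^\Phi(\boldsymbol{\beta}^*)(\tilde{\boldsymbol{\beta}}-\boldsymbol{\beta}_0),
\]
where $\dot{\boldsymbol{S}}_r^\Phi(\boldsymbol{\beta})=-\frac{1}{nh}\sum_i\frac{\delta_i}{\pi_i}\boldsymbol{X}_i^\top\boldsymbol{\Gamma}_i\boldsymbol{R}^{-1}\operatorname{diag}\{\phi((y_{ij}-\boldsymbol{x}_{ij}^\top\boldsymbol{\beta})/h)\}\boldsymbol{X}_i$ and $\boldsymbol{\beta}^*$ lies between $\tilde{\boldsymbol{\beta}}$ and $\boldsymbol{\beta}_0$ (the expansion is taken row by row). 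The argument then has three steps.

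\textbf{Step 1: the Jacobian.} I show that $\sup_{\|\boldsymbol{\beta}-\boldsymbol{\beta}_0\|\le r^{-1/2}\Delta}\|\dot{\boldsymbol{S}}_r^\Phi(\boldsymbol{\beta})-\boldsymbol{H}_n\|=o_p(1)$.
\begin{itemize}
\item First take expectation over the $\delta_i$ given $\mathcal{F}_n$; this replaces $\delta_i/\pi_i$ by $1$.
\item The conditional variance of each entry is $O\bigl(\sum_i(n^2\pi_i h^2)^{-1}\bigr)$, using (C5). By (C6), $1/(n\pi_i)=O(1/r)$, so the bound is $O(1/(rh^2))=O(1)$ with $h=r^{-1/2}$.
\item That rate is not small enough, so I handle this step more carefully. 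The fix is to bound the variance by $\sum_i(n^2\pi_i)^{-1}\,\mathrm{E}[h^{-2}\phi^2(\cdot)]$ instead. After the change of variables $u=(\varepsilon_{ij}-\boldsymbol{x}_{ij}^\top(\boldsymbol{\beta}-\boldsymbol{\beta}_0))/h$, (C2) gives $\mathrm{E}[h^{-2}\phi^2]=O(h^{-1})$. The variance is then $O(1/(rh))=O(r^{-1/2})\to0$.
\item The full-data expectation is $-\frac1n\sum_i\boldsymbol{X}_i^\top\boldsymbol{\Gamma}_i\boldsymbol{R}^{-1}\operatorname{diag}\{\int\phi(u)f_{ij}(hu+\boldsymbol{x}_{ij}^\top(\boldsymbol{\beta}-\boldsymbol{\beta}_0))du\}\boldsymbol{X}_i$. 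By the boundedness of $f'_{ij}$ in (C2) together with (C5) and (C1), this equals $\boldsymbol{H}_n+O(h+\|\boldsymbol{\beta}-\boldsymbol{\beta}_0\|)$.
\item Uniformity over the shrinking ball follows from Lipschitz continuity in $\boldsymbol{\beta}$ with a chaining/grid argument, or simply from evaluating at $\boldsymbol{\beta}^*$ together with Theorem \ref{theorem:consistency_smoothed}.
\end{itemize}
By (C4)(i), $\boldsymbol{H}_n$ is invertible with inverse bounded.

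\textbf{Step 2: the score.} I decompose
\[
\boldsymbol{S}_r^\Phi(\boldsymbol{\beta}_0)=\frac1n\sum_i\frac{\delta_i}{\pi_i}\boldsymbol{X}_i^\top\boldsymbol{\Gamma}_i\boldsymbol{R}^{-1}\boldsymbol{\psi}_i+\boldsymbol{b}_r,
\]
with $\boldsymbol{\psi}_i=I(\boldsymbol{\varepsilon}_i\le0)-\tau$ up to sign, and $\boldsymbol{b}_r$ collecting the smoothing error $\Phi(\boldsymbol{\varepsilon}_i/h)-I(\boldsymbol{\varepsilon}_i\ge0)$.
\begin{itemize}
\item Bias: $\mathrm{E}[\Phi(\varepsilon/h)-I(\varepsilon\ge0)]=O(h^2)$ by symmetry of $\phi$ and (C2).
\item Variance: $\mathrm{E}[\Phi(\varepsilon/h)-I(\varepsilon\ge0)]^2=O(h)$.
\item Weighting by $\delta_i/\pi_i$, these give $\|\boldsymbol{b}_r\|=O_p(h^2+\sqrt{h/r})=o_p(r^{-1/2})$.
\end{itemize}
For the main term, the summands $\boldsymbol{\eta}_i=\frac{\delta_i}{n\pi_i}\boldsymbol{X}_i^\top\boldsymbol{\Gamma}_i\boldsymbol{R}^{-1}\boldsymbol{\psi}_i$ are independent across $i$ jointly in $(\delta_i,\boldsymbol{\varepsilon}_i)$, since $\pi_i$ depends only on the covariates. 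Each has mean zero. Its covariance is $\frac{1}{n^2\pi_i}\boldsymbol{X}_i^\top\boldsymbol{\Gamma}_i\boldsymbol{R}^{-1}\boldsymbol{V}_0\boldsymbol{R}^{-1}\boldsymbol{\Gamma}_i\boldsymbol{X}_i$, and these sum to exactly $\boldsymbol{V}_c$. To check Lindeberg's condition:
\begin{itemize}
\item $\|\boldsymbol{\eta}_i\|\le C/(n\pi_i)=O(1/r)$ by (C5), (C6), (C3) and boundedness of $\boldsymbol{\psi}_i$.
\item $\boldsymbol{V}_c\succeq c\,r^{-1}\boldsymbol{I}$, using $1/(n\pi_i)\ge$ its weighted lower bound and (C4)(ii). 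To justify the weighted version, I note $\sum_i\pi_i=r$, and Cauchy--Schwarz gives $\sum_i(n\pi_i)^{-1}a_i\ge$ order $r^{-1}$ times the unweighted average.
\item Hence $\|\boldsymbol{V}_c^{-1/2}\boldsymbol{\eta}_i\|=O(r^{-1/2})\to0$ uniformly, and the Lindeberg condition holds trivially.
\end{itemize}
Therefore $\boldsymbol{V}_c^{-1/2}\sum_i\boldsymbol{\eta}_i\to N(\boldsymbol{0},\boldsymbol{I})$ by Cramér--Wold.

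\textbf{Step 3: combining.} From the expansion,
\[
\tilde{\boldsymbol{\beta}}-\boldsymbol{\beta}_0=-\boldsymbol{H}_n^{-1}\boldsymbol{S}_r^\Phi(\boldsymbol{\beta}_0)+o_p(\|\tilde{\boldsymbol{\beta}}-\boldsymbol{\beta}_0\|).
\]
By Theorem \ref{theorem:consistency_smoothed}, the remainder is $o_p(r^{-1/2})$. Since $\boldsymbol{V}$ has eigenvalues of order $r^{-1}$ (Step 2 lower bound, plus (C6) for the upper bound), $\boldsymbol{V}^{-1/2}\cdot o_p(r^{-1/2})=o_p(1)$. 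Slutsky's theorem then yields the result.

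The main obstacle is Step 1. With $h=r^{-1/2}$, the naive variance bound for the smoothed Jacobian is only $O(1)$. The sharper $O(h^{-1})$ second-moment computation using the density is what makes it vanish, and uniformity in $\boldsymbol{\beta}$ also has to be secured.
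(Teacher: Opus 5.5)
Your proposal is correct and is essentially the paper's argument: a mean-value expansion with uniform Jacobian convergence (the paper's Lemma 3), negligible smoothing error (Lemma 2), Lindeberg--Feller for the unsmoothed weighted score, and Slutsky. Your Cauchy--Schwarz bound $\lambda_{\min}(\boldsymbol{V}_c)\geq c\,r^{-1}$ goes further than the paper, since it is what makes the normalized Lindeberg condition and $\boldsymbol{V}^{-1/2}\,o_p(r^{-1/2})=o_p(1)$ valid, whereas the paper only checks an unnormalized third-moment bound.

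Drop the shortcut of ``simply evaluating at $\boldsymbol{\beta}^*$'': $\boldsymbol{\beta}^*$ is random, so pointwise convergence does not apply and uniformity must come from the grid argument. Because the Jacobian's Lipschitz constant is $O(h^{-2})$, that grid has polynomially many points. You therefore need Bernstein's inequality, using that the centered summands are bounded by $C/(rh)$ via (C6), as in the paper's Lemma 3, rather than Chebyshev plus a union bound.
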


\section{Optimal Poisson subsampling}

In this section, the optimal Poisson subsampling probabilities are derived, and a practical algorithm is proposed for implementation.

\subsection{Optimal Poisson subsampling strategy}

The asymptotic normality result in Theorem \ref{theorem:asymptotic_normality_smoothed} can be used to derive the optimal Poisson subsampling probabilities by minimizing the asymptotic mean squared error (AMSE) of the subsample estimator. This is equivalent to minimizing $\operatorname{tr}(\boldsymbol{V})$, which corresponds to the A-optimality criterion. We present the following theorem.

%%%%%%%%%%%%%%%%%%%%%%%%   Theorem3   %%%%%%%%%%%%%%%%%%%%%%%%
\begin{theorem}\label{theorem:opt}
    For $i = 1,2,\cdots,n$, define $z_i = \left\| \boldsymbol{H}_n^{-1} \boldsymbol{X}_i^\top \boldsymbol{\Gamma}_i \boldsymbol{R}^{-1}\boldsymbol{V}_0^{1/2}\right\|_F$, where $\|\cdot\|_F$ denotes the Frobenius norm, and let $z_{(1)} \leq z_{(2)} \leq \cdots \leq z_{(n)}$ denote the order statistics of $\{z_i\}_{i=1}^n$. If the subsampling probabilities are given by
    \begin{equation}\label{eq:opt_pro}
       \pi_i = r \frac{z_i \wedge T}{\sum_{j=1}^n (z_j \wedge T)},
    \end{equation}
    the value of $\mathrm{tr}\left( \boldsymbol{V}\right)$ is minimized, where
    \[ T = \sum_{i=1}^{n-u} z_{(i)} / (r - u), \]
    and
    \[ u = \min \left\{ v \mid 0 \leq v \leq r, z_{(n-v)} < \sum_{i=1}^{n-v} z_{(i)} / (r - v) \right\}. \]
\end{theorem}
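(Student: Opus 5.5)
We reduce $\mathrm{tr}(\boldsymbol{V})$ to a weighted sum of $1/\pi_i$ and then minimize it under the constraints $\sum_i \pi_i = r$ and $0<\pi_i\le 1$.

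\textbf{Reduction.} By Theorem \ref{theorem:asymptotic_normality_smoothed}, $\boldsymbol{V}=\boldsymbol{H}_n^{-1}\boldsymbol{V}_c\boldsymbol{H}_n^{-1}$. Linearity and cyclicity of the trace give
\[
\mathrm{tr}(\boldsymbol{V})=\frac{1}{n^2}\sum_{i=1}^n\frac{1}{\pi_i}\,\mathrm{tr}\bigl(\boldsymbol{H}_n^{-1}\boldsymbol{X}_i^\top\boldsymbol{\Gamma}_i\boldsymbol{R}^{-1}\boldsymbol{V}_0\boldsymbol{R}^{-1}\boldsymbol{\Gamma}_i\boldsymbol{X}_i\boldsymbol{H}_n^{-1}\bigr).
\]
Write $\boldsymbol{A}_i=\boldsymbol{H}_n^{-1}\boldsymbol{X}_i^\top\boldsymbol{\Gamma}_i\boldsymbol{R}^{-1}\boldsymbol{V}_0^{1/2}$. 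Since $\boldsymbol{H}_n$, $\boldsymbol{R}$, $\boldsymbol{\Gamma}_i$ and $\boldsymbol{V}_0^{1/2}$ are symmetric, the $i$th summand's trace equals $\mathrm{tr}(\boldsymbol{A}_i\boldsymbol{A}_i^\top)=\|\boldsymbol{A}_i\|_F^2=z_i^2$. Hence $\mathrm{tr}(\boldsymbol{V})=n^{-2}\sum_i z_i^2/\pi_i$, and the problem is to minimize $\sum_i z_i^2/\pi_i$ subject to $\sum_i\pi_i=r$ and $0<\pi_i\le1$.

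\textbf{Unconstrained case.} Ignoring the upper bound, Cauchy--Schwarz gives
\[
\Bigl(\sum_i z_i\Bigr)^2=\Bigl(\sum_i \frac{z_i}{\sqrt{\pi_i}}\sqrt{\pi_i}\Bigr)^2\le \Bigl(\sum_i \frac{z_i^2}{\pi_i}\Bigr)\, r,
\]
with equality iff $\pi_i\propto z_i$, that is, $\pi_i=rz_i/\sum_j z_j$. This is the answer whenever $rz_{(n)}\le\sum_j z_j$, which corresponds to $u=0$ and $T\ge z_{(n)}$, so that $z_i\wedge T=z_i$.

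\textbf{Constrained case.} Otherwise the largest $z_i$'s would receive $\pi_i>1$, so we treat the problem as convex with box constraints and use the KKT conditions. The objective is convex in $\pi$ and the constraints are linear, so the KKT conditions are sufficient. Stationarity gives $-z_i^2/\pi_i^2+\lambda+\mu_i=0$ with $\mu_i\ge0$ and $\mu_i(\pi_i-1)=0$. Hence $\pi_i=z_i/\sqrt{\lambda}$ whenever $\pi_i<1$, and $\pi_i=1$ requires $z_i\ge\sqrt{\lambda}$. Setting $T=\sqrt{\lambda}\cdot(\text{normalizer})$, the candidate has the form $\pi_i\propto z_i\wedge T$: the $u$ largest units get the capped value $\pi_i=1$, and the rest share the remaining budget $r-u$ proportionally to $z_i$. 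This requires $(r-u)\,z_{(i)}/\sum_{k\le n-u}z_{(k)}$ for the uncapped units, which is exactly $\pi_i=z_{(i)}/T$ with $T=\sum_{i\le n-u}z_{(i)}/(r-u)$. One checks that $\sum_j(z_j\wedge T)=uT+\sum_{i\le n-u}z_{(i)}=rT$, so formula (\ref{eq:opt_pro}) gives exactly $\pi_i=1$ for the capped units and $z_i/T$ for the others.

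\textbf{Main obstacle.} The hardest step is verifying that the index $u$ defined in the theorem produces a consistent partition: every capped unit must satisfy $z_{(n-v)}\ge T$ for $v<u$, and every uncapped unit must satisfy $z_{(n-u)}<T$. Dual feasibility $\mu_i\ge0$ then follows. I would prove this by showing that $g(v)=\sum_{i\le n-v}z_{(i)}/(r-v)$ moves monotonically relative to $z_{(n-v)}$. Specifically, $z_{(n-v)}\ge g(v)$ is equivalent to $z_{(n-v)}\ge g(v+1)$, by the algebraic identity
\[
(r-v)g(v)=(r-v-1)g(v+1)+z_{(n-v)}.
\]
So along $v<u$ the capped $z$'s are all at least $T=g(u)$. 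The existence of $u\le r$ follows because at $v=r-1$ the condition reduces to $z_{(n-r+1)}<\sum_{i\le n-r+1}z_{(i)}$, which holds whenever at least two of the $z$'s are positive. With the partition established, the KKT conditions hold with $\lambda=T^{-2}$ (up to the normalization constant) and $\mu_i=z_i^2-T^2\ge0$ on the capped set. Convexity then gives global optimality, and $\mathrm{tr}(\boldsymbol{V})$ is minimized.
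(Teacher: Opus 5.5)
Your proof is correct, and it takes a genuinely different route from the paper's.

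\textbf{The paper's route.} The paper handles the capped case in three steps:
\begin{enumerate}
\item It \emph{fixes} $\pi_i=1$ for the $u$ largest $z_i$ and applies Cauchy--Schwarz to the remaining $n-u$ units with budget $r-u$.
\item It checks that $\pi_i=r(z_i\wedge T)/\sum_j(z_j\wedge T)$ attains that restricted bound.
\item It locates $T\in(z_{(n-u)},z_{(n-u+1)}]$ by a continuity argument in $T$.
\end{enumerate}
This route is elementary and produces the minimal value explicitly. However, its key reduction is asserted rather than proved. The Cauchy--Schwarz bound only covers the family of $\pi$ with those particular units capped. The paper's supporting claim, that exactly $u$ indices satisfy $rz_i>\sum_j z_j$, is false in general. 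For $n=6$, $r=3$, $z=(1,1,1,1,5,10)$, only one unit has $rz_i>\sum_j z_j$, yet $u=2$, $T=4$, and the optimum caps both $5$ and $10$.

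\textbf{What your route buys.} Your KKT certificate closes exactly this gap. Since $\sum_i z_i^2/\pi_i$ is convex and the constraints are affine, exhibiting $\lambda=T^2$ and $\mu_i=z_i^2-T^2\ge 0$ on the capped set proves global optimality over \emph{all} feasible $\pi$. Your identity $(r-v)g(v)=(r-v-1)g(v+1)+z_{(n-v)}$ gives $z_{(n-u)}<T\le z_{(n-u+1)}$ directly from the definition of $u$. This replaces the intermediate-value step, and the case $u=0$ becomes simply the solution with all $\mu_i=0$.

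\textbf{Small slips to fix.}
\begin{itemize}
\item With $\pi_i=z_i/T$ on the uncapped set, stationarity $-z_i^2/\pi_i^2+\lambda+\mu_i=0$ forces $\lambda=T^2$, not $T^{-2}$. Your $\mu_i=z_i^2-T^2$ already presupposes $\lambda=T^2$. No extra normalizer is needed, because $\sum_j(z_j\wedge T)=rT$.
\item The condition at $v=r-1$ is equivalent to $\sum_{i\le n-r}z_{(i)}>0$, i.e.\ $z_{(n-r)}>0$. This requires at least $r+1$ positive $z$'s, not merely two; it holds once all $z_i>0$ and $r<n$.
\item You should secure that positivity as the paper does: set $\pi_i=0$ for units with $z_i=0$ and discard them. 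Your domain is $\pi_i>0$, and stationarity cannot hold at such a unit when $\lambda>0$.
\item When $rz_{(n)}=\sum_j z_j$, the theorem's $u$ is positive rather than $0$, since its defining inequality is strict. In that case $T=z_{(n)}$, so the formula still returns $\pi_i=rz_i/\sum_j z_j$.
\end{itemize}
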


The computation of $T$ in (\ref{eq:opt_pro}) is only necessary when $r z_i / \sum_{j=1}^n z_j > 1$ for some $i$. In such cases, $u$ is defined as the number of indices for which $\pi_i = 1$. Conversely, if $r z_i / \sum_{j=1}^n z_j \leq 1$ for all $i$, then the optimal subsampling probabilities reduce to $\pi_i = r z_i / \sum_{j=1}^n z_j$, and no truncation is needed.

\subsection{Practically implementable algorithm}\label{sec:prac_imple}

Rewriting (\ref{eq:opt_pro}) yields
\[
\pi_i = r \frac{z_i \wedge T}{\sum_{j=1}^n (z_j \wedge T)} = r \frac{z_i \wedge T}{n\Psi},
\]
where $\Psi = n^{-1}\sum_{j=1}^n (z_j \wedge T), i = 1, \cdots, n$. Note that the optimal subsampling probabilities involve unknown parameters. Therefore, a practical algorithm is developed for implementation.

First, a pilot sample with expected size $r_{\mathcal{P}}$ is selected using uniform Poisson subsampling, and the index set of the selected subjects is denoted by $\mathcal{I}_{\mathcal{P}}$. Under a working independence correlation structure, a pilot estimator of $\boldsymbol{\beta}_0$, denoted by $\tilde{\boldsymbol{\beta}}_{\mathcal{P}}$, is obtained. Following Wang and Carey \cite{Wang03}, the working correlation matrix $\boldsymbol{R}$ is estimated via the method of moments, with the resulting estimate denoted by $\bar{\boldsymbol{R}}$. For computational simplicity, $\boldsymbol{\Gamma}_i$ is set to the identity matrix in practical implementation, as in Zu \textit{et al.} \cite{Zu23}. In addition, Yu \textit{et al.} \cite{Yu22} showed through simulation studies that setting $T=\infty$ has little effect when the subsampling rate $r/n$ is sufficiently small. For each $i\in\mathcal{I}_{\mathcal{P}}$, define $\tilde{\boldsymbol{\varepsilon}}_{i\mathcal{P}}=\boldsymbol{Y}_i-\boldsymbol{X}_i\tilde{\boldsymbol{\beta}}_{\mathcal{P}}$, $\boldsymbol{\psi}_{i\mathcal{P}}=\tau-I(\tilde{\boldsymbol{\varepsilon}}_{i\mathcal{P}}\leq\boldsymbol{0})$, and $\bar{\boldsymbol{\psi}}_{\mathcal{P}}=|\mathcal{I}_{\mathcal{P}}|^{-1}\sum_{i\in\mathcal{I}_{\mathcal{P}}}\boldsymbol{\psi}_{i\mathcal{P}}$, where $|\mathcal{I}_{\mathcal{P}}|$ denotes the size of $\mathcal{I}_{\mathcal{P}}$. The matrix $\boldsymbol{V}_0$ is estimated from the pilot sample by $\bar{\boldsymbol{V}}_0=|\mathcal{I}_{\mathcal{P}}|^{-1}\sum_{i\in\mathcal{I}_{\mathcal{P}}}(\boldsymbol{\psi}_{i\mathcal{P}}-\bar{\boldsymbol{\psi}}_{\mathcal{P}})(\boldsymbol{\psi}_{i\mathcal{P}}-\bar{\boldsymbol{\psi}}_{\mathcal{P}})^\top$, and $\bar{\boldsymbol{V}}_0^{1/2}$ is obtained through the eigenvalue decomposition of $\bar{\boldsymbol{V}}_0$. Moreover, $\boldsymbol{H}_n$ is estimated by $\hat{\boldsymbol{H}}_n=-|\mathcal{I}_{\mathcal{P}}|^{-1}\sum_{i\in\mathcal{I}_{\mathcal{P}}}\boldsymbol{X}_i^\top\boldsymbol{\Gamma}_i\bar{\boldsymbol{R}}^{-1}\boldsymbol{\Gamma}_i\boldsymbol{X}_i$. Accordingly, define $\hat{z}_i=\|\hat{\boldsymbol{H}}_n^{-1}\boldsymbol{X}_i^\top\boldsymbol{\Gamma}_i\bar{\boldsymbol{R}}^{-1}\bar{\boldsymbol{V}}_0^{1/2}\|_F$. To reduce the computational cost, the pilot-sample average $\hat{\Psi}=|\mathcal{I}_{\mathcal{P}}|^{-1}\sum_{i\in\mathcal{I}_{\mathcal{P}}}\hat{z}_i$ is used to approximate $\Psi$.

Furthermore, to prevent the optimal Poisson subsampling probabilities from being excessively small, the approach in Ma \textit{et al.} \cite{Ma15} is followed, with these probabilities mixed with uniform subsampling probabilities to achieve more robust results. Specifically, the mixed optimal Poisson subsampling probabilities are expressed as
\begin{equation}
\tilde{\pi}_i = (1 - \rho) \frac{r \left\| \hat{\boldsymbol{H}}^{-1}_n \boldsymbol{X}_i^\top \boldsymbol{\Gamma}_i \bar{\boldsymbol{R}}^{-1}\bar{\boldsymbol{V}}_0^{1/2}\right\|_F}{n \hat{\Psi}} + \rho \frac{r}{n}, \quad i = 1, \cdots, n,
\label{eq:adjusted_subsampling_prob}
\end{equation}
where $\rho \in (0,1)$. Subsequently, define $\pi_i^{\mathrm{opt}}=\tilde{\pi}_i\wedge 1$, $i=1,\cdots,n$. Once the probabilities $\pi_i^{\mathrm{opt}}$ are determined, Poisson subsampling is applied to obtain the optimal subsample.

Let $\mathcal{I}_{\mathcal{S}}$ denote the index set of the optimal subsample. For practical implementation, following Zu et al. \cite{Zu23}, the Newton--Raphson algorithm is adopted. Let $\boldsymbol{\beta}^{(t)}$ represent the parameter estimates obtained at the $t$-th iteration, and define
\begin{equation}\label{eq:Sr_bar}
\bar{\boldsymbol{S}}_r^{\Phi}(\boldsymbol{\beta}^{(t)}) = \frac{1}{n} \sum_{i \in \mathcal{I}_{\mathcal{S}}} \frac{1}{\pi_i^\text{opt}} \boldsymbol{X}_i^\top \boldsymbol{\Gamma}_i \bar{\boldsymbol{R}}^{-1} \left[\Phi\left( \frac{\boldsymbol{Y}_i - \boldsymbol{X}_i \boldsymbol{\beta}^{(t)}}{h(\boldsymbol{X}_i\boldsymbol{X}_i^{\top})^{1/2}} \right) - (1 - \tau)\right],
\end{equation}
\begin{equation}\label{eq:Hn_bar}
\bar{\boldsymbol{H}}_n(\boldsymbol{\beta}^{(t)}) = -\frac{1}{n} \sum_{i \in \mathcal{I}_{\mathcal{S}}} \frac{1}{\pi_i^\text{opt}} \boldsymbol{X}_i^\top \boldsymbol{\Gamma}_i \bar{\boldsymbol{R}}^{-1} \operatorname{diag}\left\{h^{-1}(\boldsymbol{X}_i\boldsymbol{X}_i^{\top})^{-1/2} \phi\left(\frac{\boldsymbol{Y}_{i}-\boldsymbol{X}_{i} \boldsymbol{\beta}^{(t)}}{h(\boldsymbol{X}_i\boldsymbol{X}_i^{\top})^{1/2}}\right)\right\} \boldsymbol{X}_i,
\end{equation}
where $\phi(\cdot)$ stands for the density function of the standard normal distribution. Accordingly, the iterative update formula is given by
\begin{equation}\label{eq:update_formula}
    \boldsymbol{\beta}^{(t+1)} = \boldsymbol{\beta}^{(t)} - \bigl[\bar{\boldsymbol{H}}_n(\boldsymbol{\beta}^{(t)})\bigr]^{-1}\bar{\boldsymbol{S}}_r^{\Phi}(\boldsymbol{\beta}^{(t)}).
\end{equation}
The iterative procedure is terminated when
$\|\boldsymbol{\beta}^{(t+1)}-\boldsymbol{\beta}^{(t)}\|\leq 10^{-4}$,
and the resulting value is taken as the final parameter estimate. The complete implementation procedure is summarized in Algorithm \ref{alg:practical_implement}.

\begin{algorithm}[htbp]
  \caption{Practically implementable estimation algorithm.}
  \label{alg:practical_implement}
  \begin{algorithmic}[1]
    \REQUIRE $r_{\mathcal{P}}, r, n, \rho$, and the maximum number of iterations $M$.
    \STATE \textbf{Step 1: Obtain the pilot sample.}
    \STATE Initialization: $\mathcal{P} = \emptyset$;
    \FOR{$i = 1, \cdots, n$}
        \STATE Generate a Bernoulli random variable $\delta_i \sim \operatorname{Bernoulli}(\pi_i)$ with $\pi_i = r_{\mathcal{P}}/n$;
        \IF{$\delta_i = 1$}
            \STATE Update $\mathcal{P} = \mathcal{P} \cup \{(\boldsymbol{Y}_i, \boldsymbol{X}_i, \pi_i)\}$;
        \ENDIF
    \ENDFOR
    \STATE Compute the pilot estimators $\tilde{\boldsymbol{\beta}}_{\mathcal{P}}$, $\bar{\boldsymbol{R}}$, $\bar{\boldsymbol{V}}_0^{1/2}$, $\hat{\Psi}$, and $\hat{\boldsymbol{H}}_n$ based on $\mathcal{P}$.
    \STATE \textbf{Step 2: Subsampling}
    \STATE Initialization: $\mathcal{S} = \emptyset$;
    \FOR{$i = 1, \cdots, n$}
        \STATE Calculate the optimal subsampling probability $\pi_i^{\text{opt}} = (\tilde{\pi}_i \wedge 1)$ by (\ref{eq:adjusted_subsampling_prob});
        \STATE Generate a Bernoulli random variable $\delta_i \sim \operatorname{Bernoulli}(\pi_i^{\text{opt}})$;
        \IF{$\delta_i = 1$}
            \STATE Update $\mathcal{S} = \mathcal{S} \cup \{(\boldsymbol{Y}_i, \boldsymbol{X}_i, \pi_i^{\text{opt}})\}$;
        \ENDIF
    \ENDFOR
    \STATE \textbf{Step 3: Estimation}
\FOR{$t = 0, \cdots, M - 1$}
    \STATE Update $\boldsymbol{\beta}^{(t+1)}$ according to the Newton iteration formula (\ref{eq:update_formula}).
    \STATE \textbf{If} $\|\boldsymbol{\beta}^{(t+1)} - \boldsymbol{\beta}^{(t)}\| \leq 10^{-4}$, \textbf{break} the iteration.
\ENDFOR
\STATE \textbf{Output}: subsample estimator $\tilde{\boldsymbol{\beta}}$.
  \end{algorithmic}
\end{algorithm}

\section{Regularized estimation based on the optimal subsample}

In this section, we introduce the penalized weighted smoothed estimating functions and develop an implementation algorithm within the optimal Poisson subsampling framework.

\subsection{Penalized weighted smoothed estimating functions}

Regularization-based parameter estimation is essential in high-dimensional data analysis. Under the assumptions of non-massive data and sparsity, numerous results have been established; see, for example, \cite{Fan01, Zou06, Hao16, Zhu21}. However, relatively little work has been devoted to regularized parameter estimation for large-scale longitudinal data. Therefore, we introduce a regularization method within the optimal Poisson subsampling framework. The penalized weighted smoothed estimating functions with the adaptive LASSO (ALASSO) penalty are defined as
\[
\boldsymbol{S}_r^P(\boldsymbol{\beta}) = \boldsymbol{S}_r^\Phi(\boldsymbol{\beta}) - \lambda \tilde{\boldsymbol{w}} \boldsymbol{\operatorname{sgn}}(\boldsymbol{\beta}),
\]
where $\tilde{\boldsymbol{w}} = \left(|\tilde{\beta}_{1}|^{-\gamma}, \cdots, |\tilde{\beta}_{p}|^{-\gamma}\right)^\top$ denotes the adaptive weight vector, with $\gamma>0$. Here, $\tilde{\boldsymbol{\beta}}$ denotes the unpenalized estimator obtained by solving the weighted smoothed estimating equations presented in Section 3, $\lambda>0$ is the tuning parameter, and $\tilde{\boldsymbol{w}} \boldsymbol{\mathrm{sgn}}(\boldsymbol{\beta})$ denotes the element-wise product. In addition, $\boldsymbol{\operatorname{sgn}}(\boldsymbol{\beta})=
\bigl(\operatorname{sgn}(\beta_1),\ldots,\operatorname{sgn}(\beta_p)\bigr)^\top$,
where $\operatorname{sgn}(\beta_k)=I(\beta_k>0)-I(\beta_k<0)$ for $k=1,\cdots,p$. In this paper, we set $\gamma = 1$.

To establish the theoretical properties of regularized parameter estimation, we impose condition (C7).
\begin{itemize} 
 \item[(C7)] Suppose that $\sqrt{r}\lambda \to 0$ and $r\lambda \to \infty$. (i) For $\beta_k\neq 0$ and $|\tilde{w}_k| < \infty$, it follows that $\lim\limits_{r\to\infty}\sqrt{r}\,\lambda\,\tilde{w}_k = 0$. (ii) To obtain sparsity, when $\beta_k$ is sufficiently small, i.e., $|\beta_k| < Mr^{-1/2}$ for some constant $M$, combined with the definition of $\tilde{\boldsymbol{w}}$, this yields $\lim\limits_{r\to\infty} \sqrt{r} \inf_{|\beta_k| \leq M r^{-1/2}} \lambda \tilde{w}_k = \infty$.
\end{itemize}

\begin{theorem}\label{theorem:penalty}
Let the number of non-zero coefficients be $s$. Under conditions (C1)--(C7), the following conclusions hold:
\begin{itemize}
    
    \item[(a)] There exists a $\sqrt{r}$-consistent approximate zero-crossing point of $\boldsymbol{S}_r^P(\boldsymbol{\beta})$, i.e., there exists an estimator $\check{\boldsymbol{\beta}}$ satisfying $\check{\boldsymbol{\beta}} = \boldsymbol{\beta}_0 + O_p(r^{-1/2}),$ such that $\check{\boldsymbol{\beta}}$ is an approximate zero-crossing point of $\boldsymbol{S}_r^P(\boldsymbol{\beta})$.
    \item[(b)] For any $\sqrt{r}$-consistent approximate zero-crossing point of $\boldsymbol{S}_r^P(\boldsymbol{\beta})$, denoted by $\check{\boldsymbol{\beta}} = (\check{\beta}_1, \cdots, \check{\beta}_p)^{\top}$, it holds that
\[
\lim_{r\to\infty} P\big(\check{\beta}_k = 0, \text{for } k>s\big) = 1.
\]
Furthermore, denote $\check{\boldsymbol{\beta}}_1 = (\check{\beta}_1, \cdots, \check{\beta}_s)^{\top}$ and $\boldsymbol{\beta}_{01} = (\beta_{01}, \cdots, \beta_{0s})^{\top}$ as the first $s$ components of $\check{\boldsymbol{\beta}}$ and $\boldsymbol{\beta}_0$, respectively. Then
\[
\boldsymbol{V}_{c11}^{-1/2}\boldsymbol{H}_{n11}\big(\check{\boldsymbol{\beta}}_1 - \boldsymbol{\beta}_{01} + \boldsymbol{H}_{n11}^{-1}\boldsymbol{b}_n\big) \stackrel{d}{\to} N\big(\boldsymbol{0}, \boldsymbol{I}_s\big).
\]
Here, $\boldsymbol{H}_{n11}$ and $\boldsymbol{V}_{c11}$ stand for the leading $s\times s$ submatrices of $\boldsymbol{H}_{n}$ and $\boldsymbol{V_c}$, respectively, and $\boldsymbol{b}_n$ is defined as $-(\lambda \tilde{w}_1\operatorname{sgn}(\beta_{01}),\cdots, \lambda \tilde{w}_s\operatorname{sgn}(\beta_{0s}))^{\top}$.
    \item[(c)] Let $\boldsymbol{S}_{r1}^P(\boldsymbol{\beta})$ and $\boldsymbol{S}_{r1}^\Phi(\boldsymbol{\beta})$ denote the first $s$ components of $\boldsymbol{S}_r^P(\boldsymbol{\beta})$ and $\boldsymbol{S}_r^\Phi(\boldsymbol{\beta})$, respectively. Let $\boldsymbol{\beta}=(\boldsymbol{\beta}_1^\top,\boldsymbol{\beta}_2^\top)^{\top}$, where $\boldsymbol{\beta}_1$ consists of the first $s$ components of $\boldsymbol{\beta}$ and $\boldsymbol{\beta}_2$ consists of the remaining $p-s$ components. Without loss of generality, assume $\boldsymbol{\beta}_2=\boldsymbol{0}$. Then there exists $\check{\boldsymbol{\beta}}_1$ such that \[ \boldsymbol{S}_{r1}^P\left((\check{\boldsymbol{\beta}}_1^\top,\boldsymbol{0}^\top)^\top\right)=\boldsymbol{0}, \] which implies that the solution is exact in this case.
\end{itemize}
\end{theorem}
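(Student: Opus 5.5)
The plan follows the standard route for penalized estimating functions (Johnson, Lin and Zeng's approximate zero-crossing framework, as used by Zu \textit{et al.}). The first step is a uniform local linearization of the unpenalized part. From the proofs of Theorems \ref{theorem:consistency_smoothed} and \ref{theorem:asymptotic_normality_smoothed} we may take
\[
\boldsymbol{S}_r^\Phi(\boldsymbol{\beta}) = \boldsymbol{S}_r^\Phi(\boldsymbol{\beta}_0) + \boldsymbol{H}_n(\boldsymbol{\beta}-\boldsymbol{\beta}_0) + o_p(r^{-1/2})
\]
uniformly over $\|\boldsymbol{\beta}-\boldsymbol{\beta}_0\|\le Mr^{-1/2}$. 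Here $\sqrt{r}\,\boldsymbol{V}_c^{-1/2}\boldsymbol{S}_r^\Phi(\boldsymbol{\beta}_0)$ is asymptotically $N(\boldsymbol{0},\boldsymbol{I})$ after the appropriate scaling, so $\boldsymbol{S}_r^\Phi(\boldsymbol{\beta}_0)=O_p(r^{-1/2})$. The variance calculation uses conditions (C5)--(C6); the bias from the smoothing bandwidth $h=r^{-1/2}$ uses (C2). I will state this linearization as a lemma, proved by the same conditional-on-$\mathcal{F}_n$ Chebyshev argument plus a stochastic equicontinuity bound for the smoothed indicator. All three parts then reduce to this representation plus the behaviour of the penalty term $\lambda\tilde{\boldsymbol{w}}\,\boldsymbol{\operatorname{sgn}}(\boldsymbol{\beta})$.

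For part (a), I construct $\check{\boldsymbol{\beta}}$ explicitly. Set $\check{\boldsymbol{\beta}}_2=\boldsymbol{0}$, and let $\check{\boldsymbol{\beta}}_1$ solve the first $s$ penalized equations as in part (c). By (C7)(i) the penalty on the nonzero coordinates is $o(r^{-1/2})$: since $\tilde{\boldsymbol{\beta}}$ is $\sqrt{r}$-consistent by Theorem \ref{theorem:consistency_smoothed}, $\tilde w_k=O_p(1)$ for $k\le s$. The linearization and invertibility of $\boldsymbol{H}_{n11}$ (from (C4)(i)) then give $\check{\boldsymbol{\beta}}_1-\boldsymbol{\beta}_{01}=O_p(r^{-1/2})$.

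It remains to check that $\check{\boldsymbol{\beta}}$ is an approximate zero-crossing, meaning $\lim\sqrt{r}\,[\boldsymbol{S}^P_{rk}(\check{\boldsymbol{\beta}}+)\,\boldsymbol{S}^P_{rk}(\check{\boldsymbol{\beta}}-)]\le 0$ coordinatewise, where $\pm$ denotes perturbing the $k$th coordinate by $\pm\varepsilon$ with $\varepsilon\downarrow0$. For $k\le s$ the $k$th component is exactly zero. For $k>s$, the unpenalized part is $O_p(r^{-1/2})$. The penalty, however, jumps between $\pm\lambda\tilde w_k$, and $\sqrt{r}\lambda\tilde w_k\to\infty$ by (C7)(ii) because $|\tilde\beta_k|=O_p(r^{-1/2})$. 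So $\boldsymbol{S}^P_{rk}(\check{\boldsymbol{\beta}}\pm)$ has sign $\mp$ with probability tending to one, and the product is negative.

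Part (b) has two claims. For sparsity, suppose a $\sqrt r$-consistent zero-crossing $\check{\boldsymbol{\beta}}$ had $\check\beta_k\neq0$ for some $k>s$ with non-vanishing probability. Then $|\check\beta_k|\le Mr^{-1/2}$. The $k$th component then equals $O_p(r^{-1/2})-\lambda\tilde w_k\operatorname{sgn}(\check\beta_k)$, whose $\sqrt{r}$-scaled magnitude diverges by (C7)(ii). Such a component cannot change sign in a neighbourhood of $\check{\boldsymbol{\beta}}$, which contradicts the zero-crossing property. Hence $P(\check\beta_k=0,\ k>s)\to1$.

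For normality, on this event the first $s$ components give
\[
\boldsymbol{0}\approx \boldsymbol{S}^\Phi_{r1}(\boldsymbol{\beta}_0)+\boldsymbol{H}_{n11}(\check{\boldsymbol{\beta}}_1-\boldsymbol{\beta}_{01})+\boldsymbol{b}_n+o_p(r^{-1/2}).
\]
This uses $\operatorname{sgn}(\check\beta_k)=\operatorname{sgn}(\beta_{0k})$ with probability tending to one, and (C7)(i) to replace $\tilde w_k$ at $\check{\boldsymbol{\beta}}$ by its value in $\boldsymbol{b}_n$. Rearranging and applying the CLT for $\boldsymbol{S}^\Phi_{r1}(\boldsymbol{\beta}_0)$ (the leading block of the one in Theorem \ref{theorem:asymptotic_normality_smoothed}, with covariance $\boldsymbol{V}_{c11}$) yields the stated limit.

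Part (c) is a fixed-point argument in $s$ dimensions. On $\{\|\boldsymbol{\beta}_1-\boldsymbol{\beta}_{01}\|\le Mr^{-1/2}\}$ the sign vector is constant with probability tending to one, so the penalty is a constant vector there. Moreover, $\boldsymbol{S}^\Phi_{r1}((\boldsymbol{\beta}_1^\top,\boldsymbol{0}^\top)^\top)$ is continuous in $\boldsymbol{\beta}_1$, since $\Phi$ is smooth. Its differential is close to the nonsingular $\boldsymbol{H}_{n11}$, so Brouwer's theorem applied to the map $\boldsymbol{\beta}_1\mapsto\boldsymbol{\beta}_1-\boldsymbol{H}_{n11}^{-1}\boldsymbol{S}^P_{r1}$ on this ball gives an exact root for large $M$ with high probability.

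I expect the main obstacle to be the uniform linearization lemma for the smoothed, inverse-probability-weighted estimating function. This requires controlling $\sup\|\boldsymbol{S}_r^\Phi(\boldsymbol{\beta})-\boldsymbol{S}_r^\Phi(\boldsymbol{\beta}_0)-\boldsymbol{H}_n(\boldsymbol{\beta}-\boldsymbol{\beta}_0)\|$ over the shrinking ball with random weights $\delta_i/\pi_i$. I would split this into the conditional-on-data variance, bounded via (C6) and Chebyshev on a finite grid plus Lipschitz continuity of $\Phi(\cdot/h)$ scaled by $h^{-1}$, and the full-data deviation handled as in Fu and Wang \cite{Fu12}. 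Any difficulty with the $h^{-1}$ Lipschitz constant would be absorbed by taking the grid mesh $o(r^{-1})$.
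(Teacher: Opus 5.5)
Your overall route is the paper's. You take a point with $\check{\boldsymbol{\beta}}_2=\boldsymbol{0}$ and check it coordinatewise, using (C7)(i) for $k\le s$ and $\sqrt{r}\lambda\tilde w_k\to\infty$ for $k>s$. You get sparsity by the same contradiction, normality from the linearization plus the CLT behind Theorem~\ref{theorem:asymptotic_normality_smoothed}, and an exact root in (c) by Brouwer. Your self-map $\boldsymbol{\beta}_1\mapsto\boldsymbol{\beta}_1-\boldsymbol{H}_{n11}^{-1}\boldsymbol{S}^P_{r1}$ is equivalent to the paper's boundary condition $(\boldsymbol{\beta}_1-\boldsymbol{\beta}_{01})^\top\boldsymbol{H}_{n11}^\top\boldsymbol{S}^P_{r1}>0$. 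Borrowing the exact root of (c) for (a), instead of the paper's one-step point $\check{\boldsymbol{\beta}}_1=\boldsymbol{\beta}_{01}-\boldsymbol{H}_{n11}^{-1}\boldsymbol{S}^\Phi_{r1}(\boldsymbol{\beta}_0)$, is also fine.

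The step that fails is your proof of the uniform linearization, which you yourself single out as the main obstacle. At a grid point with $\|\boldsymbol{\beta}-\boldsymbol{\beta}_0\|\asymp r^{-1/2}$, the centred increment $\boldsymbol{S}_r^\Phi(\boldsymbol{\beta})-\boldsymbol{S}_r^\Phi(\boldsymbol{\beta}_0)$ has variance of order $r^{-1}\|\boldsymbol{\beta}-\boldsymbol{\beta}_0\|^2/h\asymp r^{-3/2}$. So Chebyshev bounds the probability of exceeding $\eta r^{-1/2}$ only by $O(r^{-1/2})$. The Lipschitz constant in $\boldsymbol{\beta}$ is $O_p(h^{-1})=O_p(r^{1/2})$, so a mesh $o(r^{-1})$ on the ball of radius $Mr^{-1/2}$ needs at least of order $r^{p/2}$ points. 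The union bound is then of order $r^{(p-1)/2}$, which does not vanish for any $p\ge1$. Refining the mesh only makes it worse. The Jacobian version has the same defect: variance $\asymp(rh)^{-1}$, Lipschitz constant $h^{-2}$.

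You need an exponential tail, and this is exactly what Lemma~\ref{lemma:Dr_H} supplies. By (C5)--(C6), the summands of $\boldsymbol{D}_r(\boldsymbol{\beta})-\operatorname{E}[\boldsymbol{D}_r(\boldsymbol{\beta})]$ are bounded by $C/(rh)$, with total variance at most $C/(rh)$. Bernstein's inequality then makes a polynomial number of grid points cost only $\sqrt{\log r/(rh)}+\log r/(rh)=o(1)$. The mean value theorem then gives your linearization uniformly on $\|\boldsymbol{\beta}-\boldsymbol{\beta}_0\|\le Mr^{-1/2}$, as in the proof of Theorem~\ref{theorem:consistency_smoothed}.

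Second, your definition of approximate zero-crossing has the wrong normalization. It should be $r\,S^P_{rk}(\check{\boldsymbol{\beta}}+)S^P_{rk}(\check{\boldsymbol{\beta}}-)$, the product of the two $\sqrt{r}$-scaled components. This is what the paper implicitly uses when it argues with $r^{1/2}S^P_{rk}$.

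With your factor $\sqrt{r}$, every $\sqrt{r}$-consistent point passes for $k\le s$, since there the scaled product is $O_p(r^{-1/2})$. For example, $(\boldsymbol{\beta}_{01}^\top+r^{-1/2}\boldsymbol{c}^\top,\boldsymbol{0}^\top)^\top$ would qualify for every fixed $\boldsymbol{c}$, so the normality claim in (b) would be false. For $k>s$ with $\check\beta_k\neq0$, the scaled product is about $\sqrt{r}\lambda^2\tilde w_k^2=O_p(r^{3/2}\lambda^2)$. This vanishes for, e.g., $\lambda=r^{-0.9}$, which is compatible with (C7), so the sparsity contradiction disappears too.

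Your verbal argument (the $\sqrt{r}$-scaled component diverges and so cannot change sign) is the correctly normalized one. Only the displayed definition needs to change.
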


In the penalized weighted smoothed estimating functions, $\lambda$ serves as a crucial tuning parameter. Following Song \textit{et al.} \cite{Song24} and Li \textit{et al.} \cite{Li24b}, the optimal $\lambda$ is chosen by minimizing the Bayesian information criterion (BIC) defined as
\[
\text{BIC} = \log\left( \frac{1}{n}\sum_{i=1}^n \sum_{j=1}^m \frac{\delta_i}{\pi_i^{\text{opt}}} \rho_\tau\left(y_{ij} - \boldsymbol{x}_{ij}^\top\boldsymbol{\beta}\right) \right) + df \cdot \frac{\log(n)}{2n},
\]
where $\rho_\tau(u)=u[\tau-I(u<0)]$ and $df$ is the number of nonzero coefficients.

\subsection{Algorithm for implementing regularized estimation}

The minorization-maximization algorithm is utilized to address the nonsmoothness induced by the ALASSO penalty. Under the proposed framework, the penalized weighted smoothed estimating functions are approximated as
\[
\boldsymbol{S}_r^P(\boldsymbol{\beta}) = \boldsymbol{S}_r^{\Phi}(\boldsymbol{\beta}) - \lambda \tilde{\boldsymbol{w}} \boldsymbol{\mathrm{sgn}}(\boldsymbol{\beta}) \frac{|\boldsymbol{\beta}|}{\epsilon + |\boldsymbol{\beta}|} = \boldsymbol{0},
\]
where $\epsilon$ represents a small positive constant, which is set to $10^{-6}$ in practical implementation. As in the unpenalized case, let $\boldsymbol{\beta}^{(t)}$ denote the parameter estimates at the $t$-th iteration. Following Zu \textit{et al.} \cite{Zu23}, we define
\[
\boldsymbol{E}\left(\boldsymbol{\beta}^{(t)}\right) = \operatorname{diag}\left\{ \frac{\lambda \tilde{w}_1}{\epsilon + |{\beta}^{(t)}_{1}|}, \cdots, \frac{\lambda \tilde{w}_p}{\epsilon + |{\beta}^{(t)}_{p}|} \right\}.
\]
Parameter updating is carried out via the Newton iteration formula
\begin{equation}\label{eq:pen_update_formula}
  \boldsymbol{\beta}^{(t+1)} = \boldsymbol{\beta}^{(t)} - \bigg[\bar{\boldsymbol{H}}_n(\boldsymbol{\beta}^{(t)})-\boldsymbol{E}(\boldsymbol{\beta}^{(t)})\bigg]^{-1} \bigg[\bar{\boldsymbol{S}}_r^{\Phi}(\boldsymbol{\beta}^{(t)})-\boldsymbol{E}(\boldsymbol{\beta}^{(t)})\boldsymbol{\beta}^{(t)}\bigg],
\end{equation}
where $\bar{\boldsymbol{H}}_n(\boldsymbol{\beta}^{(t)})$ and $\bar{\boldsymbol{S}}_r^{\Phi}(\boldsymbol{\beta}^{(t)})$ have been defined in (\ref{eq:Hn_bar}) and (\ref{eq:Sr_bar}), respectively. The iteration stops once the convergence condition $\|\boldsymbol{\beta}^{(t+1)} - \boldsymbol{\beta}^{(t)}\| \leq 10^{-4}$ is satisfied, and the resulting estimates are taken as the final parameter values. The detailed implementation procedures are presented in Algorithm \ref{alg:PWQGEE}.
\begin{algorithm}[htbp]
  \caption{Algorithm for implementing regularized estimation.}
  \label{alg:PWQGEE}
  \begin{algorithmic}[1]
    \REQUIRE Optimal subsample set $\mathcal{S}$, $\tilde{\boldsymbol{\beta}}$, $\bar{\boldsymbol{R}}$, $\epsilon$, $\lambda$, and the maximum number of iterations $M$.
\FOR{$t = 0, \dots, M - 1$}
    \STATE Update $\boldsymbol{\beta}^{(t+1)}$ using the Newton iteration formula (\ref{eq:pen_update_formula}).
    \STATE \textbf{If} $\|\boldsymbol{\beta}^{(t+1)} - \boldsymbol{\beta}^{(t)}\| \leq 10^{-4}$, \textbf{break} the iteration.
\ENDFOR
    \STATE \textbf{Output}: regularized estimator $\check{\boldsymbol{\beta}}$.
  \end{algorithmic}
\end{algorithm}

\section{Numerical simulations}

Simulation studies are performed to assess the performance of the proposed optimal Poisson subsampling method and regularized parameter estimation. All simulation results are based on 500 independent replications.

\subsection{Parameter estimation}\label{sec:num par est}

In the numerical simulation studies, the response variable is generated according to the following linear model
\[
y_{ij} = \boldsymbol{x}^\top_{ij} \boldsymbol{\beta}_0 + \varepsilon_{ij} + q_\tau, \quad i = 1,\cdots,n, j=1,\cdots,m.
\]
We set $n = 10000$ subjects and $m = 5$ observations per subject. The true regression coefficient vector is $\boldsymbol{\beta}_0 = (1, 1.5,1,1.5,1)^\top$. The shift term $q_\tau$ is introduced to ensure that $P(\varepsilon_{ij}+q_\tau \leq 0) = \tau$. The covariate $\boldsymbol{x}_{ij}$ follows a multivariate $t_2(\boldsymbol{0}, \boldsymbol{\Sigma})$ distribution, where the $(k,l)$-th entry of $\boldsymbol{\Sigma}$ is given by $\Sigma_{kl}=0.5^{|k-l|}$.

The expected pilot sample size is set to $r_{\mathcal{P}}=400$, and three quantile levels $\tau=0.25, 0.5, 0.75$ are adopted for analysis. For the true correlation matrix with the AR(1) structure, two error distributions are considered: the multivariate normal distribution $\boldsymbol{\varepsilon}_i \sim N(\boldsymbol{0}, \boldsymbol{R}_T)$ and the $t_3$ distribution $\boldsymbol{\varepsilon}_i \sim t_3(\boldsymbol{0}, \boldsymbol{R}_T)$, where the correlation coefficient of $\boldsymbol{R}_T$ is fixed at $0.5$. Meanwhile, three types of working correlation structures, AR(1), EX and MA(1), are taken into account. The same setting applies to the case where the true correlation matrix adopts the EX structure. This yields four simulation scenarios.

\begin{itemize}
    \item[(I)] The true correlation matrix follows the AR(1) structure, with the error term $\boldsymbol{\varepsilon}_i \sim N(\boldsymbol{0}, \boldsymbol{R}_T)$. Three typical quantile levels $\tau=0.25,0.5,0.75$ are considered.
    \item[(II)] The true correlation matrix also follows the AR(1) structure, while the error term $\boldsymbol{\varepsilon}_i \sim t_3(\boldsymbol{0}, \boldsymbol{R}_T)$, with the same quantile levels as in (I).
    \item[(III)] The true correlation matrix adopts the EX structure. The error term $\boldsymbol{\varepsilon}_i$ follows the same distribution as that in (I), and the same quantile levels are adopted.
    \item[(IV)] The true correlation matrix adopts the EX structure. The error term $\boldsymbol{\varepsilon}_i$ follows the same distribution as that in (II), and the same quantile levels are adopted.
\end{itemize}

In the figure labels such as AR(1)-EX, the left term denotes the true correlation structure and the right term denotes the working correlation structure. Throughout all figures and tables in this paper, ``Unif" and ``Opt" refer to the uniform Poisson subsampling method and the optimal Poisson subsampling method, respectively. 

Since the optimal Poisson subsampling probabilities depend on the shrinkage parameter $\rho$, we investigate its influence on estimation accuracy. Under Scenarios I and II with the working correlation matrix set to AR(1), we set the expected subsample size to 600 to assess estimation performance under different $\rho$ values, as presented in Figure \ref{fg:AR1_corr_600_rou}. Numerical results indicate that optimal Poisson subsampling consistently outperforms uniform Poisson subsampling, and better performance is observed when $\rho$ ranges from 0.05 to 0.5. Accordingly, we set $\rho$ to 0.15 throughout the subsequent simulation analyses. 

\begin{figure}[htbp] 
\centering
\includegraphics[width=1\linewidth]{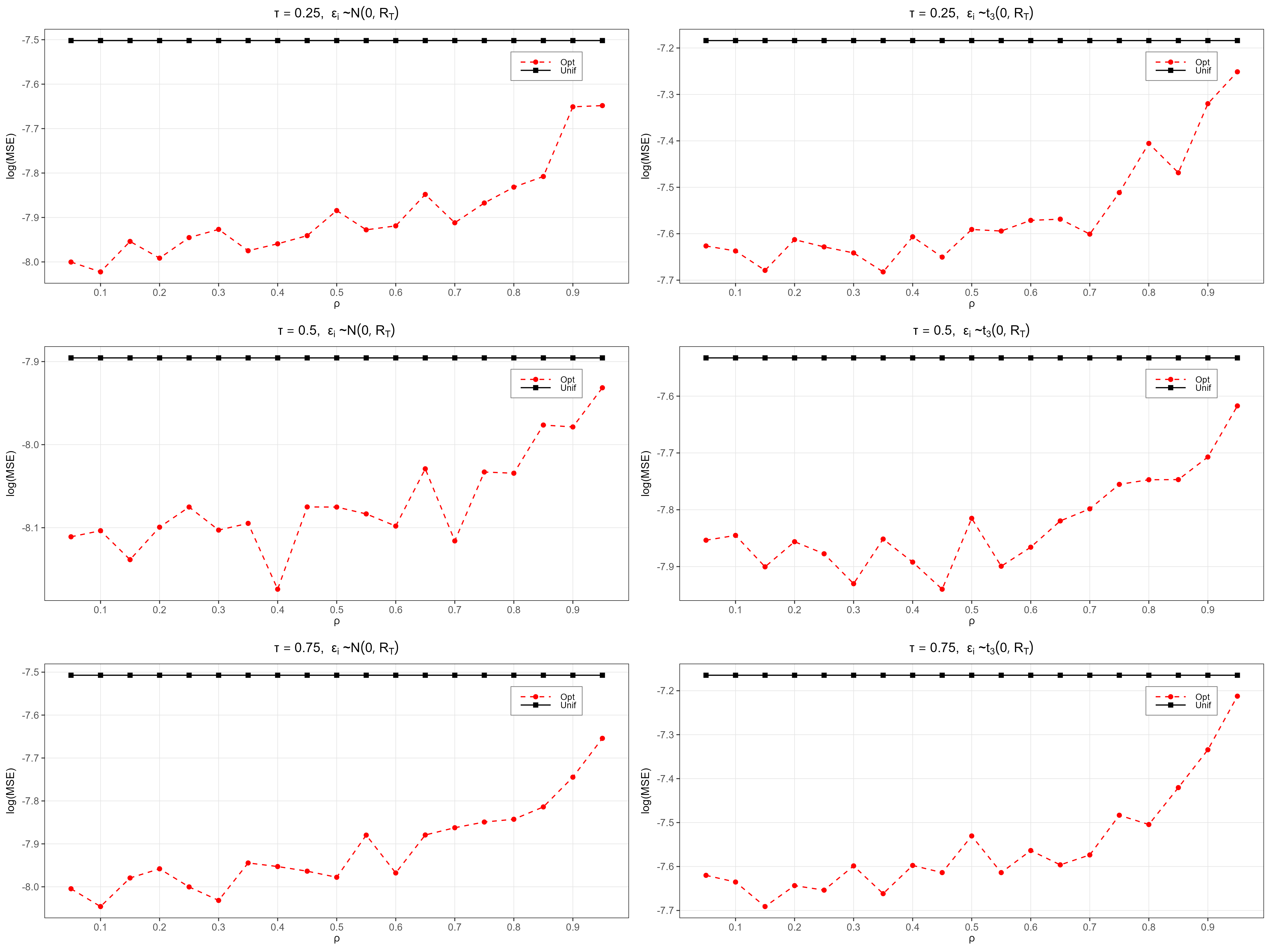}  
\caption{Estimation performance under different $\rho$ values in Scenarios I and II with AR(1) working correlation and $r=600$.}
\label{fg:AR1_corr_600_rou}
\end{figure}

From the log(MSE) results of the two subsampling methods shown in Figures \ref{fg:realAR1_errornorm} to \ref{fg:realEX_errort3}, we draw the following conclusions: the optimal Poisson subsampling method outperforms the uniform Poisson subsampling method, even under misspecification of the working correlation structure. In addition, Tables \ref{tab:bias_sd_tau5_I-II} and \ref{tab:bias_sd_tau5_III-IV} present the biases and standard deviations (SDs) of the two Poisson subsampling methods under the four simulation scenarios with AR(1) and EX working correlation structures at $\tau=0.5$. Similar results are also observed under the other settings. Numerical results show that the biases in all cases are close to 0. The SDs of the optimal Poisson subsampling method are consistently smaller than those of the uniform Poisson subsampling method, and as the subsample size increases, the SDs of both methods exhibit a clear decreasing trend. This demonstrates that the optimal Poisson subsampling method is a more efficient choice for parameter estimation.

\begin{figure}[htbp] 
\centering
\includegraphics[width=1\linewidth]{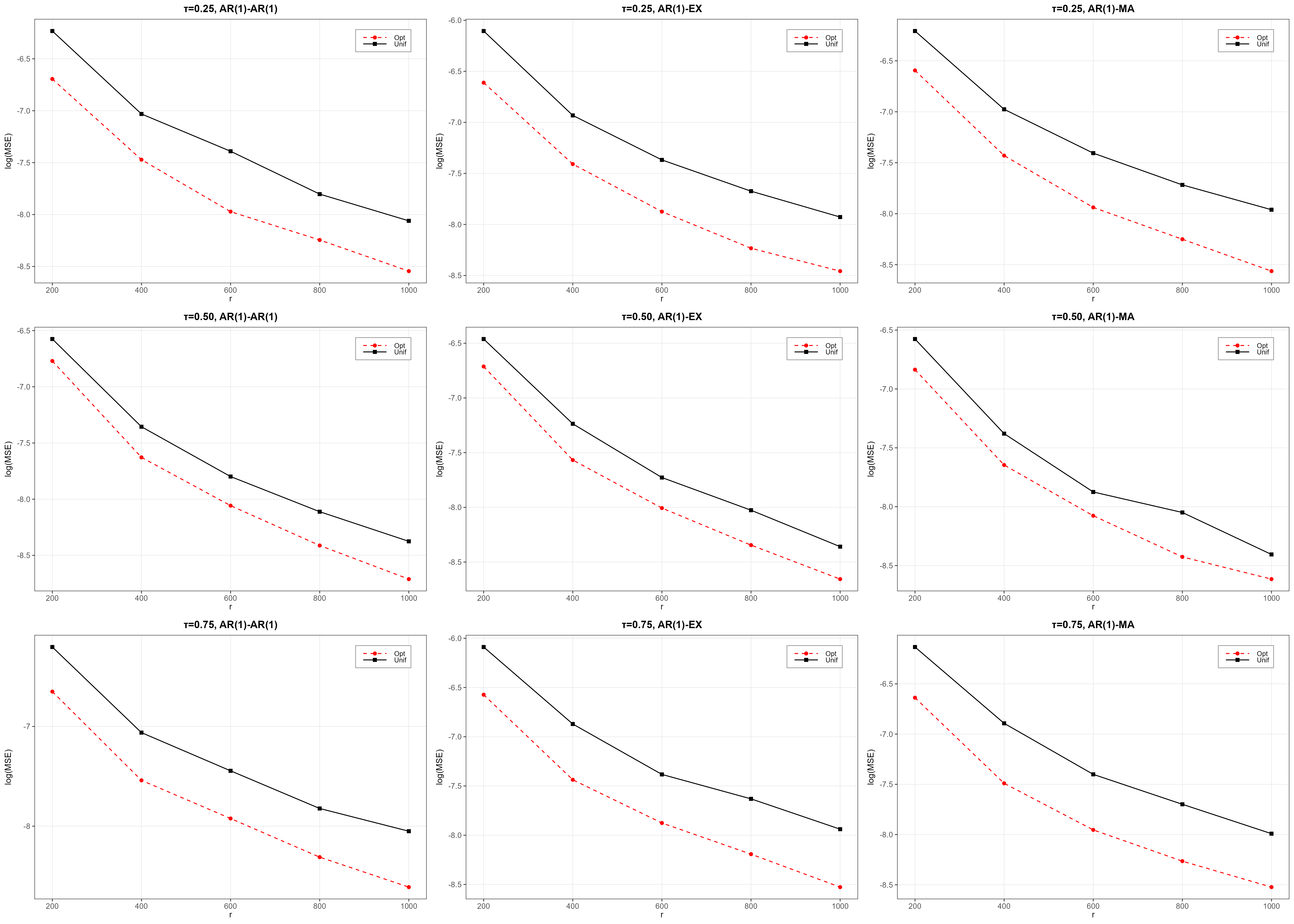}  
\caption{Log(MSE) of the uniform and optimal Poisson subsampling methods under Scenario I.}
\label{fg:realAR1_errornorm}
\end{figure}

\begin{figure}[htbp] 
\centering
\includegraphics[width=1\linewidth]{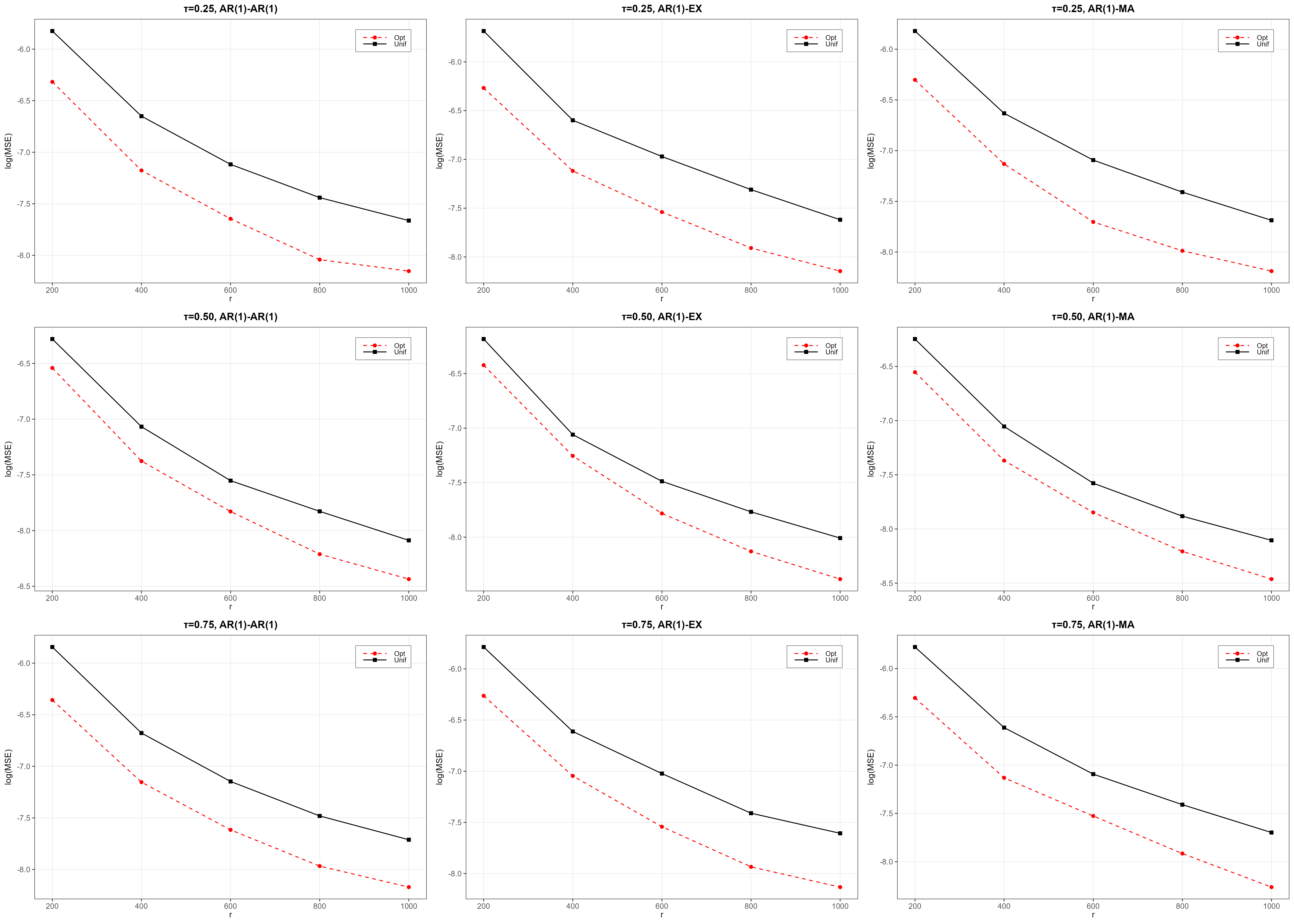}  
\caption{Log(MSE) of the uniform and optimal Poisson subsampling methods under Scenario II.}
\label{fg:realAR1_errort3}
\end{figure}

\begin{figure}[htbp] 
\centering
\includegraphics[width=1\linewidth]{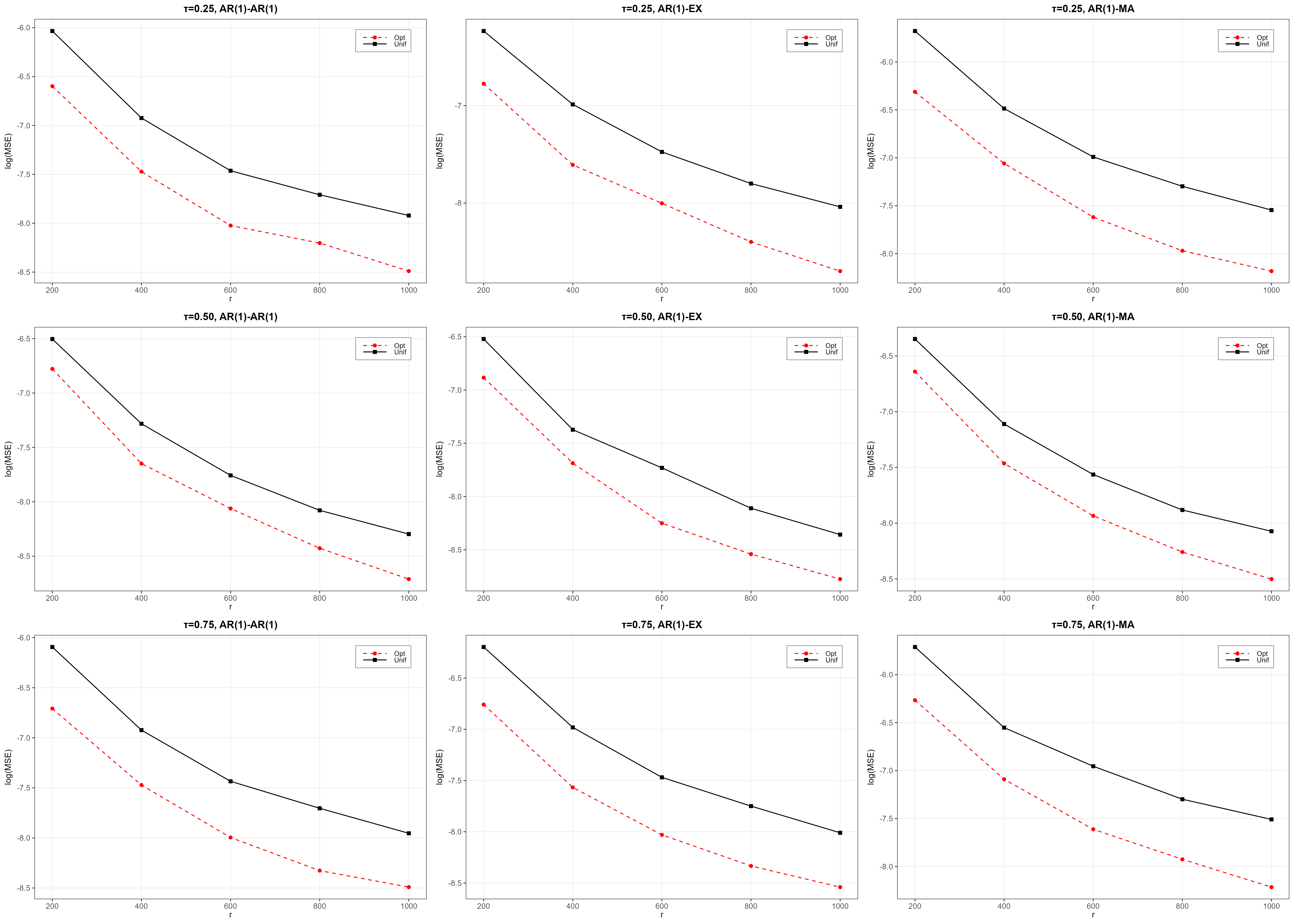}  
\caption{Log(MSE) of the uniform and optimal Poisson subsampling methods under Scenario III.}
\label{fg:realEX_errornorm}
\end{figure}

\begin{figure}[htbp] 
\centering
\includegraphics[width=1\linewidth]{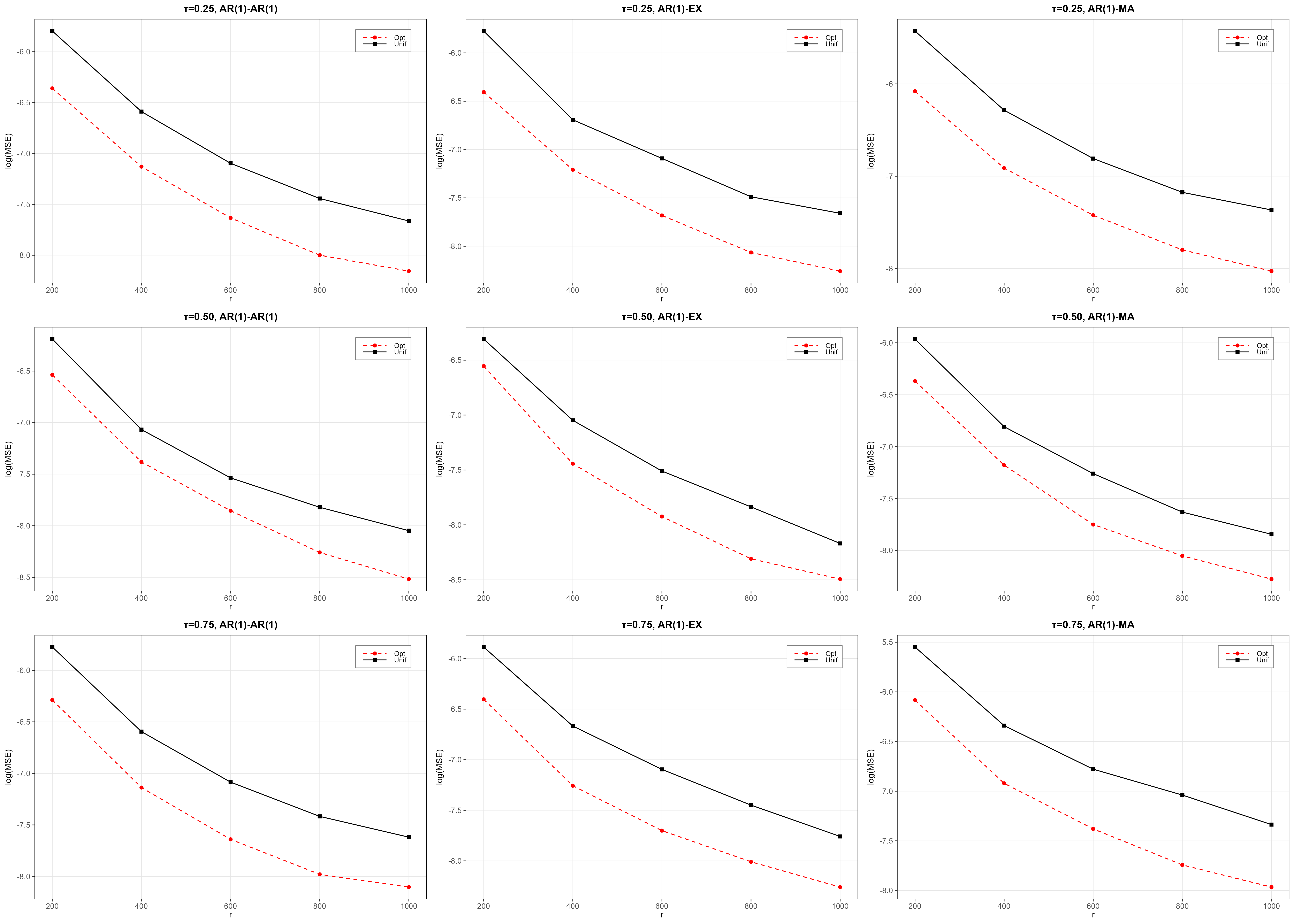}  
\caption{Log(MSE) of the uniform and optimal Poisson subsampling methods under Scenario IV.}
\label{fg:realEX_errort3} 
\end{figure}

\begin{table}[htbp]
  \centering
  \caption{Biases ($\times 10^3$) and Standard Deviations (SDs) of the uniform Poisson subsampling method and the optimal Poisson subsampling method for $\tau = 0.5$ under Scenario I and Scenario II.}
  \label{tab:bias_sd_tau5_I-II}
  \begin{tabular}{cccccccccc}
    \toprule
    \multirow{3}{*}{$r$} & \multirow{3}{*}{Method} & \multicolumn{4}{c}{Scenario I} & \multicolumn{4}{c}{Scenario II} \\
    \cmidrule(lr){3-6} \cmidrule(lr){7-10}
    & & \multicolumn{2}{c}{AR(1)-AR(1)} & \multicolumn{2}{c}{AR(1)-EX} & \multicolumn{2}{c}{AR(1)-AR(1)} & \multicolumn{2}{c}{AR(1)-EX} \\
    & & Bias & SD & Bias & SD & Bias & SD & Bias & SD \\
    \midrule
    200  & Unif & 0.110 & 0.083 & 0.688 & 0.088 & -0.510 & 0.097 & 1.085 & 0.101 \\
         & Opt  & 0.505 & 0.075 & 0.562 & 0.078 & 0.281 & 0.085 & -3.093 & 0.090 \\
    400  & Unif & -0.953 & 0.056 & -0.213 & 0.060 & 0.345 & 0.065 & 0.202 & 0.065 \\
         & Opt  & -0.388 & 0.049 & -0.102 & 0.051 & 0.218 & 0.056 & 0.253 & 0.059 \\
    600  & Unif & 0.408 & 0.045 & 0.126 & 0.047 & -0.128 & 0.051 & 0.153 & 0.053 \\
         & Opt  & 0.204 & 0.040 & 0.055 & 0.041 & -0.234 & 0.045 & -0.669 & 0.046 \\
    800  & Unif & -0.126 & 0.039 & -0.054 & 0.040 & 0.548 & 0.045 & 0.313 & 0.046 \\
         & Opt  & -0.297 & 0.033 & -0.552 & 0.034 & 0.112 & 0.037 & 0.311 & 0.038 \\
    1000 & Unif & -0.628 & 0.034 & -0.213 & 0.034 & 0.009 & 0.039 & 0.668 & 0.041 \\
         & Opt  & 0.048 & 0.029 & 0.380 & 0.029 & 0.470 & 0.033 & -0.020 & 0.034 \\
    \bottomrule
  \end{tabular}
\end{table}

\begin{table}[htbp]
  \centering
  \caption{Biases ($\times 10^3$) and Standard Deviations (SDs) of the uniform Poisson subsampling method and the optimal Poisson subsampling method for $\tau = 0.5$ under Scenario III and Scenario IV.}
  \label{tab:bias_sd_tau5_III-IV}
  \begin{tabular}{cccccccccc}
    \toprule
    \multirow{3}{*}{$r$} & \multirow{3}{*}{Method} & \multicolumn{4}{c}{Scenario III} & \multicolumn{4}{c}{Scenario IV} \\
    \cmidrule(lr){3-6} \cmidrule(lr){7-10}
    & & \multicolumn{2}{c}{EX-AR(1)} & \multicolumn{2}{c}{EX-EX} & \multicolumn{2}{c}{EX-AR(1)} & \multicolumn{2}{c}{EX-EX} \\
    & & Bias & SD & Bias & SD & Bias & SD & Bias & SD \\
    \midrule
    200  & Unif & -0.298 & 0.086 & 2.784 & 0.085 & -0.147 & 0.101 & -1.149 & 0.095 \\
         & Opt  & -1.560 & 0.075 & -1.196 & 0.072 & 0.120 & 0.085 & 1.676 & 0.084 \\
    400  & Unif & 0.523 & 0.059 & -0.238 & 0.056 & 0.304 & 0.065 & 0.889 & 0.066 \\
         & Opt  & 1.310 & 0.049 & 0.840 & 0.048 & 0.360 & 0.056 & 0.850 & 0.054 \\
    600  & Unif & 0.260 & 0.046 & 0.059 & 0.047 & -0.226 & 0.052 & 0.358 & 0.052 \\
         & Opt  & 0.384 & 0.040 & 0.306 & 0.036 & 1.002 & 0.044 & 0.315 & 0.042 \\
    800  & Unif & -0.128 & 0.039 & 0.181 & 0.039 & 0.011 & 0.045 & 0.316 & 0.044 \\
         & Opt  & -0.484 & 0.033 & 0.001 & 0.031 & -0.016 & 0.036 & 0.719 & 0.035 \\
    1000 & Unif & 0.183 & 0.035 & -0.047 & 0.034 & 0.379 & 0.040 & -0.473 & 0.038 \\
         & Opt  & 0.265 & 0.029 & -0.181 & 0.028 & 0.190 & 0.032 & 0.364 & 0.032 \\
    \bottomrule
  \end{tabular}
\end{table}

\subsection{Regularized estimation}

This subsection conducts simulation analyses to evaluate the performance of the proposed method for regularized parameter estimation. Under the four scenarios, the variable dimension is set to $p=10$. The true coefficient vector is set as $\boldsymbol{\beta}_0=(1, 1.5, 1, 1.5, 1, 0, \cdots, 0)^\top$, that is, the first 5 components of $\boldsymbol{\beta}_0$ are 1, 1.5, 1, 1.5, 1, and the remaining 5 components are all 0. All other simulation settings remain identical to those in Section~\ref{sec:num par est}.

The metrics of MSE, Sensitivity, Specificity, and Correct Classification Rate (CCR) are adopted to evaluate the performance of the regularized parameter estimation. Sensitivity refers to the proportion of true non-zero coefficients that are correctly identified as non-zero, while Specificity refers to the proportion of true zero coefficients that are correctly identified as zero. In particular, CCR is defined as the proportion of replications in which all nonzero and zero coefficients are correctly identified. The closer the values of Sensitivity, Specificity, and CCR are to 1, the better the variable selection performance.

Table \ref{tab:simulation_results_scadI_II} presents the regularized parameter estimation results under Scenario I where the error term follows a multivariate normal distribution and Scenario II where the error term follows a $t_3$ distribution, with the true correlation matrix following an AR(1) structure. The simulation results show that as the optimal Poisson subsample size increases, the MSE decreases gradually, the sensitivity is consistently 1, the specificity approaches 1, and the CCR also achieves satisfactory performance. Table \ref{tab:simulation_results_scadIII_IV} reports the regularized parameter estimation results under Scenarios III and IV, which are similar to those in Table \ref{tab:simulation_results_scadI_II}. Therefore, it can be concluded that the proposed regularized parameter estimation method under the optimal Poisson subsampling framework performs favorably.

\begin{sidewaystable}[htbp]
\centering
\setlength{\tabcolsep}{2pt}  % 缩小列间距
\footnotesize
\caption{MSE ($\times 10^3$), sensitivity, specificity, and correct classification rate (CCR) for regularized estimation based on optimal Poisson subsamples under Scenarios I and II.}
\label{tab:simulation_results_scadI_II}
\begin{tabular}{cccccccccccc}
\toprule
\multirow{2}{*}{Scenario} & \multirow{2}{*}{$r$} & \multirow{2}{*}{Metric} 
& \multicolumn{3}{c}{$\tau=0.25$} & \multicolumn{3}{c}{$\tau=0.50$} & \multicolumn{3}{c}{$\tau=0.75$} \\
\cmidrule(lr){4-6} \cmidrule(lr){7-9} \cmidrule(lr){10-12}
& & & AR(1)-AR(1) & AR(1)-EX & AR(1)-MA(1) & AR(1)-AR(1) & AR(1)-EX & AR(1)-MA(1) & AR(1)-AR(1) & AR(1)-EX & AR(1)-MA(1) \\
\midrule
\multirow{16}{*}{I}
& \multirow{4}{*}{400} 
& MSE & 0.635 & 0.758 & 0.676 & 0.575 & 0.645 & 0.557 & 0.658 & 0.726 & 0.691 \\
& & Sensitivity & 1.000 & 1.000 & 1.000 & 1.000 & 1.000 & 1.000 & 1.000 & 1.000 & 1.000 \\
& & Specificity & 0.965 & 0.942 & 0.970 & 0.976 & 0.982 & 0.988 & 0.959 & 0.961 & 0.973 \\
& & CCR & 0.946 & 0.918 & 0.946 & 0.960 & 0.970 & 0.976 & 0.944 & 0.936 & 0.960 \\
\cmidrule(lr){2-12}
& \multirow{4}{*}{600}
& MSE & 0.367 & 0.420 & 0.430 & 0.340 & 0.369 & 0.338 & 0.380 & 0.433 & 0.400 \\
& & Sensitivity & 1.000 & 1.000 & 1.000 & 1.000 & 1.000 & 1.000 & 1.000 & 1.000 & 1.000 \\
& & Specificity & 0.994 & 0.990 & 0.991 & 0.995 & 0.998 & 0.995 & 0.994 & 0.992 & 0.992 \\
& & CCR & 0.986 & 0.980 & 0.984 & 0.992 & 0.994 & 0.990 & 0.986 & 0.976 & 0.982 \\
\cmidrule(lr){2-12}
& \multirow{4}{*}{800}
& MSE & 0.269 & 0.285 & 0.331 & 0.232 & 0.263 & 0.250 & 0.271 & 0.294 & 0.303 \\
& & Sensitivity & 1.000 & 1.000 & 1.000 & 1.000 & 1.000 & 1.000 & 1.000 & 1.000 & 1.000 \\
& & Specificity & 0.999 & 0.999 & 0.997 & 1.000 & 0.998 & 1.000 & 0.996 & 0.999 & 0.999 \\
& & CCR & 0.998 & 0.996 & 0.990 & 1.000 & 0.996 & 0.998 & 0.994 & 0.996 & 0.994 \\
\cmidrule(lr){2-12}
& \multirow{4}{*}{1000}
& MSE & 0.220 & 0.236 & 0.238 & 0.182 & 0.207 & 0.186 & 0.218 & 0.243 & 0.226 \\
& & Sensitivity & 1.000 & 1.000 & 1.000 & 1.000 & 1.000 & 1.000 & 1.000 & 1.000 & 1.000 \\
& & Specificity & 1.000 & 1.000 & 1.000 & 1.000 & 0.998 & 1.000 & 0.998 & 0.999 & 0.999 \\
& & CCR & 0.998 & 0.998 & 0.998 & 1.000 & 0.998 & 0.998 & 0.998 & 0.996 & 0.996 \\
\midrule
\multirow{16}{*}{II}
& \multirow{4}{*}{400}
& MSE & 0.907 & 0.989 & 0.955 & 0.694 & 0.733 & 0.678 & 0.917 & 0.953 & 0.967 \\
& & Sensitivity & 1.000 & 1.000 & 1.000 & 1.000 & 1.000 & 1.000 & 1.000 & 1.000 & 1.000 \\
& & Specificity & 0.980 & 0.965 & 0.982 & 0.984 & 0.991 & 0.986 & 0.963 & 0.970 & 0.957 \\
& & CCR & 0.964 & 0.930 & 0.970 & 0.972 & 0.984 & 0.970 & 0.944 & 0.950 & 0.932 \\
\cmidrule(lr){2-12}
& \multirow{4}{*}{600}
& MSE & 0.580 & 0.606 & 0.549 & 0.437 & 0.473 & 0.432 & 0.563 & 0.581 & 0.575 \\
& & Sensitivity & 1.000 & 1.000 & 1.000 & 1.000 & 1.000 & 1.000 & 1.000 & 1.000 & 1.000 \\
& & Specificity & 0.995 & 0.995 & 0.995 & 0.998 & 0.997 & 1.000 & 0.997 & 0.994 & 0.996 \\
& & CCR & 0.986 & 0.982 & 0.988 & 0.990 & 0.994 & 1.000 & 0.990 & 0.984 & 0.986 \\
\cmidrule(lr){2-12}
& \multirow{4}{*}{800}
& MSE & 0.391 & 0.439 & 0.388 & 0.299 & 0.328 & 0.321 & 0.386 & 0.418 & 0.414 \\
& & Sensitivity & 1.000 & 1.000 & 1.000 & 1.000 & 1.000 & 1.000 & 1.000 & 1.000 & 1.000 \\
& & Specificity & 0.998 & 0.997 & 1.000 & 1.000 & 1.000 & 1.000 & 1.000 & 0.998 & 0.998 \\
& & CCR & 0.992 & 0.992 & 0.998 & 1.000 & 0.998 & 0.998 & 1.000 & 0.994 & 0.998 \\
\cmidrule(lr){2-12}
& \multirow{4}{*}{1000}
& MSE & 0.305 & 0.329 & 0.308 & 0.232 & 0.258 & 0.219 & 0.315 & 0.322 & 0.300 \\
& & Sensitivity & 1.000 & 1.000 & 1.000 & 1.000 & 1.000 & 1.000 & 1.000 & 1.000 & 1.000 \\
& & Specificity & 1.000 & 0.998 & 1.000 & 1.000 & 0.999 & 1.000 & 0.999 & 0.998 & 1.000 \\
& & CCR & 1.000 & 0.994 & 1.000 & 0.998 & 0.996 & 0.998 & 0.996 & 0.998 & 1.000 \\
\bottomrule
\end{tabular}
\end{sidewaystable}

\begin{sidewaystable}[htbp]
\centering
\footnotesize
\caption{MSE ($\times 10^3$), sensitivity, specificity, and correct classification rate (CCR) for regularized estimation based on optimal Poisson subsamples under Scenarios III and IV.}
\label{tab:simulation_results_scadIII_IV}
\begin{tabular}{cccccccccccc}
\toprule
\multirow{2}{*}{Scenario} & \multirow{2}{*}{$r$} & \multirow{2}{*}{Metric}
& \multicolumn{3}{c}{$\tau=0.25$} & \multicolumn{3}{c}{$\tau=0.50$} & \multicolumn{3}{c}{$\tau=0.75$} \\
\cmidrule(lr){4-6} \cmidrule(lr){7-9} \cmidrule(lr){10-12}
& & & EX-AR(1) & EX-EX & EX-MA(1) & EX-AR(1) & EX-EX & EX-MA(1) & EX-AR(1) & EX-EX & EX-MA(1) \\
\midrule
\multirow{16}{*}{III}
& \multirow{4}{*}{400}
& MSE & 0.679 & 0.664 & 0.902 & 0.579 & 0.531 & 0.656 & 0.682 & 0.597 & 0.899 \\
& & Sensitivity & 1.000 & 1.000 & 1.000 & 1.000 & 1.000 & 1.000 & 1.000 & 1.000 & 1.000 \\
& & Specificity & 0.979 & 0.970 & 0.971 & 0.983 & 0.985 & 0.974 & 0.966 & 0.966 & 0.977 \\
& & CCR & 0.966 & 0.952 & 0.940 & 0.978 & 0.976 & 0.964 & 0.944 & 0.948 & 0.944 \\
\cmidrule(lr){2-12}
& \multirow{4}{*}{600}
& MSE & 0.403 & 0.376 & 0.540 & 0.347 & 0.332 & 0.377 & 0.408 & 0.369 & 0.585 \\
& & Sensitivity & 1.000 & 1.000 & 1.000 & 1.000 & 1.000 & 1.000 & 1.000 & 1.000 & 1.000 \\
& & Specificity & 0.996 & 0.999 & 0.998 & 0.996 & 0.996 & 0.999 & 0.993 & 0.995 & 0.994 \\
& & CCR & 0.986 & 0.996 & 0.990 & 0.992 & 0.990 & 0.996 & 0.986 & 0.984 & 0.984 \\
\cmidrule(lr){2-12}
& \multirow{4}{*}{800}
& MSE & 0.294 & 0.267 & 0.406 & 0.252 & 0.215 & 0.278 & 0.296 & 0.270 & 0.390 \\
& & Sensitivity & 1.000 & 1.000 & 1.000 & 1.000 & 1.000 & 1.000 & 1.000 & 1.000 & 1.000 \\
& & Specificity & 0.999 & 0.999 & 0.998 & 0.998 & 1.000 & 0.999 & 1.000 & 0.998 & 0.999 \\
& & CCR & 0.996 & 0.996 & 0.992 & 0.998 & 1.000 & 0.996 & 0.998 & 0.996 & 0.994 \\
\cmidrule(lr){2-12}
& \multirow{4}{*}{1000}
& MSE & 0.241 & 0.214 & 0.308 & 0.187 & 0.185 & 0.204 & 0.234 & 0.197 & 0.314 \\
& & Sensitivity & 1.000 & 1.000 & 1.000 & 1.000 & 1.000 & 1.000 & 1.000 & 1.000 & 1.000 \\
& & Specificity & 0.999 & 1.000 & 0.999 & 0.998 & 1.000 & 1.000 & 0.997 & 0.998 & 1.000 \\
& & CCR & 0.996 & 0.998 & 0.994 & 0.998 & 1.000 & 0.998 & 0.996 & 0.994 & 0.998 \\
\midrule
\multirow{16}{*}{IV}
& \multirow{4}{*}{400}
& MSE & 0.924 & 0.847 & 1.066 & 0.739 & 0.702 & 0.841 & 0.881 & 0.850 & 1.144 \\
& & Sensitivity & 1.000 & 1.000 & 1.000 & 1.000 & 1.000 & 1.000 & 1.000 & 1.000 & 1.000 \\
& & Specificity & 0.984 & 0.980 & 0.970 & 0.986 & 0.992 & 0.984 & 0.983 & 0.979 & 0.971 \\
& & CCR & 0.966 & 0.962 & 0.952 & 0.970 & 0.978 & 0.962 & 0.964 & 0.958 & 0.946 \\
\cmidrule(lr){2-12}
& \multirow{4}{*}{600}
& MSE & 0.535 & 0.503 & 0.669 & 0.429 & 0.388 & 0.517 & 0.547 & 0.516 & 0.685 \\
& & Sensitivity & 1.000 & 1.000 & 1.000 & 1.000 & 1.000 & 1.000 & 1.000 & 1.000 & 1.000 \\
& & Specificity & 0.995 & 0.998 & 0.991 & 0.999 & 0.997 & 0.999 & 0.999 & 0.998 & 0.996 \\
& & CCR & 0.988 & 0.990 & 0.980 & 0.996 & 0.990 & 0.994 & 0.996 & 0.990 & 0.984 \\
\cmidrule(lr){2-12}
& \multirow{4}{*}{800}
& MSE & 0.412 & 0.360 & 0.506 & 0.304 & 0.288 & 0.383 & 0.391 & 0.374 & 0.482 \\
& & Sensitivity & 1.000 & 1.000 & 1.000 & 1.000 & 1.000 & 1.000 & 1.000 & 1.000 & 1.000 \\
& & Specificity & 0.999 & 0.999 & 0.998 & 1.000 & 0.998 & 1.000 & 0.999 & 0.998 & 0.998 \\
& & CCR & 0.996 & 0.994 & 0.994 & 0.998 & 0.998 & 0.998 & 0.996 & 0.992 & 0.990 \\
\cmidrule(lr){2-12}
& \multirow{4}{*}{1000}
& MSE & 0.302 & 0.287 & 0.381 & 0.229 & 0.226 & 0.292 & 0.315 & 0.270 & 0.377 \\
& & Sensitivity & 1.000 & 1.000 & 1.000 & 1.000 & 1.000 & 1.000 & 1.000 & 1.000 & 1.000 \\
& & Specificity & 1.000 & 1.000 & 0.998 & 1.000 & 0.998 & 1.000 & 1.000 & 0.998 & 1.000 \\
& & CCR & 1.000 & 1.000 & 0.996 & 1.000 & 0.998 & 1.000 & 1.000 & 0.996 & 1.000 \\
\bottomrule
\end{tabular}
\end{sidewaystable}

\section{Real data application}

The China Health and Retirement Longitudinal Study (CHARLS) dataset is utilized for empirical validation of the proposed method and is available at http://charls.pku.edu.cn/. Numerous studies have been conducted using this publicly available dataset; see, for example, \cite{Wang24,Zhao26}. Specifically, the 2013, 2015, 2018, and 2020 waves of the CHARLS data are adopted in this study to analyze depression scores. The depression score is used as the dependent variable, and the independent variables include gender, education, and other covariates. Detailed descriptions of these variables are provided in Table \ref{tab:variable_definitions}. After excluding subjects with missing values, the analysis sample consists of 6572 subjects, each with four observations. The depression score, sleep duration, and total cognition score are standardized before analysis. The regression model constructed is as follows:

\[
\begin{split}
\text{depression score} &= \text{Intercept} + \beta_1 \text{gender} + \beta_2 \text{sleep duration} + \beta_3 \text{rural} + \beta_4 \text{total cognition score} \\
 &\quad+ \beta_5 \text{education} +\beta_6 \text{self-rated health}+\beta_7 \text{social participation}\\
 &\quad+\beta_8 \text{pension}+\beta_9 \text{drinking status}.
\end{split}
\]

\begin{table}
\caption{Variable definitions and descriptions.}
{\begin{tabular*}{\linewidth}{@{\extracolsep{\fill}} p{2.8cm} p{\dimexpr\linewidth-3.8cm\relax} @{}}
    \toprule
    Variable name   & Description \\
    \midrule
    gender          & Binary variable: 0 indicates female, 1 indicates male \\
    sleep duration    & nighttime sleep duration of participants \\
    rural           & Binary variable: 0 indicates urban areas, 1 indicates rural areas \\
    total cognition score & The cognition score ranges from 0 to 21, with higher values indicating stronger cognitive ability \\
    education   & Binary variable: 1 indicates education below junior high school, 0 indicates high school or above \\
    self-rated health  & Categorical variable: 1 indicates very poor health, 5 indicates very good health, 2–4 indicate intermediate levels \\
    social participation & Binary variable: 0 indicates no monthly social participation, 1 indicates monthly social participation \\
    pension         & Binary variable: 0 indicates no pension, 1 indicates having pension \\
    drinking status & Binary variable: 0 indicates no alcohol consumption, 1 indicates alcohol consumption \\
    depression score  & Depression score ranges from 0 to 30, with higher scores indicating more severe depression \\
    \bottomrule
\end{tabular*}}
\label{tab:variable_definitions}
\end{table}

Since the true regression parameter vector $\boldsymbol{\beta}_0$ is unknown in empirical analysis, we treat the parameter estimates obtained from the full sample as the true regression parameter $\boldsymbol{\beta}_0$. Figure \ref{fg:realdata_mse} presents the log(MSE) results of the optimal Poisson subsampling method and the uniform Poisson subsampling method at $\tau=0.75$, and the results indicate that the optimal Poisson subsampling method consistently outperforms the uniform Poisson subsampling method. 

\begin{figure}[htbp] 
\centering
\includegraphics[width=1\linewidth]{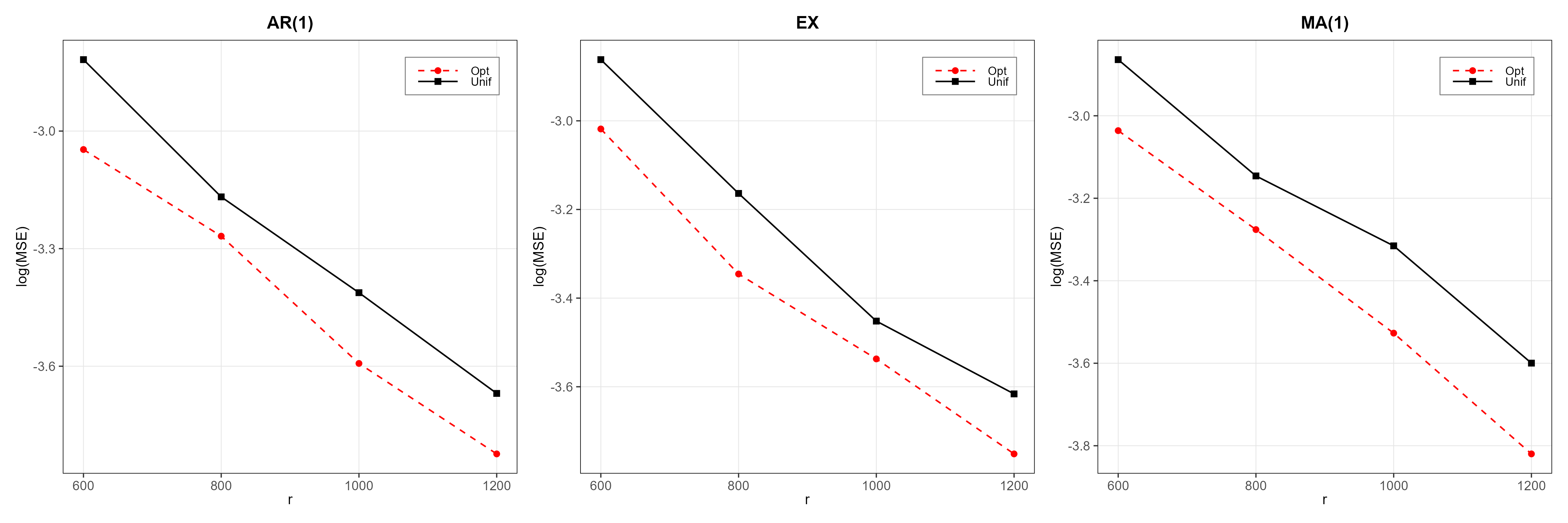}  
\caption{Log(MSE) of the uniform and the optimal Poisson subsampling methods on the real dataset for $\tau=0.75$.}
\label{fg:realdata_mse} 
\end{figure}

In addition, the regularized parameter estimation results for both the full sample and the optimal subsample are presented in Table \ref{tab:realdata_vary}. For the optimal Poisson subsampling method, 500 replications are conducted, with an optimal subsample drawn and regularized estimation performed in each replication. Following Li \textit{et al.} \cite{Li24b}, the median of the 500 coefficient estimates is reported for each variable. In Table \ref{tab:realdata_vary}, ``---'' denotes a coefficient estimate of zero. Specifically, across different working correlation matrices, the variables \text{pension} and \text{drinking status} are shrunk to zero in both the full sample and the optimal subsample, indicating that these variables have relatively weak associations with \text{depression score}.

\begin{table}[htbp]
  \centering
  \caption{Regularized estimation results for the full sample and optimal Poisson subsamples at $\tau = 0.75$.}
  \label{tab:realdata_vary}
  \footnotesize
  \begin{tabular}{ccccccc}
    \toprule
    Working correlation matrix & variable & Full sample & 600 & 800 & 1000 & 1200 \\
    \midrule
    \multirow{10}{*}{AR(1)} 
        & Intercept & 1.475 & 1.456 & 1.468 & 1.469 & 1.467 \\
        & gender & -0.323 & -0.309 & -0.308 & -0.310 & -0.314 \\
        & sleep duration & -0.181 & -0.184 & -0.186 & -0.183 & -0.182 \\
        & rural & 0.229 & 0.223 & 0.219 & 0.229 & 0.223 \\
        & total cognition score & -0.210 & -0.203 & -0.209 & -0.211 & -0.210 \\
        & education & 0.085 & 0.089 & 0.091 & 0.090 & 0.091 \\
        & self-rated health & -0.320 & -0.307 & -0.312 & -0.314 & -0.314 \\
        & social participation & -0.058 & -0.060 & -0.059 & -0.058 & -0.058 \\
        & pension & --- & --- & --- & --- & --- \\
        & drinking status & --- & --- & --- & --- & --- \\
    \midrule
    \multirow{10}{*}{EX} 
        & Intercept & 1.415 & 1.402 & 1.426 & 1.412 & 1.421 \\
        & gender & -0.334 & -0.309 & -0.317 & -0.323 & -0.326 \\
        & sleep duration & -0.174 & -0.177 & -0.177 & -0.174 & -0.177 \\
        & rural & 0.232 & 0.230 & 0.222 & 0.230 & 0.232 \\
        & total cognition score & -0.213 & -0.209 & -0.213 & -0.208 & -0.213 \\
        & education & 0.093 & 0.086 & 0.085 & 0.091 & 0.083 \\
        & self-rated health & -0.302 & -0.294 & -0.298 & -0.298 & -0.298 \\
        & social participation & -0.061 & -0.057 & -0.059 & -0.059 & -0.054 \\
        & pension & --- & --- & --- & --- & --- \\
        & drinking status & --- & --- & --- & --- & --- \\
    \midrule
    \multirow{10}{*}{MA(1)} 
        & Intercept & 1.494 & 1.485 & 1.494 & 1.490 & 1.497 \\
        & gender & -0.320 & -0.302 & -0.307 & -0.300 & -0.301 \\
        & sleep duration & -0.185 & -0.188 & -0.186 & -0.190 & -0.187 \\
        & rural & 0.229 & 0.227 & 0.227 & 0.222 & 0.224 \\
        & total cognition score & -0.208 & -0.207 & -0.207 & -0.207 & -0.210 \\
        & education & 0.089 & 0.090 & 0.079 & 0.086 & 0.087 \\
        & self-rated health & -0.327 & -0.319 & -0.317 & -0.322 & -0.323 \\
        & social participation & -0.061 & -0.059 & -0.063 & -0.064 & -0.059 \\
        & pension & --- & --- & --- & --- & --- \\
        & drinking status & --- & --- & --- & --- & --- \\
    \bottomrule
  \end{tabular}
\end{table}

\section{Conclusion}

The optimal Poisson subsampling algorithm is developed for quantile regression with longitudinal data, and the asymptotic properties of the subsample-based estimator are established. Furthermore, regularized parameter estimation is conducted within the optimal Poisson subsampling framework. As distributed algorithms provide an important means of reducing computational cost for large-scale data, future work will focus on investigating distributed statistical algorithms for large-scale longitudinal data.

\section*{Acknowledgements}

The authors are grateful to all individuals who have contributed to this work.

\section*{Disclosure statement}

No potential conflict of interest was reported by the author(s).

\section*{Funding}

Chunjing Li gratefully acknowledges support from the National Social Science Fund of China (24BTJ061). Xiaohui Yuan acknowledges support from the Scientific Research Project of the Jilin Provincial Department of Science and Technology (20250102029JC).

\bibliographystyle{tfs} 
\bibliography{interacttfssample}

\section*{Auxiliary lemmas and the proof of theorems}

In this section, we provide the proofs of the theorems.
%%%%%%%%%%%%%%%%%%%%%%%%%%%%%%lemma1
\begin{lemma}\label{lemma:Sr}
 Under conditions (C4) and (C6), if $n \to \infty$ and $r \to \infty$, then
$$\boldsymbol{S}_r\left(\boldsymbol{\beta}_{0}\right)=O_p(r^{-1/2}).$$
\end{lemma}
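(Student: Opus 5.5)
Here $\boldsymbol{S}_r(\boldsymbol{\beta}_0)$ denotes the unsmoothed weighted estimating function
\[
\boldsymbol{S}_r(\boldsymbol{\beta}_0)=\frac{1}{n}\sum_{i=1}^n\frac{\delta_i}{\pi_i}\boldsymbol{X}_i^\top\boldsymbol{\Gamma}_i\boldsymbol{R}^{-1}\boldsymbol{\psi}_i,\qquad \boldsymbol{\psi}_i=\tau-I(\boldsymbol{\varepsilon}_i\le\boldsymbol{0}).
\]
The plan is to show that $\operatorname{E}\|\boldsymbol{S}_r(\boldsymbol{\beta}_0)\|^2=O(r^{-1})$ and then apply Chebyshev's (Markov's) inequality. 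Write $\boldsymbol{a}_i=\boldsymbol{X}_i^\top\boldsymbol{\Gamma}_i\boldsymbol{R}^{-1}\boldsymbol{\psi}_i$. There are two independent sources of randomness: the errors, which are independent across subjects, and the Poisson indicators $\delta_i\sim\mathrm{Bernoulli}(\pi_i)$. The indicators are independent of each other and of the responses, since $\pi_i$ depends only on $\{\boldsymbol{X}_i\}$.

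\textbf{Step 1: decomposition.} Split
\[
\boldsymbol{S}_r(\boldsymbol{\beta}_0)=\boldsymbol{S}_n(\boldsymbol{\beta}_0)+\frac1n\sum_{i=1}^n\Big(\frac{\delta_i}{\pi_i}-1\Big)\boldsymbol{a}_i .
\]
The full-data term has mean zero, because $\operatorname{E}\boldsymbol{\psi}_i=\boldsymbol{0}$ when $P(\varepsilon_{ij}\le 0)=\tau$. Its covariance is
\[
n^{-2}\sum_i \boldsymbol{X}_i^\top\boldsymbol{\Gamma}_i\boldsymbol{R}^{-1}\boldsymbol{V}_0\boldsymbol{R}^{-1}\boldsymbol{\Gamma}_i\boldsymbol{X}_i=O(n^{-1})
\]
by (C4)(ii). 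Hence $\boldsymbol{S}_n(\boldsymbol{\beta}_0)=O_p(n^{-1/2})=O_p(r^{-1/2})$, using $r\le n$.

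\textbf{Step 2: the subsampling term.} Condition on the full data $\mathcal F_n$. The summands are conditionally independent with mean zero, and $\operatorname{Var}(\delta_i/\pi_i)=(1-\pi_i)/\pi_i\le 1/\pi_i$. Therefore
\[
\operatorname{E}\Big[\Big\|\frac1n\sum_i\Big(\frac{\delta_i}{\pi_i}-1\Big)\boldsymbol{a}_i\Big\|^2\,\Big|\,\mathcal F_n\Big]\le \frac1{n^2}\sum_i\frac{\|\boldsymbol{a}_i\|^2}{\pi_i}\le \frac{1}{r}\max_i\frac{r}{n\pi_i}\cdot\frac1n\sum_i\|\boldsymbol{a}_i\|^2 .
\]
Taking the expectation over the errors gives
\[
\operatorname{E}\|\boldsymbol{a}_i\|^2=\operatorname{tr}\big(\boldsymbol{X}_i^\top\boldsymbol{\Gamma}_i\boldsymbol{R}^{-1}\boldsymbol{V}_0\boldsymbol{R}^{-1}\boldsymbol{\Gamma}_i\boldsymbol{X}_i\big).
\]
By (C4)(ii) the average of these traces converges, so it is bounded. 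Combined with (C6), the unconditional second moment is $O(r^{-1})$. Chebyshev's inequality then gives $O_p(r^{-1/2})$ for this term. Adding the two terms yields the claim.

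\textbf{Step 3: the obstacle.} The main technical point is the conditioning. The expected squared norm must be handled by first conditioning on $\mathcal F_n$ and then averaging over the errors, so that the bound does not depend on a particular realization of the responses. The cross term between $\boldsymbol{S}_n(\boldsymbol{\beta}_0)$ and the subsampling term vanishes because the latter has conditional mean zero. No boundedness of $\boldsymbol{X}_i$ is needed, since (C4)(ii) already controls the averaged traces. If the smoothed version $\boldsymbol{S}_r^\Phi(\boldsymbol{\beta}_0)$ were intended instead, I would add a bias term. Using (C2) and a Taylor expansion of $\operatorname{E}\Phi(-\varepsilon/h)$, this term is $O(h^2)=O(r^{-1})$, which is negligible.
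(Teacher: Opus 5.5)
Your proof is correct and is essentially the paper's argument. The paper also conditions on $\mathcal{F}_n$ and uses the law of total variance; its two pieces are exactly the second moments of your subsampling term and of your full-data term $\boldsymbol{S}_n(\boldsymbol{\beta}_0)$. It then bounds the combined quantity $\frac{1}{n}\sum_{i}\frac{1}{n\pi_i}\boldsymbol{X}_i^\top\boldsymbol{\Gamma}_i\boldsymbol{R}^{-1}\boldsymbol{V}_0\boldsymbol{R}^{-1}\boldsymbol{\Gamma}_i\boldsymbol{X}_i$ via (C4)(ii) and (C6) and applies Chebyshev's inequality. The only cosmetic difference is that you bound the two pieces separately, which is why you need $r\le n$, whereas the paper adds them first using $\frac{1-\pi_i}{n\pi_i}+\frac{1}{n}=\frac{1}{n\pi_i}$.
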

\begin{proof}[Proof of Lemma \ref{lemma:Sr}]  

Direct calculation shows that
\[
\operatorname{E}\bigl[\boldsymbol{S}_r(\boldsymbol{\beta}_0)\bigr]= \operatorname{E}\left\{ \frac{1}{n} \sum_{i=1}^n \frac{\delta_i}{\pi_i}\boldsymbol{X}_i^\top \boldsymbol{\Gamma}_i \boldsymbol{R}^{-1} \bigl[\tau-I(\boldsymbol{\varepsilon}_i \le 0)\bigr] \right\}= \boldsymbol{0}.
\]
Moreover, the variance of $\boldsymbol{S}_r(\boldsymbol{\beta}_0)$ can be decomposed as follows:
\[
\begin{aligned}
\operatorname{Var}\bigl[\boldsymbol{S}_r(\boldsymbol{\beta}_0)\bigr] 
&= \operatorname{E}\Biggl\{ \operatorname{Var}\Biggl[ \frac{1}{n} \sum_{i=1}^n \frac{\delta_i}{\pi_i} \boldsymbol{X}_i^\top \boldsymbol{\Gamma}_i \boldsymbol{R}^{-1} \bigl[\tau-I(\boldsymbol{\varepsilon}_i \le 0)\bigr] \Bigg|\mathcal{F}_n \Biggr] \Biggr\} \\
&\quad + \operatorname{Var}\Biggl\{\operatorname{E}\Biggl[ \frac{1}{n} \sum_{i=1}^n \frac{\delta_i}{\pi_i} \boldsymbol{X}_i^\top \boldsymbol{\Gamma}_i \boldsymbol{R}^{-1} \bigl[\tau-I(\boldsymbol{\varepsilon}_i \le 0)\bigr]\Bigg|\mathcal{F}_n \Biggr] \Biggr\}.
\end{aligned}
\]
The first term equals
\[
\frac{1}{n} \sum_{i=1}^{n} \frac{1-\pi_i}{n\pi_i} \boldsymbol{X}_i^\top \boldsymbol{\Gamma}_i \boldsymbol{R}^{-1}\boldsymbol{V}_0\boldsymbol{R}^{-1} \boldsymbol{\Gamma}_i \boldsymbol{X}_i,
\]
and the second term equals
\[
\frac{1}{n^2} \sum_{i=1}^{n} \boldsymbol{X}_i^\top \boldsymbol{\Gamma}_i \boldsymbol{R}^{-1}\boldsymbol{V}_0\boldsymbol{R}^{-1} \boldsymbol{\Gamma}_i \boldsymbol{X}_i.
\]
Therefore, combining these two terms yields
\[
\begin{aligned}
\operatorname{Var}\bigl[\boldsymbol{S}_r(\boldsymbol{\beta}_0)\bigr] 
&= \frac{1}{n} \sum_{i=1}^{n} \frac{1}{n\pi_i} \boldsymbol{X}_i^\top \boldsymbol{\Gamma}_i \boldsymbol{R}^{-1}\boldsymbol{V}_0\boldsymbol{R}^{-1} \boldsymbol{\Gamma}_i \boldsymbol{X}_i \\
&\leq \Biggl(\max_{1\leq i \leq n} \frac{1}{n\pi_i}\Biggr) \frac{1}{n} \sum_{i=1}^{n} \boldsymbol{X}_i^\top \boldsymbol{\Gamma}_i \boldsymbol{R}^{-1} \boldsymbol{V}_0\boldsymbol{R}^{-1} \boldsymbol{\Gamma}_i \boldsymbol{X}_i \\
&= O(1/r).
\end{aligned}
\]
The last equality follows from conditions (C4) and (C6).  Applying Chebyshev’s inequality yields the desired result.
\end{proof}

%%%%%%%%%%%%%%%%%%%%%%%%%%%%%%%%%%%%%%%%%%%%Lemma2
\begin{lemma}\label{lemma:Sr_SPhi}
Under conditions (C1)--(C6), if $n \to \infty$ and $r \to \infty$, then we have
 $$ \big\|\boldsymbol{S}_r\left(\boldsymbol{\beta}_{0}\right)-\boldsymbol{S}_r^{\Phi}\left(\boldsymbol{\beta}_{0}\right)\big\|=o_p(r^{-1/2}).$$
\end{lemma}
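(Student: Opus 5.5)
We will show that the difference has mean of order $o(r^{-1/2})$ and variance of order $o(r^{-1})$, and then apply Chebyshev's inequality as in Lemma \ref{lemma:Sr}. Note that $\boldsymbol{S}_r(\boldsymbol{\beta}_0)$ is written with $\tau-I(\boldsymbol{\varepsilon}_i\le 0)$, while $\boldsymbol{S}_r^\Phi(\boldsymbol{\beta}_0)$ uses $\Phi(\boldsymbol{\varepsilon}_i/h)-(1-\tau)$. Since $\tau-I(\varepsilon\le 0)=I(\varepsilon>0)-(1-\tau)$, the difference is
\[
\boldsymbol{D}_r=\frac{1}{n}\sum_{i=1}^n\frac{\delta_i}{\pi_i}\boldsymbol{X}_i^\top\boldsymbol{\Gamma}_i\boldsymbol{R}^{-1}\boldsymbol{d}_i,\qquad
\boldsymbol{d}_i=\Bigl(I(\varepsilon_{ij}>0)-\Phi(\varepsilon_{ij}/h)\Bigr)_{j=1}^m,
\]
with $h=r^{-1/2}$. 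Each component of $\boldsymbol{d}_i$ is bounded by $1$, and the $\boldsymbol{d}_i$ are independent across $i$.

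\textbf{Mean.} Conditioning on $\mathcal{F}_n$ gives $\operatorname{E}(\delta_i/\pi_i\mid\mathcal{F}_n)=1$, so $\operatorname{E}\boldsymbol{D}_r=n^{-1}\sum_i\boldsymbol{X}_i^\top\boldsymbol{\Gamma}_i\boldsymbol{R}^{-1}\operatorname{E}\boldsymbol{d}_i$. For each entry, substitute $\varepsilon=hu$ and integrate by parts:
\[
\operatorname{E}\bigl[I(\varepsilon_{ij}>0)-\Phi(\varepsilon_{ij}/h)\bigr]=\int\bigl[I(u>0)-\Phi(u)\bigr]h f_{ij}(hu)\,du .
\]
Next, Taylor expand $f_{ij}(hu)=f_{ij}(0)+hu f_{ij}'(\xi)$. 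The zeroth-order term vanishes because $I(u>0)-\Phi(u)$ is an odd function. The first-order term is bounded by $h^2\sup|f'|\int|u|\,|I(u>0)-\Phi(u)|\,du=O(h^2)$ by (C2), since the normal tails make the integral finite. Combining this with (C3), (C5) and the uniform boundedness of $f_{ij}(0)$ gives $\|\operatorname{E}\boldsymbol{D}_r\|=O(h^2)=O(r^{-1})=o(r^{-1/2})$.

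\textbf{Variance.} Decompose the variance as in Lemma \ref{lemma:Sr}. The conditional-on-$\mathcal{F}_n$ part has covariance bounded by
\[
\frac{1}{n}\sum_i\frac{1-\pi_i}{n\pi_i}\boldsymbol{X}_i^\top\boldsymbol{\Gamma}_i\boldsymbol{R}^{-1}\operatorname{E}(\boldsymbol{d}_i\boldsymbol{d}_i^\top)\boldsymbol{R}^{-1}\boldsymbol{\Gamma}_i\boldsymbol{X}_i .
\]
The key point is that $\operatorname{E}(\boldsymbol{d}_i\boldsymbol{d}_i^\top)$ is small, not merely bounded. By Cauchy--Schwarz its entries are bounded by $\max_j\operatorname{E}\bigl[I(\varepsilon_{ij}>0)-\Phi(\varepsilon_{ij}/h)\bigr]^2$, and the same substitution gives $\int[I(u>0)-\Phi(u)]^2 h f_{ij}(hu)\,du\le h\sup f\int[I(u>0)-\Phi(u)]^2du=O(h)$. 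Together with (C5), (C6) and (C3), this part is therefore $O(r^{-1})\cdot O(h)=o(r^{-1})$. The second part is $n^{-2}\sum_i(\cdots)$ with the same $O(h)$ factor, so it is $O(h/n)=o(r^{-1})$ because $r\le n$.

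\textbf{Conclusion and main obstacle.} Chebyshev's inequality now gives $\boldsymbol{D}_r-\operatorname{E}\boldsymbol{D}_r=o_p(r^{-1/2})$, and the mean bound finishes the proof. The main obstacle is the variance step. A crude bound $|d_{ij}|\le1$ only yields $O_p(r^{-1/2})$, which is not enough; one must extract the extra factor $h$ from the concentration of $I-\Phi(\cdot/h)$ near zero, and this uses the boundedness of the density in (C2). The mean step additionally relies on the oddness of $I(u>0)-\Phi(u)$ together with the bounded $f'$ in (C2).
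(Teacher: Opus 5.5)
Your proof is correct and follows essentially the same route as the paper: a bias bound of order $O(h^2)$ and a second-moment bound of order $O(h/r)$ together give $\operatorname{E}\|\boldsymbol{S}_r(\boldsymbol{\beta}_0)-\boldsymbol{S}_r^{\Phi}(\boldsymbol{\beta}_0)\|^2=o(r^{-1})$, and Chebyshev's inequality then yields the claim. The differences are minor. You derive the two smoothing bounds directly, using the oddness of $I(u>0)-\Phi(u)$ with bounded $f'_{ij}$ for the bias and bounded $f_{ij}$ for the second moment, and you use the conditional variance decomposition. The paper instead cites Zu et al.\ for those bounds and uses $\operatorname{E}\|\sum_i\boldsymbol{W}_i\|^2\le 2\|\sum_i\operatorname{E}\boldsymbol{W}_i\|^2+2\sum_i\operatorname{E}\|\boldsymbol{W}_i\|^2$.
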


\begin{proof}[Proof of Lemma \ref{lemma:Sr_SPhi}] 

Results from Zu \textit{et al.} \cite{Zu23} imply that
$\big\|\operatorname{E}\big[\Phi(\boldsymbol{\varepsilon}_i/h) - I(\boldsymbol{\varepsilon}_i > 0)\big]\big\| = O(h^2)$ 
and 
$\operatorname{E}\big[\big\|\Phi(\boldsymbol{\varepsilon}_i/h) - I(\boldsymbol{\varepsilon}_i > 0)\big\|^2\big] = O(h).$
Define
$$\boldsymbol{S}_r\left(\boldsymbol{\beta}_{0}\right)-\boldsymbol{S}_r^{\Phi}\left(\boldsymbol{\beta}_{0}\right)=\frac{1}{n}\sum_{i=1}^{n} \frac{\delta_i}{\pi_i} \boldsymbol{X}_i^\top \boldsymbol{\Gamma}_i \boldsymbol{R}^{-1} \biggl[ 1-\Phi\left( \frac{\boldsymbol{\varepsilon}_i}{h} \right)- I(\boldsymbol{\varepsilon}_i \leq 0) \biggr]=\sum_{i=1}^{n} \boldsymbol{W}_i.$$
First, note that
\begin{align*}
\operatorname{E}\bigg[\bigg\|\sum_{i=1}^{n}\boldsymbol{W}_i\bigg\|^2\bigg] 
&= \operatorname{E}\bigg\{\bigg\|\sum_{i=1}^{n}\big[\operatorname{E}(\boldsymbol{W}_i) + (\boldsymbol{W}_i - \operatorname{E}(\boldsymbol{W}_i))\big]\bigg\|^2\bigg\} \\
&\leq 2\bigg\|\sum_{i=1}^{n}\operatorname{E}(\boldsymbol{W}_i)\bigg\|^2 + 2\operatorname{E}\bigg[\bigg\|\sum_{i=1}^{n}\big(\boldsymbol{W}_i - \operatorname{E}(\boldsymbol{W}_i)\big)\bigg\|^2\bigg] \\
&\leq 2\bigg\|\sum_{i=1}^{n}\operatorname{E}(\boldsymbol{W}_i)\bigg\|^2 + 2\sum_{i=1}^{n}\operatorname{E}\bigg(\bigg\|\boldsymbol{W}_i\bigg\|^2\bigg).
\end{align*}
Let $\boldsymbol{b}_i=1-\Phi\left( \boldsymbol{\varepsilon}_i/h\right)- I(\boldsymbol{\varepsilon}_i\leq 0)$. The two parts on the right side are treated separately. For the first part,
\begin{align*}
\bigg\|\sum_{i=1}^{n}\operatorname{E}(\boldsymbol{W}_i)\bigg\| 
& =  \bigg\|\frac{1}{n}\sum_{i=1}^{n}\operatorname{E}\bigg( \boldsymbol{X}_i^\top \boldsymbol{\Gamma}_i \boldsymbol{R}^{-1} \boldsymbol{b}_i\bigg) \bigg\|\\
& \leq \frac{1}{n}\sum_{i=1}^{n}\bigg[\bigg\|\operatorname{E}\bigg(\boldsymbol{X}_i^\top \boldsymbol{\Gamma}_i \boldsymbol{R}^{-1} \boldsymbol{b}_i\bigg)\bigg\|\bigg] \\
& \leq C\bigg[\max_{1\leq i \leq n}\big\|\boldsymbol{X}_i\big\|\bigg]\frac{1}{n}\sum_{i=1}^{n}\bigg[\big\|\operatorname{E}(\boldsymbol{b}_i)\big\|\bigg] \\
&= O(h^2),
\end{align*}
where C is a positive constant. Squaring both sides yields 
$\big\|\sum_{i=1}^{n}\operatorname{E}(\boldsymbol{W}_i)\big\|^2 =  O(h^4).$ 
We further derive that
\begin{align*}
\sum_{i=1}^{n}\operatorname{E}\bigg(\bigg\|\boldsymbol{W}_i\bigg\|^2\bigg)
&= \sum_{i=1}^{n}\operatorname{E}\bigg\{\bigg\|\frac{\delta_i}{n\pi_i} \boldsymbol{X}_i^\top \boldsymbol{\Gamma}_i \boldsymbol{R}^{-1}\boldsymbol{b}_i\bigg\|^2\bigg\} \\
&\leq \sum_{i=1}^{n}\frac{\|\boldsymbol{X}_i^\top \boldsymbol{\Gamma}_i \boldsymbol{R}^{-1}\|^2}{n} \operatorname{E}\bigg[\frac{1}{n\pi_i}\big\| \boldsymbol{b}_i\big\|^2\bigg] \\
&\leq C\Bigl(\max_{1\leq i \leq n} \frac{1}{n\pi_i}\Bigr)\frac{1}{n}\sum_{i=1}^{n}\operatorname{E}\left(\big\| \boldsymbol{b}_i\big\|^2\right) \\
&= O\left(h/r\right). \\
\end{align*}
As $h \asymp r^{-1/2}$, it can be deduced that
$$\big\|\boldsymbol{S}_r\left(\boldsymbol{\beta}_{0}\right)-\boldsymbol{S}_r^{\Phi}\left(\boldsymbol{\beta}_{0}\right)\big\|=o_p(r^{-1/2}).$$
\end{proof}

%%%%%%%%%%%%%%%%%%%%%%%%%%%%%%%%%%%%%%%%%%%%Lemma3
\begin{lemma}\label{lemma:Dr_H}
Under conditions (C1)--(C6), if $r \to \infty$, then we have
 $$ \sup_{\boldsymbol{\beta}\in\mathcal{B}_r}\left\|\boldsymbol{ D}_r(\boldsymbol\beta)-\boldsymbol H_n\right\|=o_p(1),$$
where $\mathcal{B}_r=\left\{\boldsymbol\beta:\|\boldsymbol\beta-\boldsymbol\beta_0\|\le a_r\right\},\quad a_r\to0$.
\end{lemma}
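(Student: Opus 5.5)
The plan is to identify $\boldsymbol D_r(\boldsymbol\beta)$ as the Jacobian of the weighted smoothed estimating functions (\ref{eq:Sr_smoothed}):
\[
\boldsymbol D_r(\boldsymbol\beta)=\frac{\partial \boldsymbol S_r^\Phi(\boldsymbol\beta)}{\partial\boldsymbol\beta^\top}=-\frac1n\sum_{i=1}^n\frac{\delta_i}{\pi_i}\boldsymbol X_i^\top\boldsymbol\Gamma_i\boldsymbol R^{-1}\boldsymbol\Lambda_i(\boldsymbol\beta)\boldsymbol X_i ,
\]
where $\boldsymbol\Lambda_i(\boldsymbol\beta)=\operatorname{diag}\{h^{-1}\phi((y_{ij}-\boldsymbol x_{ij}^\top\boldsymbol\beta)/h)\}_{j=1}^m$. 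I then split the error into three pieces:
\[
\boldsymbol D_r-\boldsymbol H_n=\bigl[\boldsymbol D_r-\boldsymbol D_n\bigr]+\bigl[\boldsymbol D_n-\operatorname{E}\boldsymbol D_n\bigr]+\bigl[\operatorname{E}\boldsymbol D_n-\boldsymbol H_n\bigr].
\]
Here $\boldsymbol D_n(\boldsymbol\beta)$ is the full-data version with $\delta_i/\pi_i$ replaced by $1$, so $\operatorname{E}[\boldsymbol D_r\mid\mathcal F_n]=\boldsymbol D_n$. The first piece is subsampling noise, the second is sampling noise, and the third is deterministic bias. Each will be shown to be $o_p(1)$ uniformly over $\mathcal B_r$.

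\textbf{Bias term.} Substitute $t=(\varepsilon_{ij}-\boldsymbol x_{ij}^\top(\boldsymbol\beta-\boldsymbol\beta_0))/h$. This gives
\[
\operatorname{E}\bigl[h^{-1}\phi\bigl((y_{ij}-\boldsymbol x_{ij}^\top\boldsymbol\beta)/h\bigr)\bigr]=\int\phi(t)\,f_{ij}\bigl(\boldsymbol x_{ij}^\top(\boldsymbol\beta-\boldsymbol\beta_0)+ht\bigr)\,dt .
\]
By (C2), $f_{ij}'$ is bounded, so this equals $f_{ij}(0)+O(\|\boldsymbol x_{ij}\|a_r+h)$. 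Combining this with (C3) and (C5), which bound $\boldsymbol R^{-1}$, $\boldsymbol\Gamma_i$ and $\boldsymbol X_i$, yields $\sup_{\mathcal B_r}\|\operatorname{E}\boldsymbol D_n(\boldsymbol\beta)-\boldsymbol H_n\|=O(a_r+h)\to0$.

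\textbf{Pointwise stochastic terms.} Fix $\boldsymbol\beta$ and work entrywise. Each summand $h^{-1}\phi(\cdot/h)$ is bounded by $Ch^{-1}$, and its second moment is $O(h^{-1})$ by the same change of variables. For the subsampling piece, conditionally on $\mathcal F_n$ the Poisson indicators are independent. By (C6), each summand is bounded by $C/(n\pi_i h)\le C/(rh)=Cr^{-1/2}$, and the conditional variance is $O(\max_i (n\pi_i)^{-1}h^{-1})=O(1/(rh))=O(r^{-1/2})$. For the full-data piece, the variance is $O(1/(nh))$ and the summands are bounded by $C/(nh)$. Chebyshev's inequality therefore already gives pointwise $o_p(1)$ for both pieces.

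\textbf{Uniformity over $\mathcal B_r$ (main obstacle).} A plain Chebyshev-plus-union bound is too weak here. The map $\boldsymbol\beta\mapsto h^{-1}\phi((y-\boldsymbol x^\top\boldsymbol\beta)/h)$ has Lipschitz constant of order $h^{-2}=r$. This forces a grid over $\mathcal B_r$ with mesh $\eta=\epsilon h^2$, hence $N=O((a_r r/\epsilon)^p)$ points, which grows polynomially in $r$.
\begin{itemize}
\item Between grid points, the oscillation of $\boldsymbol D_r$ is at most $C\eta h^{-2}\cdot n^{-1}\sum_i\delta_i/\pi_i$. This is $\le C\epsilon\,O_p(1)$, because $n^{-1}\sum_i\delta_i/\pi_i$ has mean $1$ and variance $O(1/r)$. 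The same bound with weights $1$ controls the oscillation of $\boldsymbol D_n$, and $\operatorname{E}\boldsymbol D_n$ is handled alike.
\item At the grid points, I would replace Chebyshev by Bernstein's inequality. For the subsampling piece the summands are bounded by $Cr^{-1/2}$ with total variance $O(r^{-1/2})$, so the tail probability for deviation $\epsilon$ is $\exp(-c\epsilon^2 r^{1/2})$. Summed over the $N$ grid points this is $N\exp(-c\epsilon^2r^{1/2})\to0$, since $N$ is only polynomial in $r$.
\item The full-data piece is treated the same way, with $n$ in place of $r$.
\end{itemize}
Letting $\epsilon\downarrow0$ then gives the uniform $o_p(1)$ bound and completes the proof.
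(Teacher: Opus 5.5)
Your proposal follows the paper's proof essentially step for step: the bias term is handled by the change of variables and the bounded $f_{ij}'$, you use a grid of mesh of order $h^2$ (the paper takes $h^3$, which gives an $O_p(h)$ oscillation in place of your $\epsilon\downarrow0$ device), you control oscillation between grid points via the $h^{-2}$ Lipschitz constant and $n^{-1}\sum_i\delta_i/\pi_i=O_p(1)$, and you finish with Bernstein plus a union bound over polynomially many grid points; the only structural difference is that the paper bounds $\boldsymbol D_r-\operatorname{E}\boldsymbol D_r$ in one piece rather than splitting off $\boldsymbol D_n$. One point needs tightening: if you apply Bernstein conditionally on $\mathcal F_n$, then the conditional variance $n^{-2}\sum_i(1-\pi_i)\pi_i^{-1}q_i(\boldsymbol\beta_l)^2$, where $q_i$ is an entry of $\boldsymbol X_i^\top\boldsymbol\Gamma_i\boldsymbol R^{-1}\boldsymbol\Lambda_i(\boldsymbol\beta_l)\boldsymbol X_i$, is random, and its $O(r^{-1/2})$ bound must hold simultaneously at all grid points. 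You can sidestep this, as the paper does, by applying Bernstein unconditionally: $\pi_i$ depends only on the fixed covariates, so the pairs $(\delta_i,\boldsymbol Y_i)$ are independent across $i$ and the variance bound $n^{-2}\sum_i\pi_i^{-1}\operatorname{E}q_i^2\le C/(rh)$ is deterministic.
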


\begin{proof}[Proof of Lemma \ref{lemma:Dr_H}] 
By the triangle inequality,
\[
\begin{aligned}
&\sup_{\boldsymbol{\beta}\in\mathcal{B}_r}\left\|\boldsymbol{D}_r(\boldsymbol{\beta})-\boldsymbol{H}_n\right\|\\
&\leq \sup_{\boldsymbol{\beta}\in\mathcal{B}_r}\left\|\boldsymbol{D}_r(\boldsymbol{\beta})-E[\boldsymbol{D}_r(\boldsymbol{\beta})]\right\|+
\sup_{\boldsymbol{\beta}\in\mathcal{B}_r}
\left\|E[\boldsymbol{D}_r(\boldsymbol{\beta})]-\boldsymbol{H}_n\right\|.
\end{aligned}
\]
We first prove that
\[
\sup_{\boldsymbol{\beta}\in\mathcal{B}_r}\left\|\boldsymbol{D}_r(\boldsymbol{\beta})-E[\boldsymbol{D}_r(\boldsymbol{\beta})]\right\|=o_p(1).
\]
For $k_1,k_2=1,\cdots,p$, let $\boldsymbol{x}_{i,k}$ denote the $k$-th column of $\boldsymbol{X}_i$, and let $\varepsilon_{ij}(\boldsymbol{\beta})=y_{ij}-\boldsymbol{x}_{ij}^{\top}\boldsymbol{\beta}$. Define
\[
\begin{aligned}
q_i^{k_1,k_2}(\boldsymbol{\beta})&=\boldsymbol{x}_{i,k_1}^{\top}\boldsymbol{\Gamma}_i\boldsymbol{R}^{-1}\operatorname{diag}\left\{\frac1h\phi\left(\frac{\varepsilon_{ij}(\boldsymbol{\beta})}{h}\right)
\right\}_{j=1}^m\boldsymbol{x}_{i,k_2}, \\
Z_r^{k_1,k_2}(\boldsymbol{\beta})&=-\frac1n\sum_{i=1}^n\left\{\frac{\delta_i}{\pi_i}q_i^{k_1,k_2}(\boldsymbol{\beta})-E\left[\frac{\delta_i}{\pi_i}q_i^{k_1,k_2}(\boldsymbol{\beta})
\right]\right\}, \\
\xi_i(\boldsymbol{\beta})&=-\frac1n\left\{\frac{\delta_i}{\pi_i}q_i^{k_1,k_2}(\boldsymbol{\beta})-E\left[\frac{\delta_i}{\pi_i}q_i^{k_1,k_2}(\boldsymbol{\beta})\right]\right\},
\end{aligned}
\]
so that $Z_r^{k_1,k_2}(\boldsymbol{\beta})=\sum_{i=1}^n\xi_i(\boldsymbol{\beta})$.

Observations from different subjects are independent, so for each fixed $\boldsymbol{\beta}$, $\xi_1(\boldsymbol{\beta}),\cdots,\xi_n(\boldsymbol{\beta})$ are independent with $E[\xi_i(\boldsymbol{\beta})]=0$. Since $\sup_{\boldsymbol{\beta}\in\mathcal{B}_r}\left|q_i^{k_1,k_2}(\boldsymbol{\beta})\right|\leq \frac{C}{h}$,
\[
\begin{aligned}
|\xi_i(\boldsymbol{\beta})|
&\leq \frac1n\frac{\delta_i}{\pi_i}\left|q_i^{k_1,k_2}(\boldsymbol{\beta})\right| + \frac1n E\left[\left|q_i^{k_1,k_2}(\boldsymbol{\beta})\right|\right] \\
& \leq \Biggl(\max_{1\leq i \leq n} \frac{1}{n\pi_i}\Biggr)\frac{C}{h}+ \frac{C}{nh}\\
& \leq  \frac{C}{rh}
\end{aligned}
\]
Set $b_r=C/(rh)$, so that $|\xi_i(\boldsymbol{\beta})|\leq b_r$ uniformly in $\boldsymbol{\beta}\in\mathcal{B}_r$. Moreover,
\[
\begin{aligned}
\operatorname{Var}\{\xi_i(\boldsymbol{\beta})\}
&\leq\frac1{n^2}E\left\{\left[\frac{\delta_i}{\pi_i}q_i^{k_1,k_2}(\boldsymbol{\beta})\right]^2\right\}\\
&\leq\Biggl(\max_{1\leq i \leq n} \frac{1}{n\pi_i}\Biggr)\frac1{n}E\left\{\left[q_i^{k_1,k_2}(\boldsymbol{\beta})\right]^2\right\},
\end{aligned}
\]
We now show that $\sup_{\boldsymbol{\beta}\in\mathcal{B}_r}E\left\{\left[q_i^{k_1,k_2}(\boldsymbol{\beta})\right]^2\right\}=O\left(\frac1h\right)$. Note that $\left|q_i^{k_1,k_2}(\boldsymbol{\beta})\right|^2\le C\sum_{j=1}^m\frac1{h^2}\phi^2\left(\frac{\varepsilon_{ij}(\boldsymbol{\beta})}{h}\right).$ Let $d_{ij}(\boldsymbol{\beta})=\boldsymbol{x}_{ij}^{\top}(\boldsymbol{\beta}-\boldsymbol{\beta}_0)$. Then $\varepsilon_{ij}(\boldsymbol{\beta})=\varepsilon_{ij}-d_{ij}(\boldsymbol{\beta})$. Thus
\[
\begin{aligned}
E\left[\frac1{h^2}\phi^2\left(\frac{\varepsilon_{ij}(\boldsymbol{\beta})}{h}\right)\right]
&=\frac1{h^2}\int\phi^2\left(\frac{u-d_{ij}(\boldsymbol{\beta})}{h}\right)f_{ij}(u)\,du\\
&=\frac1h\int\phi^2(v)f_{ij}\{d_{ij}(\boldsymbol{\beta})+hv\}\,dv\\
&\leq \frac{C}{h}.
\end{aligned}
\]
Consequently, $\sup_{\boldsymbol{\beta}\in\mathcal{B}_r}E\left\{\left[q_i^{k_1,k_2}(\boldsymbol{\beta})\right]^2\right\}\le \frac{C}{h}$. This yields $\operatorname{Var}[\xi_i(\boldsymbol{\beta})]\le C/(nrh)$, and hence $\sum_{i=1}^n\operatorname{Var}\{\xi_i(\boldsymbol{\beta})\}\leq\frac{C}{rh}$. We set $V_r=C/(rh)$ as an upper bound for the total variance.

The argument is similar to that used in the proof of Lemma B.5 in \cite{Sherwood16}. We take the grid resolution to be $\rho_r=h^3$. There exist finitely many points $\boldsymbol{\beta}_1,\cdots,\boldsymbol{\beta}_{N_r}$ such that, for any $\boldsymbol{\beta}\in\mathcal{B}_r$, there exists some $\boldsymbol{\beta}_l$ satisfying $\|\boldsymbol{\beta}-\boldsymbol{\beta}_l\|\leq h^3$. The covering number obeys $N_r \leq \left(1+\frac{2a_r}{h^3}\right)^p$. Therefore, $\log N_r = O\bigl(\log(1/h)\bigr)$. For any fixed grid point $\boldsymbol{\beta}_l$, Bernstein's inequality gives
\[
P\left[|Z_r^{k_1,k_2}(\boldsymbol{\beta}_l)|>\sqrt{2V_rx}+\frac{b_rx}{3}\right]\le 2e^{-x}.
\]
Applying the union bound over all $N_r$ grid points:
\[
\begin{aligned}
P\left[\max_{1\le l\le N_r}|Z_r^{k_1,k_2}(\boldsymbol{\beta}_l)|>\sqrt{2V_rx}+\frac{b_rx}{3}\right]\leq 2N_re^{-x}.
\end{aligned}
\]
Taking $x=M\log(2N_r)$ with $M>1$, we obtain $2N_re^{-x}=(2N_r)^{1-M}\to0$. Therefore,
\[
\max_{1\le l\le N_r}|Z_r^{k_1,k_2}(\boldsymbol{\beta}_l)|=O_p\left(\sqrt{V_r\log N_r}+b_r\log N_r\right).
\]
We next bound the fluctuations between each $\boldsymbol{\beta}\in\mathcal{B}_r$ and its corresponding grid point. Since $\varepsilon_{ij}(\boldsymbol{\beta})-\varepsilon_{ij}(\boldsymbol{\beta}_l)=-\boldsymbol{x}_{ij}^{\top}(\boldsymbol{\beta}-\boldsymbol{\beta}_l)$, we have
\[
\left|q_i^{k_1,k_2}(\boldsymbol{\beta})-q_i^{k_1,k_2}(\boldsymbol{\beta}_l)\right|\le\frac C{h^2}\|\boldsymbol{\beta}-\boldsymbol{\beta}_l\|\le h.
\]
Therefore,
\[
\begin{aligned}
\left|Z_r^{k_1,k_2}(\boldsymbol{\beta})-Z_r^{k_1,k_2}(\boldsymbol{\beta}_l)\right|
&\le Ch\left[\frac1n\sum_{i=1}^n\frac{\delta_i}{\pi_i}+1\right].
\end{aligned}
\]
By Markov's inequality, $\frac1n\sum_{i=1}^n\frac{\delta_i}{\pi_i}=O_p(1)$. Hence,
\[
\sup_{\boldsymbol{\beta}\in\mathcal{B}_r}\left|Z_r^{k_1,k_2}(\boldsymbol{\beta})-Z_r^{k_1,k_2}(\boldsymbol{\beta}_l)\right|=O_p(h).
\]
Combining the above two bounds,
\[
\begin{aligned}
\sup_{\boldsymbol{\beta}\in\mathcal{B}_r}\left|Z_r^{k_1,k_2}(\boldsymbol{\beta})\right|
&\leq \max_{1\le l\le N_r}\left|Z_r^{k_1,k_2}(\boldsymbol{\beta}_l)\right|
+\sup_{\boldsymbol{\beta}\in\mathcal{B}_r}\left|Z_r^{k_1,k_2}(\boldsymbol{\beta})-Z_r^{k_1,k_2}\big(\boldsymbol{\beta}_l\big)\right| \\
&= O_p\left(\sqrt{\frac{\log(1/h)}{rh}}+\frac{\log(1/h)}{rh}+h\right) = o_p(1).
\end{aligned}
\]
It then follows that $\sup_{\boldsymbol{\beta}\in\mathcal{B}_r}\left\|\boldsymbol{D}_r(\boldsymbol{\beta})-E[\boldsymbol{D}_r(\boldsymbol{\beta})]\right\|=o_p(1)$.

We now prove $\sup_{\boldsymbol{\beta}\in\mathcal{B}_r}\left\|E\left[\boldsymbol{D}_r(\boldsymbol{\beta})\right]-\boldsymbol{H}_n\right\|=o(1)$. The following argument is similar to that used in the proof of Theorem 1 in \cite{Fu12}. Define
\[
E\left[\boldsymbol{D}_r(\boldsymbol{\beta})\right]=-\frac1n\sum_{i=1}^n\boldsymbol{X}_i^\top\boldsymbol{\Gamma}_i\boldsymbol{R}^{-1}\operatorname{diag}\left
\{g_{ij}(\boldsymbol{\beta})\right\}_{j=1}^m\boldsymbol{X}_i,
\]
where $g_{ij}(\boldsymbol{\beta})=E\left[\frac1h\phi\left(\frac{\varepsilon_{ij}(\boldsymbol{\beta})}{h}\right)\right]$. Then
\[
\begin{aligned}
g_{ij}(\boldsymbol{\beta})-f_{ij}(0)
&=\int\frac1h\phi\left(\frac{u-d_{ij}(\boldsymbol{\beta})}{h}\right)f_{ij}(u)du-f_{ij}(0)\\
&=\int\phi(v)f_{ij}\left\{d_{ij}(\boldsymbol{\beta})+hv\right\}dv-f_{ij}(0)\\
&=\int\phi(v)\left[f_{ij}\left\{d_{ij}(\boldsymbol{\beta})+hv\right\}-f_{ij}(0)\right]dv.
\end{aligned}
\]
The mean value theorem gives $\left|f_{ij}\left\{d_{ij}(\boldsymbol{\beta})+hv\right\}-f_{ij}(0)\right|\leq C(a_r+h|v|)$. Thus
$\sup_{\boldsymbol{\beta}\in\mathcal{B}_r}|g_{ij}(\boldsymbol{\beta})-f_{ij}(0)|=o(1)$. Recalling that $\boldsymbol{\Gamma}_i=\operatorname{diag}\{f_{ij}(0)\}_{j=1}^m$, we have
\[
\begin{aligned}
E\left[\boldsymbol{D}_r(\boldsymbol{\beta})\right]-\boldsymbol{H}_n&=-\frac1n\sum_{i=1}^n\boldsymbol{X}_i^\top\boldsymbol{\Gamma}_i\boldsymbol{R}^{-1}\big[\operatorname{diag}\{g_{ij}(\boldsymbol{\beta})\}_{j=1}^m-\boldsymbol{\Gamma}_i\big]\boldsymbol{X}_i.
\end{aligned}
\]
It follows that
\[
\begin{aligned}
\sup_{\boldsymbol{\beta}\in\mathcal{B}_r}\left\|E\left[\boldsymbol{D}_r(\boldsymbol{\beta})\right]-\boldsymbol{H}_n\right\|&\leq C\sup_{\boldsymbol{\beta}\in\mathcal{B}_r}|g_{ij}(\boldsymbol{\beta})-f_{ij}(0)|\\
&=o(1).
\end{aligned}
\]
Therefore, the proof is completed.
\end{proof}

%%%%%%%%%%%%%%%%%%%%%%%%%%%%%%%%%%%%%%%%%%%%Theorem 1
\begin{proof}[\noindent\textbf{\emph{The proof of Theorem 2.1 }}]
Define
\begin{align*}
\boldsymbol{M}_n(\boldsymbol{\beta}) &= \operatorname{E}\left\{\frac{1}{n}\sum_{i=1}^n\boldsymbol{X}_i^\top\boldsymbol{\Gamma}_i\boldsymbol{R}^{-1}\big[\tau-I(\boldsymbol{Y}_i\leq\boldsymbol{X}_i\boldsymbol{\beta})\big]\right\} \\
&= \frac{1}{n}\sum_{i=1}^n\boldsymbol{X}_i^\top\boldsymbol{\Gamma}_i\boldsymbol{R}^{-1}\bigg[\operatorname{P}(\boldsymbol{Y}_i>\boldsymbol{X}_i\boldsymbol{\beta})-(1-\tau)\bigg].
\end{align*}
At the true parameter $\boldsymbol{\beta}_0$,
$\operatorname{P}(y_{ij}>\boldsymbol{x}_{ij}^\top\boldsymbol{\beta}_0)=\operatorname{P}(\varepsilon_{ij}>0)=1-\tau$, which implies $\boldsymbol{M}_n(\boldsymbol{\beta}_0)=\boldsymbol{0}$. We first show that $\sup_{\boldsymbol{\beta}\in\Lambda}\big\|\boldsymbol{S}_r^{\Phi}(\boldsymbol{\beta})-\boldsymbol{M}_n(\boldsymbol{\beta})\big\|=o_p(1)$. By the triangle inequality,
\begin{align*}
\sup_{\boldsymbol{\beta}\in\Lambda}\big\|\boldsymbol{S}_r^{\Phi}(\boldsymbol{\beta})-\boldsymbol{M}_n(\boldsymbol{\beta})\big\|
&\leq \sup_{\boldsymbol{\beta}\in\Lambda}\Big\|\boldsymbol{S}_r^{\Phi}(\boldsymbol{\beta})-\operatorname{E}\big[\boldsymbol{S}_r^{\Phi}(\boldsymbol{\beta})\big]\Big\| \\
&\quad + \sup_{\boldsymbol{\beta}\in\Lambda}\Big\|\operatorname{E}\big[\boldsymbol{S}_r^{\Phi}(\boldsymbol{\beta})\big]-\boldsymbol{M}_n(\boldsymbol{\beta})\Big\|.
\end{align*}
Similar to Lemma \ref{lemma:Dr_H}, set the grid resolution $\rho_r=h^2$. We can directly prove that
\[
\sup_{\boldsymbol{\beta}\in\Lambda}\Big\|\boldsymbol{S}_r^{\Phi}(\boldsymbol{\beta})-\operatorname{E}\big[\boldsymbol{S}_r^{\Phi}(\boldsymbol{\beta})\big]\Big\|=o_p(1).
\]
Note that
\begin{align*}
& \operatorname{E}\left[\Phi\left(\frac{\varepsilon_{ij}-d_{ij}(\boldsymbol{\beta})}{h}\right)\right]-\operatorname{P}\big[\varepsilon_{ij}>d_{ij}(\boldsymbol{\beta})\big] \\
=& \int_{\mathbb{R}}\left[\Phi\left(\frac{u-d_{ij}(\boldsymbol{\beta})}{h}\right)-I[u>d_{ij}(\boldsymbol{\beta})]\right]f_{ij}(u)\,\mathrm{d}u \\
=& h\int_{\mathbb{R}}\big[\Phi(v)-I(v>0)\big]f_{ij}\big(d_{ij}(\boldsymbol{\beta})+hv\big)\,\mathrm{d}v\\
\leq& Ch\int_{\mathbb{R}}\big|\Phi(v)-I(v>0)\big|\mathrm{d}v\\
\leq& Ch,
\end{align*}
where $d_{ij}(\boldsymbol{\beta})=\boldsymbol{x}_{ij}^\top(\boldsymbol{\beta}-\boldsymbol{\beta}_0)$.

Substituting the uniform bound derived above, we obtain
\[
\begin{aligned}
\sup_{\boldsymbol{\beta}\in\Lambda}\Big\|\operatorname{E}\big[\boldsymbol{S}_r^{\Phi}(\boldsymbol{\beta})\big]-\boldsymbol{M}_n(\boldsymbol{\beta})\Big\|
&\leq \sup_{\boldsymbol{\beta}\in\Lambda}\frac{1}{n}\sum_{i=1}^{n}\big\|\boldsymbol{X}_i^\top\boldsymbol{\Gamma}_i\boldsymbol{R}^{-1}\big\| \cdot \left\|\operatorname{E}\left[\Phi\left(\frac{\boldsymbol{\varepsilon}_i(\boldsymbol{\beta})}{h}\right)\right]-\operatorname{P}\big[\boldsymbol{\varepsilon}_i(\boldsymbol{\beta})>\boldsymbol{0}\big]\right\| \\
&\leq Ch,
\end{aligned}
\]
which implies $\sup_{\boldsymbol{\beta}\in\Lambda}\Big\|\operatorname{E}\big[\boldsymbol{S}_r^{\Phi}(\boldsymbol{\beta})\big]-\boldsymbol{M}_n(\boldsymbol{\beta})\Big\| = o_p(1)$. Thus, we have shown that
\[
\sup_{\boldsymbol{\beta}\in\Lambda}\big\|\boldsymbol{S}_r^{\Phi}(\boldsymbol{\beta})-\boldsymbol{M}_n(\boldsymbol{\beta})\big\|=o_p(1).
\]
Note that $\boldsymbol{\beta}_0$ is the unique root of $\boldsymbol{M}_n(\boldsymbol{\beta})=\boldsymbol{0}$, while $\tilde{\boldsymbol{\beta}}$ satisfies $\boldsymbol{S}_r^{\Phi}(\tilde{\boldsymbol{\beta}})=\boldsymbol{0}$. It then follows that $\|\tilde{\boldsymbol{\beta}}-\boldsymbol{\beta}_0\|=o_p(1)$.

A first-order Taylor expansion of $\boldsymbol{S}_r^{\Phi}(\boldsymbol{\beta})$ at $\boldsymbol{\beta}_0$ yields
\[
\boldsymbol{S}_r^{\Phi}(\boldsymbol{\beta})=\boldsymbol{S}_r^{\Phi}(\boldsymbol{\beta}_0)+\boldsymbol{D}_r(\boldsymbol{\beta}^*)(\boldsymbol{\beta}-\boldsymbol{\beta}_0),
\]
where $\boldsymbol{\beta}^*$ lies between $\boldsymbol{\beta}$ and $\boldsymbol{\beta}_0$, and \[
\boldsymbol{D}_r(\boldsymbol{\beta}^*)=-\frac{1}{n}\sum_{i=1}^{n}\frac{\delta_i}{\pi_i}\boldsymbol{X}_i^\top \boldsymbol{\Gamma}_i \boldsymbol{R}^{-1}\operatorname{diag}\left\{\frac{1}{h}\phi\left(\frac{\varepsilon_{ij}(\boldsymbol{\beta}^*)}{h}\right)\right\}_{j=1}^m\boldsymbol{X}_i.
\]
By Lemma \ref{lemma:Dr_H}, we have $\boldsymbol{D}_r(\boldsymbol{\beta}^*) - \boldsymbol{H}_n = o_p(1)$. Setting $\boldsymbol{\beta}=\tilde{\boldsymbol{\beta}}$, we obtain
\[
\boldsymbol{0}=\boldsymbol{S}_r^{\Phi}(\boldsymbol{\beta}_0)+\boldsymbol{H}_n(\tilde{\boldsymbol{\beta}}-\boldsymbol{\beta}_0)+o_p(\|\tilde{\boldsymbol{\beta}}-\boldsymbol{\beta}_0\|).
\]
By Lemma \ref{lemma:Sr} and Lemma \ref{lemma:Sr_SPhi},
\[
\tilde{\boldsymbol{\beta}}-\boldsymbol{\beta}_0
=-\boldsymbol{H}_n^{-1}\boldsymbol{S}_r(\boldsymbol{\beta}_0)+o_p(\|\tilde{\boldsymbol{\beta}}-\boldsymbol{\beta}_0\|)+o_p(r^{-1/2})
=O_p(r^{-1/2}).
\]
\end{proof}

%%%%%%%%%%%%%%%%%%%%%%%%%%%%%%%%%%%%%%%%%%%%Theorem 2
\begin{proof}[\noindent\textbf{\emph{The proof of Theorem 2.2}}]

Let $\boldsymbol{S}_r(\boldsymbol{\beta}_0)=\frac{1}{n}\sum_{i=1}^{n} \frac{\delta_i}{\pi_i} \boldsymbol{X}_i^\top \boldsymbol{\Gamma}_i \boldsymbol{R}^{-1} \big[\tau-I(\boldsymbol{\varepsilon}_i\leq0)\big]=\sum_{i=1}^{n}\boldsymbol{T}_i$. The conditions of the Lindeberg-Feller central limit theorem are verified as follows. For any $\varepsilon>0$,

$$\begin{aligned}
&\sum_{i=1}^n \operatorname{E}\big[\left\|\boldsymbol{T}_i\right\|^{2} I(\|\boldsymbol{T}_i\| \geq \varepsilon)\big] 
 \leq \frac{1}{\varepsilon}\sum_{i=1}^{n} \operatorname{E}\left(\left\|\boldsymbol{T}_i\right\|^3\right) \\
&= \frac{1}{\varepsilon}\sum_{i=1}^{n}  \operatorname{E}\bigg[\bigg\| \frac{\delta_i}{n\pi_i} \boldsymbol{X}_i^\top \boldsymbol{\Gamma}_i \boldsymbol{R}^{-1} \big( \tau - I(\boldsymbol{\varepsilon}_i \leq 0) \big) \bigg\|^3\bigg] \\
&=O\left(1/r^2\right)
\end{aligned}$$
Thus, the Lindeberg condition is satisfied. By Theorem 2.1, we further derive
\[
\boldsymbol{V}^{-1/2}(\tilde{\boldsymbol{\beta}} - \boldsymbol{\beta}_0)
= -\boldsymbol{V}^{-1/2}\boldsymbol{H}_n^{-1}\boldsymbol{V}_c^{1/2}\boldsymbol{V}_c^{-1/2}\boldsymbol{S}_r(\boldsymbol{\beta}_0)+o_p(1),
\]
where $\boldsymbol{V} = \boldsymbol{H}_n^{-1} \boldsymbol{V}_c \boldsymbol{H}_n^{-1}$. Via Slutsky’s theorem, the asymptotic normality result follows,
\[
\boldsymbol{V}^{-1/2}(\tilde{\boldsymbol{\beta}}-\boldsymbol{\beta}_0)\xrightarrow{d} N(\boldsymbol{0}, \boldsymbol{I}).
\]
\end{proof}

%%%%%%%%%%%%%%%%%%%%%%%%%%%%%%%%%%Theorem 3
\begin{proof}[\noindent\textbf{\emph{The proof of Theorem 3.1}}]
If some elements of $\left\{z_i\right\}_{i=1}^n$ equal zero, their associated subsampling probabilities are set to zero, and the subsampling probabilities of the remaining individuals are considered. Thus, all \( z_i > 0 \) are assumed without loss of generality. 

To minimize $\mathrm{tr}\left( \boldsymbol{V}\right)$, which corresponds to the asymptotic mean squared error, 
\begin{equation}
\begin{aligned} \label{eq:opt_problem}
& \min G = \min \sum_{i=1}^n \mathrm{tr} \left(\frac{1}{n^2\pi_i} \boldsymbol{H}_n^{-1} \boldsymbol{X}_i^\top\boldsymbol{\Gamma}_i\boldsymbol{R}^{-1} \boldsymbol{V}_0\boldsymbol{R}^{-1}\boldsymbol{\Gamma}_i\boldsymbol{X}_i \boldsymbol{H}_n^{-1}\right) \\ 
& \operatorname{s.t.} \quad \sum_{i=1}^n \pi_i = r, \quad 0 \leq \pi_i \leq 1, \quad i = 1, \cdots, n.
\end{aligned}
\end{equation}

We assume an ordered sequence \( z_1 \leq z_2 \leq \dots \leq z_n \), which does not restrict generality. Applying the Cauchy-Schwarz inequality,  
\begin{equation}
\label{eq:G}
\begin{aligned}
G 
&= \sum_{i=1}^n \mathrm{tr}\left(
    \frac{1}{n^2\pi_i} \boldsymbol{H}_n^{-1} \boldsymbol{X}_i^\top\boldsymbol{\Gamma}_i\boldsymbol{R}^{-1} 
    \boldsymbol{V}_0\boldsymbol{R}^{-1}\boldsymbol{\Gamma}_i\boldsymbol{X}_i \boldsymbol{H}_n^{-1}
\right) \\  
&= \frac{\tau(1-\tau)}{n^2}\sum_{i=1}^{n}\frac{1}{\pi_i}\left\|
    \boldsymbol{H}_n^{-1} \boldsymbol{X}_i^\top\boldsymbol{\Gamma}_i\boldsymbol{R}^{-1}\boldsymbol{V}_0^{1/2} 
\right\|^2 \\
&= \frac{\tau(1-\tau)}{n^2} \sum_{i=1}^n \left(\pi_i^{-1} z_i^2 \right) \\  
&= \frac{\tau(1-\tau)}{n^2r}\left(\sum_{i=1}^n \pi_i\right)\left(\sum_{i=1}^n \pi_i^{-1} z_i^2 \right) \\  
&\geq \frac{\tau(1-\tau)}{n^2r} \left(\sum_{i=1}^n z_i \right)^2,
\end{aligned}
\end{equation}
with equality holding if and only if $\pi_i \propto z_i$. Therefore, it can be stated that $\mathrm{tr}\left( \boldsymbol{V}\right)$ attains its minimum value when $\pi_{i} \propto z_{i}$.

Two cases are considered as follows: (1) If $r z_i / \sum_{j=1}^n z_j \leq 1$ for all $i=1,\cdots,n$, then $\pi_i = r z_i / \sum_{j=1}^n z_j$; (2) If there exist some $i$ such that $r z_i / \sum_{j=1}^n z_j > 1$, then there must be exactly $u$ such indices $i$. Specifically, according to the definition of $u$,
$ rz_i > \sum_{j=1}^n z_j = \sum_{j=1}^{n-u}z_j+\sum_{j=n-u+1}^{n}z_j > (r-u)z_{n-u}+uz_{n-u}=rz_{n-u}$
which implies $i > n-u$. In this case, (\ref{eq:opt_problem}) is equivalent to
\begin{equation}
\begin{aligned}\label{eq:opt_problem2}
\min G &= \min \sum_{i=1}^{n-u} \operatorname{tr}\left(
\frac{1}{n^2 \pi_i} \boldsymbol{H}_n^{-1} \boldsymbol{X}_i^\top \boldsymbol{\Gamma}_i \boldsymbol{R}^{-1} 
\boldsymbol{V}_0\boldsymbol{R}^{-1} \boldsymbol{\Gamma}_i \boldsymbol{X}_i \boldsymbol{H}_n^{-1}
\right) \\
\text{s.t.} \quad &\sum_{i=1}^{n-u} \pi_i = r-u, \quad 0 \leq \pi_i \leq 1, \quad i = 1, \cdots, n-u, \quad\pi_{n-u+1} = \cdots = \pi_n=1.
\end{aligned}
\end{equation}

Similarly, an application of the Cauchy–Schwarz inequality yields
\[
\begin{aligned}
&\frac{\tau(1-\tau)}{n^2}\sum_{i=1}^{n-u}\frac{1}{\pi_i}\left\|\boldsymbol{H}_n^{-1} \boldsymbol{X}_i^\top\boldsymbol{\Gamma}_i\boldsymbol{R}^{-1}\boldsymbol{V}_0^{1/2}\right\|^2 \\
&= \frac{\tau(1-\tau)}{n^2(r-u)}\left(\sum_{i=1}^{n-u} \pi_i\right)\left(\sum_{i=1}^{n-u} \pi_i^{-1} z_i^2 \right) \\
&\geq \frac{\tau(1-\tau)}{n^2(r-u)} \left(\sum_{i=1}^{n-u} z_i \right)^2.
\end{aligned}
\]
The trace $\mathrm{tr}\left( \boldsymbol{V}\right)$ attains its minimum value when
\[
\pi_i = 
\begin{cases}
\displaystyle (r-u)z_i/\sum_{j=1}^{n-u} z_j, & i = 1, \cdots, n-u, \\
1, & i = n-u+1, \cdots, n.
\end{cases}
\]

Next, we aim to unify the expressions for $\pi_i$. Suppose there exists a constant $T$ such that
\[\max_{1 \leq i \leq n} \frac{z_i \wedge T}{\sum_{j=1}^n (z_j \wedge T)} = \frac{1}{r},\]
and $z_{n-u} < T \leq z_{n-u+1}$. This implies that $\sum_{i=1}^{n-u} z_i = (r-u)T$. By (\ref{eq:G}), we obtain
\[
\begin{aligned}
\min G
&= \frac{\tau(1-\tau)}{n^2} \sum_{i=1}^n \left(\pi_i^{-1} z_i^2 \right) \\ 
&= \frac{\tau(1-\tau)}{n^2}\sum_{i=1}^{n-u}\left(\pi_i^{-1} z_i^2 \right)+\frac{\tau(1-\tau)}{n^2}\sum_{i=n-u+1}^{n}\left(\pi_i^{-1} z_i^2 \right)\\  
&= \frac{\tau(1-\tau)}{n^2(r-u)}\left(\sum_{i=1}^{n-u}z_i\right)^2+\frac{\tau(1-\tau)}{n^2}\sum_{i=n-u+1}^{n}z_i^2\\  
&= \frac{\tau(1-\tau)T^2 (r-u)}{n^2} + \frac{\tau(1-\tau)}{n^2}\sum_{i=n-u+1}^{n}z_i^2. 
\end{aligned}
\]
          
Let $\pi_i = r (z_i \wedge T)/\sum_{j=1}^n (z_j \wedge T)$. Substituting it into (\ref{eq:opt_problem2}), we obtain
\begin{equation*}
\begin{aligned}
G 
&= \frac{\tau(1-\tau)}{n^2} \sum_{i=1}^n \left(\pi_i^{-1} z_i^2 \right) \\  
&= \frac{\tau(1-\tau)}{n^2} \sum_{i=1}^{n-u} \left(\pi_i^{-1} z_i^2 \right)+\frac{\tau(1-\tau)}{n^2} \sum_{i=n-u+1}^{n} \left(\pi_i^{-1} z_i^2 \right)\\
&= \frac{\tau(1-\tau)}{n^2r} \sum_{i=1}^{n-u}\frac{\sum_{j=1}^{n}(z_j\wedge T)}{z_i\wedge T} z_i^2+\frac{\tau(1-\tau)}{n^2} \sum_{i=n-u+1}^{n} z_i^2 \\
&= \frac{\tau(1-\tau)}{n^2r} \sum_{i=1}^{n-u}\frac{\sum_{j=1}^{n-u}z_j+uT}{z_i} z_i^2+\frac{\tau(1-\tau)}{n^2} \sum_{i=n-u+1}^{n} z_i^2\\
&= \frac{\tau(1-\tau)T^2(r-u)}{n^2} +\frac{\tau(1-\tau)}{n^2} \sum_{i=n-u+1}^{n} z_i^2\\
&=\min G,
\end{aligned}
\end{equation*}
which implies $\pi_i$ is the optimal solution of (\ref{eq:opt_problem2}).
 
We next examine the existence of $T$, together with the condition $z_{n-u} < T \leq z_{n-u+1}$. By the definition of $u$, we have that
\[\frac{(r-u+1)z_{n-u+1}}{\sum_{i=1}^{n-u+1}z_i} \geq 1 \quad \text{and} \quad \frac{(r-u)z_{n-u}}{\sum_{i=1}^{n-u}z_i} < 1.\]
Let $T_1 = z_{n-u+1}$ and $T_2 = z_{n-u}$. The following inequalities hold:
\[\frac{(r-u+1)z_{n-u+1} + (u-1)T_1}{\sum_{i=1}^{n-u+1}z_i + (u-1)T_1} \geq 1,\]
and
\[\frac{(r-u)z_{n-u} + uT_2}{\sum_{i=1}^{n-u}z_i + uT_2} < 1.\]
This yields the results
\[\frac{z_i \wedge T_1}{\sum_{j=1}^{n} z_j \wedge T_1} \geq \frac{1}{r} \quad \text{and} \quad \frac{z_i \wedge T_2}{\sum_{j=1}^{n} z_j \wedge T_2} < \frac{1}{r}.\]
Note that the function $\max\limits_{1\leq i\leq n} \dfrac{r(z_i\wedge T)}{\sum_{j=1}^n(z_j\wedge T)} = 1$
is continuous with respect to $T$. Thus, there must exist a $T$ satisfying $z_{n-u} < T \leq z_{n-u+1}$ such that
$\max\limits_{1\leq i\leq n} \dfrac{r(z_i\wedge T)}{\sum_{j=1}^n(z_j\wedge T)} = 1$.
Case (1) is a special case of Case (2) at $u=0$.
\end{proof}

%%%%%%%%%%%%%%%%%%%%%%%%%%%%%%%%%%Theorem 4
\begin{proof}[Proof of Theorem 4.1]
To prove conclusion (a), we construct the estimator $\check{\boldsymbol{\beta}} = \big(\check{\boldsymbol{\beta}}_1^\top, \boldsymbol{0}^\top\big)^\top$, with $\check{\boldsymbol{\beta}}_1 = \boldsymbol{\beta}_{01} - \boldsymbol{H}_{n11}^{-1}\boldsymbol{S}_{r1}^\Phi(\boldsymbol{\beta}_0)$.
Here, $\boldsymbol{H}_{n11}$ denotes the leading $s\times s$ submatrix of $\boldsymbol{H}_{n}$, and $\boldsymbol{\beta}_{01} = (\beta_{01}, \cdots, \beta_{0s})^\top$ is the vector of the first $s$ components of the true parameter $\boldsymbol{\beta}_0$.
For $k=1,\cdots,s$, under condition (C7)(i) and $\check{\boldsymbol{\beta}} = \boldsymbol{\beta}_0 + O_p(r^{-1/2})$, as $\epsilon\to0^+$, we obtain
\[
\begin{aligned}
r^{1/2} S_{rk}^P\big(\check{\boldsymbol{\beta}} \pm \epsilon \boldsymbol{e}_k\big)
&= r^{1/2} S_{rk}^{\Phi}\big(\check{\boldsymbol{\beta}} \pm \epsilon \boldsymbol{e}_k\big) - r^{1/2} \lambda \tilde{w}_k\text{sgn}\big(\check{\beta}_k \pm \epsilon\big) \\
&= o_p(1).
\end{aligned}
\]
For $k=s+1,\cdots,p$, under condition (C7)(ii) and $\epsilon\to0^+$, $r^{1/2} S_{rk}^P\big(\check{\boldsymbol{\beta}} +\epsilon \boldsymbol{e}_k\big)$ and $r^{1/2} S_{rk}^P\big(\check{\boldsymbol{\beta}} -\epsilon \boldsymbol{e}_k\big)$ are dominated by $-r^{1/2}\lambda \tilde{w}_k$ and $r^{1/2}\lambda \tilde{w}_k$, which have opposite signs.
Consequently, $\check{\boldsymbol{\beta}}$ is an approximate zero-crossing point of $\boldsymbol{S}_{r}^P(\boldsymbol{\beta})$.

To prove conclusion (b), consider the sets $C_k=\{\check{\beta}_k\neq0\}$, $k=s+1,\cdots,p$, in the probability space. It suffices to show that for any $\epsilon>0$, the inequality $\operatorname{P}(C_k)<\epsilon$ holds when $r$ is sufficiently large. Since $\check{\beta}_k=O_p(r^{-1/2})$, there exists some $M$ such that, for sufficiently large $r$,
\[
\begin{aligned}
\text{P}(C_k)
&= \operatorname{P}(\check{\beta}_k \neq 0) \\
&= \operatorname{P}\Big(\big\{\check{\beta}_k \neq 0,\, |\check{\beta}_k| \geq M r^{-1/2}\big\}
\cup \big\{\check{\beta}_k \neq 0,\, |\check{\beta}_k| < M r^{-1/2}\big\}\Big) \\
&\leq \operatorname{P}\big(\check{\beta}_k \neq 0,\, |\check{\beta}_k| \geq M r^{-1/2}\big)
+ \operatorname{P}\big(\check{\beta}_k \neq 0,\, |\check{\beta}_k| < M r^{-1/2}\big)\\
&\leq \frac{\epsilon}{2}
+ \operatorname{P}\big(\check{\beta}_k \neq 0,\, |\check{\beta}_k| < M r^{-1/2}\big).
\end{aligned}
\]
Subsequently, we focus on $\operatorname{P}\big(\check{\beta}_k \neq 0,\, |\check{\beta}_k| < M r^{-1/2}\big)$. By virtue of the $k$-th component of the penalized estimating functions and the definition of approximate zero-crossing point,
\[
\begin{aligned}
r^{1/2}S_{rk}^P(\check{\boldsymbol{\beta}})
&= r^{1/2}S_{rk}^\Phi(\boldsymbol{\beta}_0) + r^{1/2}\boldsymbol{H}_{nk}(\check{\boldsymbol{\beta}}-\boldsymbol{\beta}_0) + o_p(1) - r^{1/2}\lambda \tilde{w}_k\operatorname{sgn}(\check{\beta}_k) \\
&= o_p(1).
\end{aligned}
\]
Accordingly,
\[
o_p(1) = \left\{r^{1/2}S_{rk}^\Phi(\boldsymbol{\beta}_0) + r^{1/2}\boldsymbol{H}_{nk}(\check{\boldsymbol{\beta}}-\boldsymbol{\beta}_0) + o_p(1) - r^{1/2}\lambda \tilde{w}_k\operatorname{sgn}(\check{\beta}_k)\right\}^2,
\]
where $\boldsymbol{H}_{nk}$ denotes the $k$-th row of $\boldsymbol{H}_n$.
The sum of the first three terms inside the squared term on the right-hand side is $O_p(1)$, which yields $r^{1/2}\lambda \tilde{w}_k\operatorname{sgn}(\check{\beta}_k)=O_p(1)$.
Namely, there exists a constant $M'$ such that for sufficiently large $r$,
\[
\operatorname{P}\big(\check{\beta}_k \neq 0,\, |\check{\beta}_k| < M r^{-1/2},\, r^{1/2} \lambda \tilde{w}_k>M'\big)
\leq\frac{\epsilon}{4}.
\]
Moreover, under condition (C7)(ii), $\operatorname{P}\big(\check{\beta}_k \neq 0,\, |\check{\beta}_k| < M r^{-1/2},\, r^{1/2} \lambda \tilde{w}_k\leq M'\big)
\leq\operatorname{P}\big(r^{1/2} \lambda \tilde{w}_k\leq M'\big)\leq\frac{\epsilon}{4}$. Hence,
\[
\operatorname{P}(C_k)\leq\frac{\epsilon}{2}+\frac{\epsilon}{4}+\frac{\epsilon}{4}=\epsilon.
\]
The second part of conclusion (b) is then proved. It follows that
\[
o_p(1) =\boldsymbol{S}_{r1}^\Phi(\boldsymbol{\beta}_0) + \boldsymbol{H}_{n11}(\check{\boldsymbol{\beta}}_1-\boldsymbol{\beta}_{01}) - \lambda \tilde{\boldsymbol{w}}_1\operatorname{sgn}(\check{\boldsymbol{\beta}}_1),
\]
where $\tilde{\boldsymbol{w}}_1$ denotes the first $s$ components of $\tilde{\boldsymbol{w}}$. Similar to the proof in \cite{Johnson08}, we have
\[
\boldsymbol{V}_{c11}^{-1/2}\boldsymbol{H}_{n11}\big(\check{\boldsymbol{\beta}}_1 - \boldsymbol{\beta}_{01} + \boldsymbol{H}_{n11}^{-1}\boldsymbol{b}_n\big) \stackrel{d}{\to} N\big(\boldsymbol{0}, \boldsymbol{I}_s\big).
\]

To verify part (c), consider the parameter vector $\boldsymbol{\beta}_1\in\mathbb{R}^s$ lying on the boundary of the sphere centered at $\boldsymbol{\beta}_{01}$, namely $\boldsymbol{\beta}_1 = \boldsymbol{\beta}_{01} + r^{-1/2}\boldsymbol{d}$,
where $\|\boldsymbol{d}\| = l$ and $l$ is a fixed constant. For the penalized estimating functions, we obtain
\[
\begin{aligned}
&r^{1/2}(\boldsymbol{\beta}_1 - \boldsymbol{\beta}_{01})^\top \boldsymbol{H}_{n11}^\top \boldsymbol{S}_{r1}^P(\boldsymbol{\beta}) \\
=\ &r^{1/2}(\boldsymbol{\beta}_1 - \boldsymbol{\beta}_{01})^\top \boldsymbol{H}_{n11}^\top\big[\boldsymbol{S}_{r1}^\Phi(\boldsymbol{\beta})-\lambda \tilde{\boldsymbol{w}}_1\operatorname{sgn}(\boldsymbol{\beta}_1)\big] \\
=\ &r^{1/2}(\boldsymbol{\beta}_1 - \boldsymbol{\beta}_{01})^\top \boldsymbol{H}_{n11}^\top\big[\boldsymbol{S}_{r1}^\Phi(\boldsymbol{\beta}_0)+ \boldsymbol{H}_{n11}\big(\boldsymbol{\beta}_1 - \boldsymbol{\beta}_{01}\big)
-\lambda \tilde{\boldsymbol{w}}_1\operatorname{sgn}(\boldsymbol{\beta}_1)\big] \\
=\ &O_p\big(\|\boldsymbol{\beta}_1 - \boldsymbol{\beta}_{01}\|\big)+r^{1/2}\big(\boldsymbol{\beta}_1 - \boldsymbol{\beta}_{01}\big)^{\top}\boldsymbol{H}_{n11}^\top\boldsymbol{H}_{n11}\big(\boldsymbol{\beta}_1 - \boldsymbol{\beta}_{01}\big)\\
&\quad -r^{1/2}\big(\boldsymbol{\beta}_1 - \boldsymbol{\beta}_{01}\big)^{\top}\boldsymbol{H}_{n11}^{\top}\lambda \tilde{\boldsymbol{w}}_1\operatorname{sgn}(\boldsymbol{\beta}_1).
\end{aligned}
\]
Since $\boldsymbol{H}_{n11}$ is nonsingular, the second term on the right-hand side is greater than $a_0 l^2r^{-1/2}$, where $a_0$ denotes the minimum eigenvalue of $\boldsymbol{H}_{n11}^\top \boldsymbol{H}_{n11}$.
The order of the first term is $O_p(lr^{-1/2})$.
In the third term, $r^{1/2}\lambda \tilde{\boldsymbol{w}}_1\to 0$, and the third term is dominated by the second term.
Thus, for any $\epsilon>0$, if $l$ is chosen sufficiently large, the probability that the absolute value of the first term exceeds the second term is less than $\epsilon$ for sufficiently large $r$.
It follows that
\[
\operatorname{P}\left[\min_{\|\boldsymbol{\beta}_1-\boldsymbol{\beta}_{01}\|=r^{-1/2}l} r^{1/2}\left(\boldsymbol{\beta}_1 - \boldsymbol{\beta}_{01}\right)^\top \boldsymbol{H}_{n11}^\top \boldsymbol{S}_{r1}^P\left((\boldsymbol{\beta}_1^\top,\boldsymbol{0}^\top)^\top\right)>0\right]>1-\epsilon.
\]
Applying the Brouwer fixed-point theorem to the continuous function $\boldsymbol{S}_{r1}^P\left((\boldsymbol{\beta}_1^\top,\boldsymbol{0}^\top)^\top\right)$,
it is concluded that $\boldsymbol{H}_{n11}^\top \boldsymbol{S}_{r1}^P(\boldsymbol{\beta})$ admits a solution inside the sphere whenever $\min_{\|\boldsymbol{\beta}_1-\boldsymbol{\beta}_{01}\|=r^{-1/2}l} r^{1/2}\left(\boldsymbol{\beta}_1 - \boldsymbol{\beta}_{01}\right)^\top \boldsymbol{H}_{n11}^\top \boldsymbol{S}_{r1}^P(\boldsymbol{\beta})>0$.
Equivalently, $\boldsymbol{S}_{r1}^P\left((\boldsymbol{\beta}_1^\top,\boldsymbol{0}^\top)^\top\right)$ has a solution inside the sphere.
That is, there exists an exact solution $\check{\boldsymbol{\beta}} = \big(\check{\boldsymbol{\beta}}_1^\top, \boldsymbol{0}^\top\big)^\top$ satisfying $\boldsymbol{S}_{r1}^P(\boldsymbol{\beta})=0$ and $\check{\boldsymbol{\beta}} = \boldsymbol{\beta}_0 + O_p(r^{-1/2})$.

\end{proof}

\end{document}